\DeclareMathOperator{\Res}{Res} 
\DeclareMathOperator{\ch}{ch}
\newtheorem{theorem}{Theorem}[section]
\newtheorem{corollary}[theorem]{Corollary}
\newtheorem{lemma}[theorem]{Lemma}
\newtheorem{proposition}[theorem]{Proposition}
\newtheorem{remark}[theorem]{Remark}
\newtheorem{conjecture}[theorem]{Conjecture}
\begin{document}

\title{Virasoro constraints for Hodge integrals}
\author{Xin Wang}
\address{Department of Mathematics\\ Shandong University \\ Jinan, China}

\email{xinwmath@gmail.com}
\begin{abstract}
The purpose of this paper is to study Virasoro constraints  for Hodge integrals in Gromov-Witten theory of any target varieties. Results consist of the following: Firstly, we propose Virasoro conjecture for 
Hodge integrals in Gromov-Witten theory of any target varieties; Secondly, we prove Virasoro constraints for Hodge integrals  in genus zero of any target varieties; Thirdly, we prove the genus-1 $L_1^{\mathbb{E}}$ constraint with one Hodge character class insertion for any target varieties;   Lastly, we obtain certain new vanishing identities  in higher genus for Gromov-Witten invariants of any target varieties. 
\end{abstract}

\maketitle
\tableofcontents

\allowdisplaybreaks

\section{Introduction}
% \subsection{Hodge integral}
 The Virasoro conjecture predicts that the generating function of descendant Gromov-Witten invariants for smooth projective varieties is annihilated by a sequence of differential operators which form a half branch of Virasoro algebra.  This conjecture was proposed by Eguchi, Hori and Xiong (cf. \cite{MR1454328})
and also by Katz  (cf. \cite{MR1635867}, \cite{MR1677117}). When the manifold is a point, the  Virasoro conjecture is equivalent to Witten's conjecture on the intersection numbers of moduli space of curves (cf. \cite{MR1144529}),  which was firstly proved by Kontsevich (cf. \cite{MR1171758}).

 Hodge integrals over moduli space of stable maps was introduced in \cite{MR1728879}.  It was proven in\cite{MR1728879}  
that Hodge integrals can be reconstructed from pure descendant   Gromov–Witten invariants. Furthermore it was pointed by Givental that this reconstruction can be realized by  quantizition of a symplectic operator (cf. \cite{MR1901075}). When the target mainfold is a point, Hodge integrals over the moduli space of curves have been well studied in many works and play an important role in localization computations of Gromov-Witten invariants of toric varieties.

Virasoro conjecture is a long time problem in Gromov-Witten theory. 
Since Hodge integrals are certain generalization of usual Gromov-Witten invariants, people expected there should be a similar Virasoro constraints for Hodge integrals. For the target manifold is a point, the Virasoro constraints for Hodge integrals over the moduli space of curves were written explicitly in \cite{MR3623280}, \cite{MR4466672} and \cite{zhouHodge}.

Let $X$ be a smooth projective variety of complex dimension $d$.  For simplicity, we assume that $H^{odd}(X;\mathbb{C})=0$. Fix a basis $\{\phi_1,...,\phi_N\}$ of $H^*(X;\mathbb{C})$
 with $\phi_1$ equal to the identity of the cohomology ring of $X$ and  $\phi_\alpha\in H^{p_\alpha,q_\alpha}(X;\mathbb{C})$
for every $\alpha$.  Let $\mathcal{C}=(\mathcal{C}_\alpha^\beta)$ be the matrix of
multiplication by the first Chern class $c_1(X)$ in the ordinary cohomology ring, i.e.
$c_1(X)\cup \phi_\alpha=\sum_{\beta}\mathcal{C}_\alpha^\beta\phi_\beta$. Let $F_g^{\mathbb{E}}(\mathbf{t},\mathbf{s})$ be the generating function of genus-$g$ Hodge integrals over moduli
spaces of stable maps with target $X$. Precise definition of $F_g^{\mathbb{E}}(\mathbf{t},\mathbf{s})$ will be given in
equation~\eqref{eqn:def0FgE}. Set $\mathcal{D}^{\mathbb{E}}(\mathbf{t},\mathbf{s})=\exp(\sum_{g\geq0}\hbar^{2g-2}F_g^{\mathbb{E}}(\mathbf{t},\mathbf{s}))$ be the all genus total potential function of Hodge integrals, where $\hbar$ is a formal parameter. 
For $n\geq-1$, we introduce following operators  
\begin{align*}
L_n^{\mathbb{E}}
=&\sum_{m=0}^{n+1}\frac{(-1)^{m}}{m!}\sum_{k_1,...,k_m\ge1}\left(\prod_{i=1}^{m}\left(\frac{B_{2k_i}}{(2k_i)!}\mathfrak{s}_{k_i}\right)\right)
\\&\hspace{15pt}\cdot\left\{\sum_{j=0}^{n+1}\sum_{r,\alpha,\beta}\left(\left(\prod_{i=1}^{m}\Delta_{2k_i-1}\right)\hat{e}_{n+1-j}(n,\alpha)\right)(r) (\mathcal{C}^j)_\alpha^\beta\tilde{t}_r^\alpha \frac{\partial}{\partial t_{\sum_{i=1}^{m}(2k_i-1)+r+n-j}^{\beta}}
\right.\\&\left.\hspace{20pt}-\frac{\hbar^2}{2}\sum_{j=0}^{n+1}\sum_{s,\alpha,\beta}(-1)^{-s}\left(\left(\prod_{i=1}^{m}\Delta_{2k_i-1}\right)\check{e}_{n+1-j}(n,\alpha)\right)(-s-1)(\mathcal{C}^j)^{\alpha\beta}\frac{\partial}{\partial t_s^\alpha}
\frac{\partial}{\partial t_{\sum_{i=1}^{m}(2k_i-1)-s-1+n-j}^\beta}\right\}
\\&+\frac{1}{2\hbar^2}\sum_{\alpha,\beta}(\mathcal{C}^{n+1})_{\alpha\beta}t_0^\alpha t_0^\beta
+\frac{\delta_{n,0}}{24}
\int_{X}\left(\frac{3-d}{2}c_d(X)-c_1(X)\cup c_{d-1}(X)\right)+\frac{\delta_{n,-1}}{24}\mathfrak{s}_{1}\int_{X}c_d(X),\quad n\geq-1.
\end{align*}
where $\Delta_{k}$ is the difference operator defined by
\[(\Delta_{k}f)(x):=f(x+k)-f(x)\]
and two functions $\hat{e}_j(n,\alpha)$ and $\check{e}_j(n,\alpha)$ are the $j$-th elementary symmetric functions
\[\hat{e}_j(n,\alpha)(x)=e_j(x+b_\alpha,...,x+n+b_\alpha),\quad\check{e}_j(n,\alpha)(x)=e_j(x+b^\alpha,...,x+n+b^\alpha)\]
with $b_\alpha:=p_\alpha-\frac{1}{2}(d-1)$ and $b^\alpha:=1-b_\alpha$. 
Here $\mathcal{C}^j$
is the $j$-th power of the matrix $\mathcal{C}$, $(\mathcal{C}^j)^{\alpha\beta}$ are entries of the matrix $(\mathcal{C}^{j})\eta^{-1}$ and $(\mathcal{C}^{n+1})_{\alpha\beta}$ are entries of the matrix $(\mathcal{C}^{n+1})\eta$. For the case when $X$ is a point  the same expressions were found in \cite{zhouHodge}.
These operators $\{L_n^{\mathbb{E}}\}_{n\geq-1}$ satisfy Virasoro bracket relation
\[[L_n^{\mathbb{E}},L_m^{\mathbb{E}}]=(n-m)L_{n+m}^{\mathbb{E}}.\]

Now we propose 
Virasoro conjecture for Hodge integrals  as follows:
\begin{conjecture}\label{conj:Hodge}
For any smooth projective variety, 
\[L_n^{\mathbb{E}}\mathcal{D}^{\mathbb{E}}(\mathbf{t},\mathbf{s})=0, \quad n\geq-1.\]
\end{conjecture}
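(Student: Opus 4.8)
The plan is to reduce the Hodge Virasoro constraints to the ordinary descendant Virasoro constraints through the transformation that rewrites Hodge integrals in terms of pure descendant Gromov--Witten invariants. The starting point is Mumford's Grothendieck--Riemann--Roch computation of the Chern character of the Hodge bundle $\mathbb{E}$ over $\overline{M}_{g,n}(X,\beta)$: each $\mathrm{ch}_{2k-1}(\mathbb{E})$ is expressed, with universal prefactor $B_{2k}/(2k)!$, in terms of $\psi$-classes at the markings, a $\kappa$-type class, and boundary contributions supported where the curve degenerates. Substituting this into the exponential $\exp(\sum_k s_k\,\mathrm{ch}_{2k-1}(\mathbb{E}))$ that defines $F_g^{\mathbb{E}}$, and following Givental's observation that this substitution is the quantization of a symplectic operator, one obtains a differential operator $\hat{G}(\mathbf{s})$, polynomial in $\mathbf{s}$ and acting on functions of $\mathbf{t}$, with $\hat{G}(\mathbf{0})=\mathrm{id}$ and
\[\mathcal{D}^{\mathbb{E}}(\mathbf{t},\mathbf{s})=\hat{G}(\mathbf{s})\,\mathcal{D}(\mathbf{t}),\]
where $\mathcal{D}(\mathbf{t})$ is the ordinary descendant total potential. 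The boundary terms in Mumford's formula are exactly what forces $\hat{G}(\mathbf{s})$ to carry second-order pieces in $\partial/\partial t$ (the genus-lowering and genus-splitting contributions), which later reappear as the $\hbar^2\partial^2$ block of $L_n^{\mathbb{E}}$.

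Granting this, the conjecture follows from the single operator identity
\[L_n^{\mathbb{E}}\,\hat{G}(\mathbf{s})=\hat{G}(\mathbf{s})\,L_n,\qquad n\geq-1,\]
where the $L_n$ are the ordinary descendant Virasoro operators: indeed $L_n^{\mathbb{E}}\mathcal{D}^{\mathbb{E}}=\hat{G}\,L_n\mathcal{D}$, which vanishes precisely when $L_n\mathcal{D}=0$. Establishing this identity is the combinatorial heart of the argument. Conjugating $L_n$ by $\hat{G}(\mathbf{s})=\exp(\cdots)$ and expanding the exponential produces the outer sum $\sum_{m}\frac{(-1)^m}{m!}\sum_{k_1,\dots,k_m}\prod_i(\frac{B_{2k_i}}{(2k_i)!}\mathfrak{s}_{k_i})$; each factor shifts a descendant index by $2k_i-1$, which is precisely what is recorded by the difference operators $\Delta_{2k_i-1}$. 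The repeated shifts acting on the index-polynomials of the bare Virasoro operator are what assemble into the elementary symmetric functions $\hat{e}_{n+1-j}(n,\alpha)$ and $\check{e}_{n+1-j}(n,\alpha)$, with the grading shifts $b_\alpha=p_\alpha-\frac12(d-1)$ and $b^\alpha=1-b_\alpha$ forced by the Hodge degrees of the insertions. Finally, the scalar obtained from normal-ordering the conjugated quadratic part must be matched against the anomaly constants; this is where the $\delta_{n,0}$ term with $\int_X(\frac{3-d}{2}c_d(X)-c_1(X)\cup c_{d-1}(X))$ and the $\delta_{n,-1}$ term with $\mathfrak{s}_1\int_X c_d(X)$ should emerge.

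The hardest point is not the conjugation but the input $L_n\mathcal{D}=0$: the ordinary Virasoro conjecture is itself open in higher genus for general $X$, so this method yields the full Hodge conjecture only conditionally on the ordinary one. What can be made unconditional is the genus-0 statement. Extracting the leading $\hbar^{-2}$ order of $L_n^{\mathbb{E}}\mathcal{D}^{\mathbb{E}}=0$ produces a family of identities, one for each monomial in $\mathbf{s}$, on the genus-0 potential $F_0^{\mathbb{E}}$; since $\mathbb{E}$ has rank $g=0$ this equals the ordinary genus-0 descendant potential, and the conjugation identity reduces each member of the family to the known genus-0 descendant Virasoro constraints together with the genus-0 topological recursion relations that govern how the $\Delta_{2k-1}$-shifts act. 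The genus-1 $L_1^{\mathbb{E}}$ constraint with a single Hodge character insertion is the first nontrivial instance in which a boundary (genus-lowering) term of Mumford's formula must be treated, and it serves as a test case for the $\hbar^2\partial^2$ block. The most delicate technical obstacle in pushing beyond these cases is the precise bookkeeping of the lower-genus and nodal contributions of $\hat{G}(\mathbf{s})$: one must show that, after conjugation, they reorganize into exactly the stated $(-1)^{-s}(\mathcal{C}^j)^{\alpha\beta}\partial^2$ terms with the shifted arguments $\check{e}_{n+1-j}(n,\alpha)(-s-1)$, with no residual terms surviving.
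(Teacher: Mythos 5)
Your reduction is exactly the paper's own treatment: the paper defines $L_n^{\mathbb{E}}$ as the conjugation $\exp(\sum_k \frac{B_{2k}}{(2k)!}\mathfrak{s}_k\widehat{z^{2k-1}})\,L_n\,\exp(-\sum_k \frac{B_{2k}}{(2k)!}\mathfrak{s}_k\widehat{z^{2k-1}})$ and uses Givental's quantized form of the Faber--Pandharipande relation, $\mathcal{D}^{\mathbb{E}}=\exp(\sum_k \frac{B_{2k}}{(2k)!}\mathfrak{s}_k\widehat{z^{2k-1}})\mathcal{D}$, so that the Hodge constraints are formally equivalent to the ordinary Virasoro conjecture (your intertwining identity $L_n^{\mathbb{E}}\hat{G}=\hat{G}L_n$ holds by construction, with the BCH expansion producing the $\Delta_{2k_i-1}$-shifted elementary symmetric functions and cocycle constants just as you describe). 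You also correctly identify that this yields the conjecture only conditionally on $L_n\mathcal{D}=0$ (hence unconditionally in the semisimple case via Teleman, and in genus $0$), which matches the paper's Theorems 1.2 and 1.3.
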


Motived by  Givental-Teleman reconstruction results for semisimple cohomological field theory (cf. \cite{MR2917177}), we have the following
\begin{theorem}\label{thm:hodge-semisimple}
For any smooth projective variety with semisimple quantum cohomology, Virasoro conjecture~\ref{conj:Hodge} for Hodge integrals holds.
% \[\Psi_{g,n;k_1,...k_m}^{\mathbb{E}}(\mathbf{t})=0\]
% for any $g,n, k_1,...,k_m$.
\end{theorem}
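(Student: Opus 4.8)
The plan is to reduce the Hodge Virasoro constraints to the ordinary descendant Virasoro constraints, which are already known for semisimple quantum cohomology by Givental's quantization theorem together with Teleman's classification of semisimple cohomological field theories. The structural input that makes this reduction possible is that, following Faber--Pandharipande \cite{MR1728879} and Givental \cite{MR1901075}, the insertion of the Chern character classes $\ch_{2k-1}(\mathbb{E})$ of the Hodge bundle is governed, through Mumford's Grothendieck--Riemann--Roch formula, by a single loop-group element. Concretely I would first establish
\[
\mathcal{D}^{\mathbb{E}}(\mathbf{t},\mathbf{s})=\widehat{G}(\mathbf{s})\,\mathcal{D}(\mathbf{t}),
\]
where $\mathcal{D}(\mathbf{t})=\mathcal{D}^{\mathbb{E}}(\mathbf{t},\mathbf{0})$ is the ordinary total descendant potential and $\widehat{G}(\mathbf{s})$ is the quantization of the multiplication operator $G(\mathbf{s})=\exp\big(\sum_{k\ge1}\tfrac{B_{2k}}{2k(2k-1)}\,\mathfrak{s}_k\,z^{2k-1}\big)$ on Givental's symplectic loop space $H^*(X)\otimes\mathbb{C}((z^{-1}))$. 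Since multiplication by the odd function $z^{2k-1}$ is infinitesimally symplectic, $\widehat{G}(\mathbf{s})$ is a well-defined order-two differential operator in the $\mathbf{t}$-variables, polynomial in $\mathbf{s}$.

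The heart of the argument is then a conjugation identity
\[
\widehat{G}(\mathbf{s})^{-1}\,L_n^{\mathbb{E}}\,\widehat{G}(\mathbf{s})=L_n,\qquad n\ge-1,
\]
where $L_n=L_n^{\mathbb{E}}\big|_{\mathbf{s}=0}$ are the ordinary Virasoro operators in Eguchi--Hori--Xiong form (indeed setting all $\mathfrak{s}_k=0$ kills the $m\ge1$ terms of $L_n^{\mathbb{E}}$ and leaves exactly the classical operator). I would verify this at the Lie-algebra level: since quantization is a projective Lie-algebra homomorphism, $\widehat{G}^{-1}L_n^{\mathbb{E}}\widehat{G}$ is obtained by exponentiating $\mathrm{ad}$ of the Hamiltonian of $\log G(\mathbf{s})$. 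The key point is that multiplication by $z^{2k-1}$ raises the descendant index by $2k-1$, which is precisely the origin of the difference operators $\prod_i\Delta_{2k_i-1}$ and of the index shift $\sum_i(2k_i-1)$ appearing in $L_n^{\mathbb{E}}$; correspondingly the Bernoulli prefactors $\prod_i\tfrac{B_{2k_i}}{(2k_i)!}\mathfrak{s}_{k_i}$ and the elementary symmetric functions $\hat e_{n+1-j}(n,\alpha)$, $\check e_{n+1-j}(n,\alpha)$ in the shifted gradings $b_\alpha,b^\alpha$ would be produced by iterating this bracket against the degree-grading part of $L_n$. The two families of terms in $L_n^{\mathbb{E}}$, the $\tilde t\,\partial$ part and the $\hbar^2\,\partial\,\partial$ part, should arise respectively from the $\mathcal{H}_+\to\mathcal{H}_+$ and $\mathcal{H}_-\to\mathcal{H}_+$ components of the multiplication operator, and the scalar anomaly terms, in particular the genuinely new $\tfrac{\delta_{n,-1}}{24}\mathfrak{s}_1\int_X c_d(X)$, would be pinned down by tracking the central normal-ordering constant of the quantization together with the rational constants in Mumford's formula.

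Granting this identity, the theorem follows at once: by Givental \cite{MR1901075} and Teleman \cite{MR2917177}, semisimplicity of the quantum cohomology of $X$ guarantees the ordinary Virasoro constraints $L_n\mathcal{D}=0$, whence
\[
L_n^{\mathbb{E}}\mathcal{D}^{\mathbb{E}}=L_n^{\mathbb{E}}\widehat{G}(\mathbf{s})\mathcal{D}=\widehat{G}(\mathbf{s})\big(\widehat{G}(\mathbf{s})^{-1}L_n^{\mathbb{E}}\widehat{G}(\mathbf{s})\big)\mathcal{D}=\widehat{G}(\mathbf{s})\,L_n\,\mathcal{D}=0.
\]

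I expect the main obstacle to be the conjugation identity of the second step. Although its overall shape is forced by the consistency check at $\mathbf{s}=0$ and by the index-shifting interpretation of $z^{2k-1}$, verifying that the exponentiated $\mathrm{ad}$-action reproduces the exact combinatorics of $L_n^{\mathbb{E}}$ — the interplay of the Bernoulli factors, the difference operators $\prod_i\Delta_{2k_i-1}$, and the symmetric functions in $b_\alpha$ and $b^\alpha$ — together with the precise normalization of the anomaly constants, demands a careful and somewhat delicate computation within Givental's quantization formalism. A secondary point requiring attention is to justify $\mathcal{D}^{\mathbb{E}}=\widehat{G}(\mathbf{s})\mathcal{D}$ as an identity of formal generating series in all genera, with the $\hbar$-grading matching on both sides.
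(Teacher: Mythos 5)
Your proposal follows essentially the same route as the paper: the paper defines $L_n^{\mathbb{E}}:=\exp\big(\sum_{k\geq1}\tfrac{B_{2k}}{(2k)!}\mathfrak{s}_{k}\widehat{z^{2k-1}}\big)\,L_n\,\exp\big(-\sum_{k\geq1}\tfrac{B_{2k}}{(2k)!}\mathfrak{s}_{k}\widehat{z^{2k-1}}\big)$, invokes Givental's quantized reformulation $\mathcal{D}^{\mathbb{E}}(\mathbf{t},\mathbf{s})=\exp\big(\sum_{k\geq1}\tfrac{B_{2k}}{(2k)!}\mathfrak{s}_{k}\widehat{z^{2k-1}}\big)\mathcal{D}(\mathbf{t})$ of the Faber--Pandharipande equations to conclude $L_n^{\mathbb{E}}\mathcal{D}^{\mathbb{E}}=0$ from the Givental--Teleman semisimple constraints $L_n\mathcal{D}=0$, and then carries out precisely the Baker--Campbell--Hausdorff computation you flag as the delicate step (including the cocycle term that produces $\tfrac{\chi(X)}{24}\mathfrak{s}_1$ at $n=-1$) to show the conjugated operator matches the explicit formula from the introduction. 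The only slip is your normalization of the loop-group element: in the paper's convention, where $\mathfrak{s}_k$ couples to $\ch_{2k-1}(\mathbb{E})$, the correct coefficient is $\tfrac{B_{2k}}{(2k)!}$ rather than $\tfrac{B_{2k}}{2k(2k-1)}$ (the two agree only for $k=1$).
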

In \cite{MR3474326},  an algorithm was proposed to solve the differential equations satisfied by the Hodge
potentials associated to an arbitrary semisimple Frobenius manifold. This algorithm
 represents the Hodge potential in terms of the genus zero generating function the
Frobenius manifold and the genus zero two-point functions, and shows that the Hodge
potential is the logarithm of a tau-function of an integrable hierarchy of Hamiltonian
evolutionary PDEs called the Hodge hierarchy, which is a tau-symmetric integrable
deformation of the principal hierarchy of the Frobenius manifold with deformation
parameters $\mathbf{s}=\{\mathfrak{s}_k\}_{ k\geq 1}$ and $\hbar$. We hope Virasoro constraints for Hodge integrals in semisimple cases would provide us a new way to construct the Hodge integrable hierachy. 

The Virasoro conjecture for Gromov-Witten invariants of target varieties with non-semisimple quantum cohomology is widely open. It is very interesting to investigate to what extend we could prove the Virasoro conjecture  for Hodge integrals of general target varieties.  
Let $\Psi_{g,n}^{\mathbb{E}}(\mathbf{t},\mathbf{s})$ be the coefficient of $\hbar^{2g-2}$
in the Laurent expansion of $\mathcal{D}^{\mathbb{E}}(\mathbf{t},\mathbf{s})^{-1}L_n^{\mathbb{E}}\mathcal{D}^{\mathbb{E}}(\mathbf{t},\mathbf{s})$. That is
\[\mathcal{D}^{\mathbb{E}}(\mathbf{t},\mathbf{s})^{-1}L_n^{\mathbb{E}}\mathcal{D}^{\mathbb{E}}(\mathbf{t},\mathbf{s})
=\sum_{g\geq0}\hbar^{2g-2}\Psi_{g,n}^{\mathbb{E}}(\mathbf{t},\mathbf{s}).\]
Moreover, we expand $\Psi_{g,n}^{\mathbb{E}}(\mathbf{t},\mathbf{s})$ as a formal power series of $\mathbf{s}$
\[\Psi_{g,n}^{\mathbb{E}}(\mathbf{t},\mathbf{s})
=\sum_{m\geq0}\sum_{k_1,...,k_m\geq0}
\frac{1}{m!}\mathfrak{s}_{k_1}...\mathfrak{s}_{k_m}\Psi_{g,n;k_1,...k_m}^{\mathbb{E}}(\mathbf{t}).\]
We call the equation $L_n^{\mathbb{E}}\mathcal{D}^{\mathbb{E}}=0$ the $L_n^{\mathbb{E}}$-constraint for the partition function of Hodge integrals. It is equivalent
to $\Psi^{\mathbb{E}}_{g,n;k_1,...,k_m}=0$ for all $g$ and $k_1,...,k_m$. The equation $\Psi^{\mathbb{E}}_{g,n;k_1,...,k_m}=0$ will be called  genus-$g$ degree-$(k_1,...,k_m)$ $L_n^{\mathbb{E}}$-constraint. In particular, we denote $\Psi_{g,n;\emptyset}$ to be the function $\Psi_{g,n}^{\mathbb{E}}(\mathbf{t},\mathbf{s})|_{\mathbf{s}=0}$.

\begin{theorem}\label{thm:g=0-L_n^E-constraints}
For any smooth projective variety, the genus-0 $L_n^{\mathbb{E}}$ constraints hold.    
\end{theorem}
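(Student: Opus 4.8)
The plan is to exploit the fact that in genus zero the Hodge bundle $\mathbb{E}$ has rank $g=0$, so that every Chern character $\ch_{j}(\mathbb{E})$ vanishes on $\overline{M}_{0,n}(X,\beta)$; consequently the genus-zero Hodge potential is independent of $\mathbf{s}$ and equals the ordinary genus-zero descendant potential, $F_0^{\mathbb{E}}(\mathbf{t},\mathbf{s})=F_0(\mathbf{t})$. First I would extract the coefficient of $\hbar^{-2}$ in $\mathcal{D}^{\mathbb{E}}(\mathbf{t},\mathbf{s})^{-1}L_n^{\mathbb{E}}\mathcal{D}^{\mathbb{E}}(\mathbf{t},\mathbf{s})$ to write $\Psi_{0,n}^{\mathbb{E}}$ explicitly: the linear first-order terms of $L_n^{\mathbb{E}}$ contribute $\tilde t_r^\alpha\,\partial F_0$; the second-order terms, which carry an explicit $\hbar^2$, contribute at this order only through the disconnected product $\partial F_0\cdot\partial F_0$ (the connected piece $\partial\partial F_0$ lands at $\hbar^0$, i.e. genus one); the term $\tfrac{1}{2\hbar^2}(\mathcal{C}^{n+1})_{\alpha\beta}t_0^\alpha t_0^\beta$ contributes directly; and the $\delta_{n,0},\delta_{n,-1}$ constants drop out. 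This realizes $\Psi_{0,n}^{\mathbb{E}}$ as a sum over the $\mathbf{s}$-degree $m$, where the coefficient of each monomial $\mathfrak{s}_{k_1}\cdots\mathfrak{s}_{k_m}$ is a concrete bilinear relation in the first and second derivatives of $F_0$.

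The $m=0$ summand is precisely the classical genus-zero descendant Virasoro expression $\Psi_{0,n;\emptyset}$, i.e. the genus-zero part of the Eguchi--Hori--Xiong operator applied to the descendant potential. This vanishes because the genus-zero Virasoro constraints for descendant Gromov--Witten invariants are known for every smooth projective variety: they follow from the genus-zero topological recursion relations together with the string and dilaton equations, equivalently from the tangency of the genus-zero Virasoro vector fields to Givental's Lagrangian cone. Hence the entire content of the theorem is the vanishing, for every $m\geq1$ and every multi-index $(k_1,\dots,k_m)$, of the $\mathbf{s}$-degree-$m$ relation $\Psi^{\mathbb{E}}_{0,n;k_1,\dots,k_m}=0$.

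To establish these I would proceed in two reductions. First, I would use the Virasoro bracket $[L_n^{\mathbb{E}},L_m^{\mathbb{E}}]=(n-m)L_{n+m}^{\mathbb{E}}$: conjugating by $\mathcal{D}^{\mathbb{E}}$ and reading off the leading $\hbar^{-2}$ term shows that the genus-zero relations propagate under the dispersionless Poisson bracket, so that $\Psi_{0,n}^{\mathbb{E}}=\Psi_{0,m}^{\mathbb{E}}=0$ forces $\Psi_{0,n+m}^{\mathbb{E}}=0$ whenever $n\neq m$; since $L_{-1}^{\mathbb{E}}$ and $L_2^{\mathbb{E}}$ generate the whole half-branch under brackets, it suffices to treat a finite set of generators. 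Second, for these generators I would handle the $m\geq1$ terms by matching the difference operators $\prod_i\Delta_{2k_i-1}$ and the index shifts $\sum_i(2k_i-1)$ built into $\hat e_{n+1-j}(n,\alpha)$ and $\check e_{n+1-j}(n,\alpha)$ against linear combinations of the genus-zero descendant Virasoro relations $\Psi_{0,n';\emptyset}=0$ at the shifted levels $n'$. The mechanism I expect is that $\Delta_{2k-1}$ applied to the elementary symmetric functions, followed by the shift of the descendant index, reproduces exactly the coefficient data of the shifted ordinary Virasoro operators, so that each genus-zero Hodge relation in $\mathbf{s}$-degree $m$ is a combination of genus-zero descendant relations that already vanish.

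The main obstacle will be this second reduction. Concretely, one must prove the symmetric-function identities that convert $\bigl(\prod_i\Delta_{2k_i-1}\bigr)\hat e_{n+1-j}(n,\alpha)$ and $\bigl(\prod_i\Delta_{2k_i-1}\bigr)\check e_{n+1-j}(n,\alpha)$ into the coefficients of the ordinary genus-zero constraints, and simultaneously verify the $\hbar$-grading so that the second-order part enters only through the disconnected product $\partial F_0\cdot\partial F_0$, with no higher-genus descendant anomaly dragged in. In Givental's language this is the statement that the classical part of the Hodge symplectic transformation is trivial in genus zero, so that the $\mathbf{s}$-deformed Virasoro vector fields remain tangent to the same Lagrangian cone; making this tangency precise and reconciling it with the explicit difference-operator formula is the crux of the argument.
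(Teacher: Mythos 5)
Your setup is sound: the $\hbar$-grading analysis is right, the $m=0$ part is the known genus-0 Virasoro constraint, and the bracket reduction to the generators $L_{-1}^{\mathbb{E}},L_2^{\mathbb{E}}$ is legitimate (though it saves little, since $\Psi_{0,2}^{\mathbb{E}}$ already contains all $\mathbf{s}$-degrees $m\le 3$ and is where the difficulty lives). The genuine gap is your second reduction. The degree-$m$ relations $\Psi^{\mathbb{E}}_{0,n;k_1,\dots,k_m}=0$ are \emph{not} constant-coefficient linear combinations of the ordinary genus-0 descendant Virasoro relations at shifted levels, and the coefficient matching you "expect" cannot work: each $\Delta_{2k_i-1}$ lowers the degree of $\hat e_{n+1-j}(n,\alpha)$ as a polynomial in $r$ by one, so the coefficient of $\tilde t_r^\alpha\,\partial/\partial t^\beta_{r+n-j+\sum_i(2k_i-1)}$ has degree $n+1-j-m$; an ordinary relation with the same descendant shift and the same $c_1(X)^j$ insertion is forced to sit at level $n'=n+\sum_i(2k_i-1)$, where the coefficient polynomial has degree $n'+1-j=n+1-j+\sum_i(2k_i-1)$, a strict mismatch. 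Concretely, at $m=1$, $k_1=1$ these relations are exactly the Eguchi--Hori--Xiong $\widetilde L_{n+1}$ equations (the paper derives them as a corollary of the theorem), which historically required proofs independent of the Virasoro constraints and belong to the strictly larger Virasoro-like algebra of \cite{MR4631415}. So your plan as committed to would fail at the first nontrivial degree.

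What is missing is the Faber--Pandharipande equation~\eqref{eqn:FP-formula}, which you never invoke (you use $\ch(\mathbb{E})=0$ in genus zero only to conclude $F_0^{\mathbb{E}}=F_0$). The paper's proof is an induction on the $\mathbf{s}$-degree $m$: it writes $L_{n;k_1,\dots,k_{m+1}}=[L_{n;k_1,\dots,k_m},\widehat{z^{2k_{m+1}-1}}]$, applies this to $\mathcal{D}(\mathbf{t})$ using $\widehat{z^{2k-1}}\mathcal{D}(\mathbf{t})=\mathcal{D}(\mathbf{t})\cdot\sum_g\hbar^{2g-2}\tfrac{(2k)!}{B_{2k}}\langle\langle \ch_{2k-1}(\mathbb{E})\rangle\rangle_g$ (the quantized form of Faber--Pandharipande), and then extracts the $\hbar^{-2}$ coefficient, where the inductive hypothesis $\Psi_{0,n;k_1,\dots,k_m}=0$ and the vanishing of $\ch_{2k-1}(\mathbb{E})$ on genus-0 moduli kill every term. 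To your credit, the closing remark of your proposal --- tangency of the $\mathbf{s}$-deformed vector fields to Givental's cone --- is the correct conceptual mechanism and could be turned into a proof: multiplication by $z^{2k-1}$ is tangent to the cone because the cone is overruled (this \emph{is} the genus-0 FP equation in Hamilton--Jacobi form), and linear vector fields tangent to the cone close under Lie bracket, which is the geometric shadow of the paper's induction. But your phrasing "the classical part of the Hodge symplectic transformation is trivial in genus zero" is not accurate ($\exp(\sum_k\tfrac{B_{2k}}{(2k)!}\mathfrak{s}_k z^{2k-1})$ is nontrivial; it merely preserves the cone), and since you explicitly defer this step as the unresolved "crux" while committing to the incorrect combinatorial matching, the proposal as written does not close the argument.
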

From this theorem, we will see Virasoro constraints for Hodge integrals can be seen as an all genus extension  of Virasoro constraints and genus-0 $\widetilde{L}_n$ constraints for Gromov-Witten invariants proposed by Eguchi, Hori and Xiong in \cite{MR1454328}.  

It is well known that the vanishing of $\Psi_{1,n;\emptyset}$ is still open for general target varieties. In \cite{MR1815723}, it was proved that the vanishing of $\Psi_{1,n;\emptyset}$ for all $n$ can be reduced to vanishing of $\Psi_{1,1;\emptyset}$ (cf. \cite{MR1815723}). It seems tautological relations on the genus-1 moduli space of curves is not sufficient  to imply the vanishing of $\Psi_{1,1;\emptyset}$. However, we prove the genus-1 tautological relations imply the vanishing of $\Psi^{\mathbb{E}}_{1,1;k_1}$  in the following theorem.
\begin{theorem}\label{thm:psiEg=1}
For any smooth projective variety, genus-$1$, degree-$k_1$, $L_1^{\mathbb{E}}$ constraints  hold, 
i.e.
\[\Psi_{1,1 ;k_1}^{\mathbb{E}}=0\] for any $k_1\geq1$. 
\end{theorem}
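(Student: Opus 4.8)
The plan is to extract $\Psi_{1,1;k_1}^{\mathbb{E}}$ explicitly as the coefficient of $\hbar^{0}\mathfrak{s}_{k_1}$ in $(\mathcal{D}^{\mathbb{E}})^{-1}L_1^{\mathbb{E}}\mathcal{D}^{\mathbb{E}}$ and to show that it is a combination of genus-one Hodge integrals carrying a single $\mathrm{ch}_{2k_1-1}(\mathbb{E})$ insertion which is annihilated by genus-one tautological relations. First I would expand to first order in $\mathbf{s}$, writing $\mathcal{D}_0:=\mathcal{D}^{\mathbb{E}}|_{\mathbf{s}=0}$, $L_1^{\mathbb{E}}=L_1+\sum_{k\ge1}\mathfrak{s}_k\frac{B_{2k}}{(2k)!}L_1^{[k]}+O(\mathbf{s}^2)$, where $L_1^{[k]}$ is the $m=1$ term of the displayed formula (the one carrying $\Delta_{2k-1}$ applied to $\hat{e}$ and $\check{e}$), and $F_g^{\mathbb{E}}=F_g+\sum_{k\ge1}\mathfrak{s}_k\,\partial_{\mathfrak{s}_k}F_g^{\mathbb{E}}|_{\mathbf{s}=0}+O(\mathbf{s}^2)$, where $\partial_{\mathfrak{s}_k}F_g^{\mathbb{E}}|_{\mathbf{s}=0}$ is the genus-$g$ descendant potential with one $\mathrm{ch}_{2k-1}(\mathbb{E})$ insertion. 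Collecting the $\mathfrak{s}_{k_1}$-linear, $\hbar^{0}$ part of the conjugation then presents $\Psi_{1,1;k_1}^{\mathbb{E}}$ as a sum of two geometric blocks: the action of the modified operator $L_1^{[k_1]}$ on $\mathcal{D}_0$, and the action of the ordinary $L_1$ on the one-insertion deformation of the potential, each a combination of genus-one and products of genus-zero descendant correlators.

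Next I would eliminate the genus-zero building blocks using Theorem~\ref{thm:g=0-L_n^E-constraints}. Since the genus-zero $L_n^{\mathbb{E}}$ constraints are now fully available, the genus-zero factors occurring in the $\hbar^{0}$ part obey their own Hodge--Virasoro identities; substituting these is the Hodge analogue of the genus-one reduction cited in \cite{MR1815723}, and it collapses $\Psi_{1,1;k_1}^{\mathbb{E}}$ to a genuinely genus-one quantity, namely a fixed linear combination of descendant integrals over $\overline{\mathcal{M}}_{1,n}$ with one insertion of $\mathrm{ch}_{2k_1-1}(\mathbb{E})$, together with the $\psi$- and $\kappa$-corrections generated when the Virasoro vector field meets that insertion (the string and dilaton equations being used to normalize the lowest marked points).

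The final step is to invoke genus-one tautological relations. Rewriting $\mathrm{ch}_{2k_1-1}(\mathbb{E})$ through Mumford's Grothendieck--Riemann--Roch formula in terms of $\kappa_{2k_1-1}$, the classes $\psi_i^{2k_1-1}$, and boundary push-forwards turns the surviving block into an integral of tautological classes on $\overline{\mathcal{M}}_{1,n}$. For $k_1\ge2$ I would then use Mumford's relation $c(\mathbb{E})c(\mathbb{E}^{\vee})=1$, which on $\overline{\mathcal{M}}_{1,n}$ forces $\lambda_1^{2}=0$ and hence $\mathrm{ch}_{2k_1-1}(\mathbb{E})=0$; together with the vanishing of the matching genus-zero pieces this settles every $k_1\ge2$. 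For $k_1=1$, where $\mathrm{ch}_1(\mathbb{E})=\lambda_1$, I would apply the genus-one topological recursion relation (on $\overline{\mathcal{M}}_{1,1}$ one has $\lambda_1=\psi_1=\tfrac{1}{12}\delta_{\mathrm{irr}}$, with its counterparts on $\overline{\mathcal{M}}_{1,n}$) together with Getzler's relation on $\overline{\mathcal{M}}_{1,4}$ to express the $\lambda_1$-twisted correlators through genus-zero data and boundary contributions, and verify that the combination cancels identically.

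The main obstacle is the combinatorial matching underlying the first two steps: one must carry the difference operators $\Delta_{2k_1-1}$ and the elementary symmetric functions $\hat{e}_{n+1-j}(n,\alpha)$, $\check{e}_{n+1-j}(n,\alpha)$ through the conjugation and the genus-zero substitution, and show that the operator block $L_1^{[k_1]}\mathcal{D}_0$ reorganizes precisely into the $\mathrm{ch}_{2k_1-1}(\mathbb{E})$-insertion prescribed by Mumford's formula. This is exactly where the single Hodge insertion is decisive: it converts the otherwise intractable genus-one Virasoro combination---whose vanishing $\Psi_{1,1;\emptyset}^{\mathbb{E}}=0$ is not implied by relations on $\overline{\mathcal{M}}_{1,n}$ alone---into a tautological-class identity that $\lambda_1^{2}=0$ and Getzler's relation do resolve.
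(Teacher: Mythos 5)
Your first-order expansion in $\mathbf{s}$ correctly reproduces the two-block structure of Corollary~\ref{cor:formula-psi-gnk1E}, and you are right that for $k_1\geq 2$ the block in which $L_1$ acts on the one-insertion potentials $F_{g;k_1}^{\mathbb{E}}$ dies in genus $\leq 1$, since $\lambda_1^2=0$ on $\overline{\mathcal{M}}_{1,n}$ forces $\ch_{2k_1-1}(\mathbb{E})=0$ there. But your argument then has a genuine gap: the other block --- the $m=1$ conjugation term $L_{1;k_1}$ applied to the \emph{pure descendant} potential, carrying the shifted coefficients $\Delta_{2k_1-1}\hat{e}_{2-j}$ and $\Delta_{2k_1-1}\check{e}_{2-j}$ --- contains no Hodge insertion at all, and its $\hbar^0$-coefficient is the nontrivial mixed identity~\eqref{eqn:11k1needtoporve}, which is linear in genus-$1$ correlators (terms like $\tilde{t}_r^\alpha\,\partial F_{1;\emptyset}/\partial t^\beta_{2k_1+r-j}$ and products $\partial F_{0;\emptyset}\cdot\partial F_{1;\emptyset}$). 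Your step two asserts that substituting the genus-zero Hodge--Virasoro constraints of Theorem~\ref{thm:g=0-L_n^E-constraints} ``collapses'' everything to insertion-carrying genus-one integrals, but those constraints are relations among genus-zero data only and cannot produce an identity linear in $F_1$; and your step three's ``vanishing of the matching genus-zero pieces'' does not touch these terms either. Establishing~\eqref{eqn:11k1needtoporve} is in fact the bulk of the paper's proof, and it requires a specific mechanism you have not supplied: the genus-$1$ topological recursion relation~\eqref{eqn:g=1TRR} strips genus-$1$ descendants down to primaries $\langle\langle\phi_\alpha\rangle\rangle_1$; the observation that the differenced symmetric functions are at most linear in the descendant level lets derivatives of the quasi-homogeneity equation (Lemma~\ref{lem:der-quasi-homo-eqn}) and of the genus-$0$ Faber--Pandharipande equation (Lemma~\ref{lem:der-g=0-FP-eqn}) absorb those coefficients, yielding the three-way cancellation~\eqref{eqn:sumof-first-two-terms}; and the leftover pure genus-$0$ two-point block vanishes by the $Q$-operator identity of Lemma~\ref{lem:-1jtau2l-1-jQtaujg=0}, proved with the $T$-operator calculus and the generalized genus-$0$ TRR. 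Without some substitute for this chain, the case $k_1\geq 2$ is not established.

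For $k_1=1$ your outline is closer but still misses the actual closing moves. The input $\ch_1(\mathbb{E})=\lambda_1=\frac{1}{12}\delta_{irr}$ is indeed what the paper uses, in the form of the universal equation~\eqref{eqn:tauo-rel-univer-eqn-1,1}; however, Getzler's relation on $\overline{\mathcal{M}}_{1,4}$ plays no role and would not be the right tool --- what is needed instead is two derivatives of the genus-$0$ $L_1$ constraint contracted with $\eta^{-1}$ (the computation of $\Psi^{\mathbb{E}''}_{1,1;1}$ in Lemma~\ref{lem:psi-111E''}), while the companion piece $\Psi^{\mathbb{E}'}_{1,1;1}$ is extracted from the same $k_1>1$ machinery you skipped. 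Moreover the two pieces do \emph{not} cancel identically: their sum equals $-\frac{1}{6}\sum_\alpha b_\alpha\langle\langle(c_1(X)\cup\phi_\alpha)\phi^\alpha\rangle\rangle_0$, and its vanishing rests on the grading fact that $\mathcal{C}_\alpha^\beta\neq 0$ forces $b_\beta=b_\alpha+1$ --- a final arithmetic step your proposal does not anticipate. So both halves of your plan, as written, stop short of the decisive computations.
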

For higher genus $g\geq2$, the similar vanishing of the genus-$g$, degree-$k_1$, $L_1^{\mathbb{E}}$ constraints can be proved if degree $k_1$ is big enough.
\begin{theorem}\label{thm:psiEg1k1}
For any smooth projective variety,
 \begin{align*}
 \Psi^{\mathbb{E}}_{g,1;k_1}=0    
 \end{align*}
 for $g\geq2, k_1\geq\max\{g+1,\frac{3g-1}{2}\}$.
\end{theorem}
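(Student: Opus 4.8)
The plan is to turn the vanishing $\Psi^{\mathbb{E}}_{g,1;k_1}=0$ into a statement about genus-$g$ Hodge integrals carrying a single insertion of $\mathrm{ch}_{2k_1-1}(\mathbb{E})$, and then to annihilate every resulting term by a dimension/support argument once $k_1$ is large — crucially, without appealing to the (still open) $\mathbf{s}=0$ Virasoro constraints, since the vanishing is meant to be unconditional. First I would isolate the degree-$k_1$ part of $\mathcal{D}^{\mathbb{E}}(\mathbf{t},\mathbf{s})^{-1}L_1^{\mathbb{E}}\mathcal{D}^{\mathbb{E}}(\mathbf{t},\mathbf{s})$ in the single variable $\mathfrak{s}_{k_1}$. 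It receives exactly two contributions: the $m=1$ piece of $L_1^{\mathbb{E}}$ (the term carrying $\tfrac{B_{2k_1}}{(2k_1)!}\mathfrak{s}_{k_1}$, with the difference operator $\Delta_{2k_1-1}$ and the symmetric functions $\hat e,\check e$) acting on the pure-descendant potential $\mathcal{D}^{\mathbb{E}}|_{\mathbf{s}=0}$, and the $m=0$ piece, i.e. the ordinary Virasoro operator $L_1$, acting on the $\mathfrak{s}_{k_1}$-linear part of $\mathcal{D}^{\mathbb{E}}$ (with the usual connected correction). Geometrically the latter is the generating series of descendant integrals with one factor of $\mathrm{ch}_{2k_1-1}(\mathbb{E})$ inserted, so after this step $\Psi^{\mathbb{E}}_{g,1;k_1}$ is a fixed combination of genus-$g$ Hodge integrals, each linear in $\mathrm{ch}_{2k_1-1}(\mathbb{E})$, together with the node-creating terms produced by the $\hbar^2\partial^2$ part of the operators.

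Next I would substitute Mumford's Grothendieck--Riemann--Roch formula
\[
\mathrm{ch}_{2k_1-1}(\mathbb{E})=\frac{B_{2k_1}}{(2k_1)!}\Big(\kappa_{2k_1-1}-\sum_{i}\psi_i^{2k_1-1}+\tfrac12\sum_{\Gamma}(\iota_\Gamma)_*\,R_{k_1}(\psi',\psi'')\Big),
\]
so that each term of $\Psi^{\mathbb{E}}_{g,1;k_1}$ becomes an integral over $\overline{\mathcal{M}}_{g,n}$ (for $n$ controlled by the number of $\mathbf{t}$-insertions) of a tautological class of pure codimension $2k_1-1$, plus boundary pushforwards in which the Hodge bundle splits as a direct sum over the components of the normalized curve. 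The Bernoulli factor produced here is consistent with the $\tfrac{B_{2k_1}}{(2k_1)!}$ already present in $L_1^{\mathbb{E}}$, which is what allows the bookkeeping to close up, and the $\Delta_{2k_1-1},\hat e,\check e$ weights in the operator should be read off as the descendant-shift incarnation of the $\kappa_{2k_1-1}$ and $\sum\psi_i^{2k_1-1}$ terms above.

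The vanishing then rests on two inputs. For the interior $\kappa$- and $\psi$-contributions I would invoke the Ionel--Getzler vanishing, which forces any monomial of degree $\geq g$ in $\psi$- and $\kappa$-classes to be supported on the boundary of $\overline{\mathcal{M}}_{g,n}$; this removes the non-boundary part. For the remaining boundary and node-creating contributions I would argue by descending induction on the genus: normalizing a separating node splits $\mathbb{E}$ as $\mathbb{E}_{g_1}\oplus\mathbb{E}_{g_2}$ and a non-separating node splits it as $\mathbb{E}_{g-1}\oplus\mathcal{O}$, so the single $\mathrm{ch}_{2k_1-1}$ redistributes onto strictly smaller genus factors, where either the inductive hypothesis, the genus-$1$ result of Theorem~\ref{thm:psiEg=1}, or the genus-$0$ constraints of Theorem~\ref{thm:g=0-L_n^E-constraints} apply. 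The bound $k_1\geq\frac{3g-1}{2}$ is precisely the condition $2k_1-1>3g-3=\dim\overline{\mathcal{M}}_g$ that kills the top genus-$g$ stratum by dimension, while $k_1\geq g+1$ (that is $2k_1-1\geq 2g+1$) is what guarantees enough room for the Ionel--Getzler vanishing to propagate simultaneously across all the lower-genus strata met in the induction.

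The main obstacle is this last step: organizing the boundary terms coming from Mumford's formula together with those produced by the $\hbar^2\partial^2$ (gluing) part of $L_1^{\mathbb{E}}$, and verifying that the combinatorial weights — the difference operators $\Delta_{2k_1-1}$, the elementary symmetric functions $\hat e,\check e$, and the matrices $\mathcal{C}^j$ — reassemble exactly into lower-genus $L^{\mathbb{E}}$-expressions or into classes that vanish by dimension. Controlling the number $n$ of marked points that can appear, so that the dimension count remains uniform and is not defeated by large $n$, is where the explicit threshold $\max\{g+1,\frac{3g-1}{2}\}$ has to be extracted, and I expect this to be the most delicate part of the argument.
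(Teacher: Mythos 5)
Your opening reduction is on target: isolating the $\mathfrak{s}_{k_1}$-linear part of $\mathcal{D}^{\mathbb{E}-1}L_1^{\mathbb{E}}\mathcal{D}^{\mathbb{E}}$ into the $m=1$ piece (with $\Delta_{2k_1-1}$, $\hat e$, $\check e$ weights) acting on pure descendants plus the $m=0$ piece acting on the $\ch_{2k_1-1}(\mathbb{E})$-inserted series is exactly Corollary~\ref{cor:formula-psi-gnk1E}, and your instinct that the proof must be unconditional (no $\mathbf{s}=0$ Virasoro input) and must rest on dimension-type vanishing for large $k_1$ is also right. But you misread the role of the threshold $k_1\geq g+1$, and this misreading derails the rest of the plan. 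In the paper this bound is not ``room for Ionel--Getzler to propagate''; it is the Mumford-relations fact that $\ch_{2k_1-1}(\mathbb{E})=0$ on $\overline{\mathcal{M}}_{g',n}$ whenever $k_1>g'$. Since every Hodge-inserted generating function $F_{g';k_1}^{\mathbb{E}}$ appearing in \eqref{eqn:formula-psi-gnk1E} has $g'\leq g<k_1$, the entire second bracket of that formula vanishes identically. Consequently your central device --- substituting Mumford's GRR expansion of $\ch_{2k_1-1}(\mathbb{E})$ into $\kappa$, $\psi$ and boundary terms, and then running a descending genus induction over boundary strata using Theorems~\ref{thm:g=0-L_n^E-constraints} and \ref{thm:psiEg=1} --- is a program for analyzing terms that are identically zero in the relevant range; there is nothing for it to act on, and indeed the paper never uses GRR, Ionel--Getzler, or the lower-genus theorems in this proof.

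The genuine gap is that what survives, and what the theorem actually amounts to, is the pure-descendant identity \eqref{eqn:sumnalphank1balpha}, which contains no Hodge classes at all, and your proposal supplies no mechanism for it: your claim that the $\Delta_{2k_1-1},\hat e,\check e$ weights ``reassemble'' against GRR terms has no counterpart once those terms are zero, and boundary pushforwards in any case do not recombine into lower-genus $L^{\mathbb{E}}$-constraints in the form your induction would need. The paper instead proves \eqref{eqn:sumnalphank1balpha} by big phase space operator calculus: the unconditional genus-$g$ dilaton equation \eqref{eqn:genus-g-dilaton-equation} and Hori's $L_0$ equation \eqref{eqn:genus-g-L_0-equation} convert the $\tilde t$-weighted sums into $\langle\langle T^{2k_1}(\mathcal{L}_0-k_1\mathcal{D})\rangle\rangle_g$ via Lemma~\ref{lem:T2lL0genus-g} and Lemma~\ref{lem:g00g}, the identity \eqref{eqn:tau_jtau_m-j=TjTm-j} and the commutation relation $(QT^k-T^kQ)(W)=kT^k(W)-T^{k-1}(\mathcal{X}\bullet W)$ reorganize the quadratic terms, and then everything dies by the pulled-back dimension vanishings $\psi_1^m=0$ for $m>3g-2$ on $\overline{\mathcal{M}}_{g,1}$ and $\psi_1^i\psi_2^{m-i}=0$ for $m>3g-1$ on $\overline{\mathcal{M}}_{g,2}$ (Liu's vanishing identities), which hold as statements about classes pulled back through the stabilization map --- this, not the interior/boundary dichotomy of Ionel--Getzler, is where $k_1\geq\frac{3g-1}{2}$ enters, ensuring $2k_1>3g-2$ and that the total $T$-power $2k_1-1$ exceeds the dimensions met at genus $g-1$ and in the splittings $g_1+g_2=g$ (your count against $\dim\overline{\mathcal{M}}_g=3g-3$ is close in spirit but the correct comparisons are with $\overline{\mathcal{M}}_{g,1}$ and $\overline{\mathcal{M}}_{g,2}$). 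Without the dilaton/$L_0$ input and the $T$/$Q$ machinery, the surviving identity is untouched by your proposal, so the argument as sketched does not prove the theorem.
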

% \subsection{Linear Hodge integral} 

This paper is organized as follows. In section~\ref{sec:virasoro-conj-Hodge}, we recall  the basic knowledge in Gromov-Witten theory and Hodge integral. In section~\ref{sec:Virasoro-gw-Hodge}, we give the Virasoro operators for Gromov-Witten invariants and Hodge integrals.  In section~\ref{sec:g=0-Virasoro-Hodge}, we prove the genus-0 Virasoro constraints for any target variety.   In section~\ref{sec:proof-thm-psiEg=1}, we give the proof of Theorem~\ref{thm:psiEg=1}.   In section~\ref{sec:proof-thm-g>=2}, we give the proof of Theorem~\ref{thm:psiEg1k1}. 
%  \\{\bf  Acknowledgements.}The author would like to thank the anonymous referees for careful
% reading of the manuscript and for the many helpful suggestions and comments. The author would like to thank professor Xiaobo Liu, Felix Janda, Yunfeng Jiang, Di Yang and Youjin Zhang for discussion about Hodge integrals, Gromov-Witten invariants and Frobenius manifold. This work is supported by  National Key R\&D Program of China (2022YFA1006200)  and  Shandong Provincial Natural Science Foundation (Grant No. ZR2021MA101). 

\section{Gromov-Witten
invariants and Hodge integrals}\label{sec:virasoro-conj-Hodge}
\subsection{Gromov-Witten
invariants and Hodge integrals}
Let $X$ be a smooth projective variety of complex dimension $d$, $\{\phi_1,...,\phi_N\}$ be a graded basis of
$H^*(X;\mathbb{C})$, such that $\phi_1$ is the identity element and $\phi_\alpha\in H^{p_\alpha,q_\alpha}(X;\mathbb{C})$. Let $\eta=(\eta_{\alpha\beta})$, $\eta_{\alpha\beta}:=\int_{X}\phi_\alpha\cup\phi_\beta$ be the intersection pairing on $H^*(X;\mathbb{C})$  and $\eta^{-1}=(\eta^{\alpha\beta})$ be the inverse of matrix $\eta$.
We will use $\eta$ and $\eta^{-1}$ to lower and raise indices.

Recall that the big phase space is by definition the product of infinite copies of $H^*(X;\mathbb{C})$, that is $\mathcal{P}:=\prod_{n=0}^{\infty}H^*(X;\mathbb{C})$.  Then we denote the corresponding basis for the $n$-th copy of $H^*(X;\mathbb{C})$
 in $\mathcal{P}$ by $\{\tau_{n}(\phi_\alpha)\}$.   We can think of $\mathcal{P}$ as an infinite dimensional vector space with basis $\{\tau_n(\phi_\alpha):1\leq \alpha\leq N,n\geq0\}$. Let $\{t_n^\alpha:n\geq0,\alpha=1,...,N\}$ be
the corresponding coordinate system on $\mathcal{P}$. For convenience, we identify $\tau_n(\phi_\alpha)$ with the
coordinate vector field $\frac{\partial}{\partial t_n^\alpha}$
on $\mathcal{P}$ for $n\geq0$. If $n<0$, $\tau_n(\phi_\alpha)$ is understood as the 0 vector
field. We also abbreviate $\tau_0(\phi_\alpha)$ as $\phi_\alpha$.  We use $\tau_+$ and $\tau_-$
to denote the operator which shift the level of descendants, i.e.
\[\tau_{\pm}\left(\sum_{n,\alpha}f_{n,\alpha}\tau_{n}(\phi_\alpha)\right):=\sum_{n,\alpha}f_{n,\alpha}\tau_{n\pm 1}(\phi_\alpha)\]
where $f_{n,\alpha}$ are functions on the big phase space.

Let $\overline{\mathcal{M}}_{g,n}(X,A)$ be the moduli space of stable maps $f:(C;x_1,...,x_n)\rightarrow X$,  where $(C;x_1,...,x_n)$
is a genus-$g$ nodal curve with $n$ marked points and $f_*([C])=A\in H_2(X;\mathbb{Z})$.
The descendant Gromov-Witten invariants is defined to be
\begin{align}\label{eqn:def-des-gw-inv}
\langle\tau_{k_1}(\phi_{\alpha_1})...\tau_{k_n}(\phi_{\alpha_n})\rangle_{g,n}:=\sum_{A\in H_2(X;\mathbb{Z})}q^{A}\int_{[\overline{\mathcal{M}}_{g,n}(X,A)]_{vir}} \prod_{i=1}^{n}ev_i^*\phi_{\alpha_i}\cup c_1(L_i)^{k_i}   
\end{align}
where $L_i$ are the tautological line bundles over $\overline{\mathcal{M}}_{g,n}(X,A)$, and $ev_i: \overline{\mathcal{M}}_{g,n}(X,A)\rightarrow X$ are
the evaluation maps. Let  $\mathbf{t}$ be the sets of variables $\{t_n^\alpha,n\geq0, \alpha=1,...,N\}$. 
We
define the generating
function of genus-$g$ Gromov-Witten invariants
\begin{align*}
F_g(\mathbf{t}):=\langle\exp(\sum_{n,\alpha}t_n^\alpha\tau_n(\phi_\alpha))\rangle_{g} 
\end{align*}
and its derivatives
\begin{align*}
\langle\langle\tau_{k_1}(\phi_{\alpha_1})...\tau_{k_n}(\phi_{\alpha_n})\rangle\rangle_{g}
:=\frac{\partial^n}{\partial t_{k_1}^{\alpha_1}...\partial t_{k_n}^{\alpha_n}}F_{g}(\mathbf{t}).
\end{align*}
For our purpose, we also define covariant derivatives of $F_g$ with respect to the trivial
connection on $\mathcal{P}$ by
\[\langle\langle W_1,...,W_n\rangle\rangle_g:=\sum_{k_1,\alpha_1,...,k_n,\alpha_n}f^{1}_{k_1,\alpha}...f^n_{k_n,\alpha_n}\frac{\partial^n}{\partial t_{k_1}^{\alpha_1}...\partial t_{k_n}^{\alpha_n}}F_{g}(\mathbf{t})\]
for vector fields $W_i=\sum_{k,\alpha}f_{k,\alpha}^{i}\tau_{k}(\phi_\alpha)$ on the big phase space. Using this double bracket, we define the quantum product on the big phase space: for any vector fields $W_1$ and $W_2$,
\begin{align*}
W_1\bullet W_2=\sum_{\alpha}\langle\langle W_1W_2\phi_\alpha\rangle\rangle_0\phi^\alpha.
\end{align*} 
The total descendant  potential is defined to be
\[\mathcal{D}(\mathbf{t}):=\exp\left(\sum_{g\geq0}\hbar^{2g-2}F_g(\mathbf{t})\right).\]

The
Hodge bundle $\mathbb{E}$ over $\overline{\mathcal{M}}_{g,n}(X,A)$ is the rank $g$ vector bundle with fiber $H^0(C,\omega_C)$ over the domain curve $(C;x_1,...,x_n)$, where $\omega_C$ is the dualizing sheaf of curve $C$. Let $\ch_k(E)$ be
the $k$-th Chern character of $\mathbb{E}$. By Mumford’s relations (cf. \cite{MR717614}), the $\ch_{k}(\mathbb{E})$  vanish for $k$ even and positive. 
The Hodge integral is defined to be 
\begin{align}\label{eqn-def:Hodge-cor}
\langle\tau_{k_1}(\phi_{\alpha_1})...\tau_{k_n}(\phi_{\alpha_n});\prod_{j=1}^{m}\ch_{l_j}(\mathbb{E})\rangle_{g,n}:=\sum_{A\in H_2(X;\mathbb{Z})}q^{A}\int_{[\overline{\mathcal{M}}_{g,n}(X,A)]_{vir}} \prod_{i=1}^{n}ev_i^*\phi_{\alpha_i}\cup c_1(L_i)^{k_i}   \cup \prod_{j=1}^{m}\ch_{l_j}(\mathbb{E}).   
\end{align}
By definition, the descendant Gromov-Witten invariants~\eqref{eqn:def-des-gw-inv} can be recovered from Hodge integrals~\eqref{eqn-def:Hodge-cor} by taking $m=0$.

\subsection{Faber-Pandharipande formula for Hodge integrals}

The total Hodge potential of $X$ is defined as an extension of
the total descendent potential depending on the sequence $\mathbf{s}=(\mathfrak{s}_1,\mathfrak{s}_2,...)$ of new
variables and incorporating intersection indices with characteristic classes of the
Hodge bundles. To be precise, we
define the generating
function of genus-$g$ Hodge integrals
\begin{align}\label{eqn:def0FgE}
F_g^{\mathbb{E}}(\mathbf{t},\mathbf{s}):=\langle\exp(\sum_{n,\alpha}t_n^\alpha\tau_n(\phi_\alpha));\exp(\sum_{m}\mathfrak{s}_m \ch_{2m-1}(\mathbb{E}))\rangle_{g}.    
\end{align}
The total Hodge potential is defined to be 
\[\mathcal{D}^{\mathbb{E}}(\mathbf{t},\mathbf{s}):=\exp(\sum_{g\geq0}\hbar^{2g-2}F_g^{\mathbb{E}}(\mathbf{t},\mathbf{s})).\]
For $l\geq1$, we introduce a formal differential operator
\begin{align*}
D_{2l-1}=-\frac{\partial}{\partial \mathfrak{s}_{l}}+\frac{B_{2l}}{(2l)!}\left(-\sum_{n,\alpha}\Tilde{t}_n^\alpha\frac{\partial}{\partial t_{n+2l-1}^\alpha}+\frac{\hbar^2}{2}\sum_{i,\alpha,\beta}(-1)^i\eta^{\alpha\beta}\frac{\partial}{\partial t_i^\alpha}\frac{\partial}{\partial t_{2l-2-i}^\beta}\right)    
\end{align*}
where $B_{2l}$ are the Bernoulli numbers.
In \cite{MR1728879}, Faber and Pandharipande proved that the total Hodge potential is annihilated by differential operators $D_{2l-1}$
\begin{align}\label{eqn:FP-formula}
D_{2l-1}\mathcal{D}^{\mathbb{E}}(\mathbf{t},\mathbf{s})=0,\quad l\geq1.    
\end{align}
\subsection{Givental quantization}
In this section, we review the basic concepts of the quantization of quadratic Hamiltonians (see \cite{MR1901075} for more details).

Let $z$ be a formal variable. We consider the space $\mathbb{H}$ which is the space of Laurent polynomials in one variable $z$ with coefficients in $H^*(X;\mathbb{C})$.
We define the symplectic form $\Omega$ on $\mathbb{H}$ by
\[\Omega(f,g)=\Res_{z=0}\eta( f(-z),g(z))dz\]
for any $f,g\in\mathbb{H}$, where $\Res_{z=0}$ means taking the residue at $z=0$. Note that we have $\Omega(f,g)=-\Omega(g,f)$. There is a natural polarization $\mathbb{H}=\mathbb{H}_+\oplus \mathbb{H}_-$ corresponding to the decomposition $f(z,z^{-1})=f_+(z)+f_-(z^{-1})z^{-1}$ of Laurent polynomials into polynomial and polar parts. It is easy to see that $\mathbb{H}_+$ and $\mathbb{H}_-$ are both Lagrangian subspaces of $\mathbb{H}$ with respect to $\Omega$.

Introduce a Darboux coordinate system $\{p^\beta_m,q^\alpha_n\}$ on $\mathbb{H}$ with respect to the above polarization. A general element $f\in\mathbb{H}$ can be written in the form
\[\sum_{m,\beta}p^\beta_m\phi^\beta(-z)^{-m-1}+\sum_{n,\alpha}q^\alpha_n\phi_{\alpha} z^n,\]
where $\{{\phi}^\alpha\}$ is the dual basis of $\{\phi_\alpha\}$. Denote
\begin{eqnarray*}
\mathbf{p}(z):=\sum_{m,\beta}p^\beta_m\phi^\beta(-z)^{-m-1},\quad
\mathbf{q}(z):=\sum_{n,\alpha}q^\alpha_n\phi_{\alpha} z^n.
\end{eqnarray*}
For convenience, we always identify the space $\mathbb{H}_{+}$ with the big phase space $\mathcal{P}$ by identifying the basis $\{z^n\phi_\alpha\}$ with $\{\tau_n(\phi_\alpha)\}$. 
% We introduced the formal power series $\mathbf{t}(z)=t_0+t_1z+t_2z^2+\cdots$, which appears as the insertion in the genus $g$ correlations,  with $z$ replaced by $\psi$ class.  By  the dilaton shift: $\mathbf{q}(z)=\mathbf{t}(z)-\mathbf{1} z$, we relate $\mathbf{t}(z)$ to the Darboux coordinates. The dilaton shift appears naturally in the quantization procedure. 

Let $A:\mathbb{H}\to\mathbb{H}$ be a linear infinitesimally symplectic transformation, i.e. $\Omega(Af,g)+\Omega(f,Ag)=0$ for any $f,g\in\mathbb{H}$. Under the Darboux coordinates, the quadratic Hamiltonian
\[ h_{A}: f\mapsto\frac{1}{2}\Omega(Af,f)\]
is a series of homogeneous degree two monomials in $\{p^\beta_m,q^\alpha_n\}$. Let $\hbar$ be a formal variable and define the quantization of quadratic monomials as
\begin{align*}
\widehat{q^\beta_mq_n^\alpha}=\frac{q^\beta_m q_n^\alpha}{\hbar^2}, \widehat{q^\beta_m p_n^\alpha}=q^\beta_m\frac{\partial}{\partial q^\alpha_n},
  \widehat{p^\beta_m p_n^\alpha}=\hbar^2 \frac{\partial}{\partial q^\beta_m}\frac{\partial}{\partial q^\alpha_n}.  
\end{align*}
We define the quantization $\widehat{A}$ by extending the above equalities linearly. The differential operators $\widehat{q^\beta_mq_n^\alpha},\widehat{q^\beta_mp_n^\alpha},\widehat{p^\beta_mp_n^\alpha}$ act on the so called Fock space  which is the space of formal functions in $\mathbf{q}(z)\in\mathbb{H}_+$. For example, the descendent potential $\mathcal{D}(\mathbf{t})$ is regarded as elements in  Fock space via the dilaton shift $\mathbf{q}(z)=\mathbf{t}(z)-\mathbf{1} z$. The quantization of a symplectic transform of the form $\exp(A)$, with $A$ infinitesimally symplectic, is defined to be $\exp(\widehat{A})=\sum_{n\geq 0}\frac{\widehat{A}^n}{n!}$.

The quantization procedure gives  a projective representation of the Lie algebra of
infinitesimal symplectomorphisms. For infinitesimal symplectomorphisms $F$ and $G$ we have
\begin{align*}
[\widehat{F},\widehat{G}]=\{F,G\}^{\widehat{\empty}}  +C(h_F,h_G)  
\end{align*}
where $\{\,,\,\}$ is the Lie bracket, $[\,,\, ]$ is the commutator, $h_F$ and $h_G$ is the
quadratic Hamiltonian corresponding to $F$ and $G$, and $C$ is the cocycle satisfy
\begin{align*}
C(p_\alpha^2,q_\alpha^2)=2,\quad  C(p_\alpha p_\beta,q_\alpha q_\beta)=1 \quad \text{for} \, \alpha\neq\beta,  
\end{align*}
and $C=0$ for any other pairs of quadratic Darboux monomials.

Using the above formalism, Givental rewrite Virasoro operators $L_n$ as quantization of quadratic hamitonians (cf. \cite{MR1901075}). Define $l_0=z\frac{d}{dz}+\mu+\frac{1}{2}+\frac{\rho }{z}$, where $\mu: H^*(X)\rightarrow H^*(X)$ is the Hodge grading
operator, and $\rho$ is the operator of multiplication by $c_1(X)$ using the classical cup product on $H^*(X)$. Considering a sequence of infinitesimal symplectic tranformation $\{l_n:=l_0(zl_0)^{n}\}_{n\geq0}$, then up to  a minus sign, Virasoro operators $\{L_n\}$ is exactly the quantization of quadratic hamitonians of infinitesimal symplectic transformation $\{l_n\}$. 

It is obvious that multiplication by $z^{2k-1}$ defines an infinitesimal symplectic transformation on $(\mathbb{H},\Omega)$, and we denote by $\widehat{z^{2k-1}}$ the
corresponding quantization. Givental pointed out that Faber-Panharpande formula~\eqref{eqn:FP-formula}
can be reformulated as
\begin{align}\label{eqn:FP-rel-quan}
\mathcal{D}^{\mathbb{E}}(\mathbf{t},\mathbf{s})=\exp\left(\frac{B_{2k}}{(2k)!}\sum_{k\geq1}\mathfrak{s}_{k}\widehat{z^{2k-1}}\right)\mathcal{D}(\mathbf{t}).    
\end{align}
In fact, both sides of equation~\eqref{eqn:FP-rel-quan} satisfy differential equation~\eqref{eqn:FP-formula} and they are equal at the initial condition $\mathbf{s}=0$.  
\section{Virasoro constraints for  Gromov-Witten invariants and Hodge integrals}\label{sec:Virasoro-gw-Hodge}
\subsection{Virasoro constraints for descendant Gromov-Witten invariants}\label{subsec:virasoro-gw-inva}
In this subsection, we recall the constructions of Virasoro operators by Eguchi, Hori, and
Xiong, modified by Katz.
For our purpose, we define 
\[b_\alpha:=p_\alpha-\frac{d-1}{2},\,\, b^\alpha:=1-b_\alpha\]
and functions
\[\hat{e}_j(n,\alpha)(x)=e_j(x+b_\alpha,...,x+n+b_\alpha),\quad \check{e}_j(n,\alpha)(x)=e_j(x+b^\alpha,...,x+n+b^\alpha).\]
Here $e_j(x_0,...,x_n)=[t^{n+1-j}]\prod_{i=0}^{n}(t+x_i)$ denotes the $j$-th elementary symmetric function of $x_0,...,x_n$ and $[t^{n+1-j}]$ denotes taking the coefficient of $t^{n+1-j}$. Let $\mathcal{C}=(\mathcal{C}_\alpha^\beta)$ be the matrix of
multiplication by the first Chern class $c_1(X)$ in the ordinary cohomology ring, i.e.
$c_1(X)\cup \phi_\alpha=\sum_{\beta}\mathcal{C}_\alpha^\beta\phi_\beta$.
The differential operators are defined to be
\begin{align*}
L_n
=&\sum_{j=0}^{n+1}\sum_{r,\alpha}\left(\hat{e}_{n+1-j}(n,\alpha)\right)(r) \tilde{t}_r^\alpha \tau_{r+n-j}(c_1(X)^{j}\cup\phi_\alpha)
\\&-\frac{\hbar^2}{2}\sum_{j=0}^{n+1}\sum_{s,\alpha}(-1)^{-s}\left(\check{e}_{n+1-j}(n,\alpha)\right)(-s-1)\tau_s(\phi_\alpha)\tau_{-s-1+n-j}(c_1(X)^{j}\cup\phi^{\alpha})
\\&+\frac{1}{2\hbar^2}\sum_{\alpha,\beta}(\mathcal{C}^{n+1})_{\alpha\beta}t_0^\alpha t_0^\beta
+\frac{\delta_{n,0}}{24}
\int_{X}\left(\frac{3-d}{2}c_d(X)-c_1(X)\cup c_{d-1}(X)\right),\quad n\geq-1
\end{align*}
which satisfy the Virasoro bracket  relation
\begin{align*}
[L_n,L_m]=(n-m)L_{m+n},\quad n\geq-1.    
\end{align*}
The Virasoro conjecture states that: 
For any smooth projective variety, 
\[L_n\mathcal{D}(\mathbf{t})=0, \quad n\geq-1.\]
It is well known that for any compact symplectic manifold $L_n\mathcal{D}(\mathbf{t})=0$ for $n=-1$ or $0$.  The first
equation (i.e. for $n=-1$) is the string equation. The second equation (i.e. for $n=0$)
is derived from  divisor equation, dilaton equation and selection rule.

Since the appearence of Virasoro conjecture, a lot of  progress has been made. In \cite{MR1690740}, it was first proved that genus-0 Virasoro constraints hold for any compact symplectic manifold (see also \cite{MR1740678}, \cite{MR1718143}, \cite{MR2115767}). The genus-1 Virasoro constraints for manifolds with semisimple quantum cohomology was proved by Dubrovin and Zhang \cite{MR1740678} (see also \cite{MR1815723} and \cite{MR2234884}). The genus-2 Virasoro constraints for manifolds with semisimple quantum cohomology was proved by Liu by solving universal equations (cf. \cite{MR2306043}). In \cite{MR1866444}, using localization techniques, Givental proposed a quantization formula for descendant potential function and  proved all genus Virasoro Conjecture for toric Fano manifolds. Afterthat, using techniques in topological field theory, Teleman proved Givental's formula and Virasoro constraints hold for  all compact K{\"a}hler manifolds with semisimple quantum cohomology algebras in \cite{MR2917177}.  It has also been proved in \cite{MR2208418} all genus Virasoro conjecture hold  
for any nonsingular curves, using degeneration techniques and  relative Gromov-Witten theory.
% \subsection{Virasoro conjecture for Hodge integrals}
\subsection{Virasoro operators for Hodge integrals}
Formally, we define differential operators
\begin{align*}
&L_n^{\mathbb{E}}:=\exp(\sum_{k\geq1}\frac{B_{2k}}{(2k)!}\mathfrak{s}_{k}\widehat{z^{2k-1}})L_n\exp(-\sum_{k\geq1}\frac{B_{2k}}{(2k)!}\mathfrak{s}_{k}\widehat{z^{2k-1}}).
\end{align*}
It is obvious that operators $\{L_n^{\mathbb{E}}\}_{n\geq-1}$ satisfies Virasoro bracket relation
\begin{align*}
[L_n^{\mathbb{E}},L_m^{\mathbb{E}}]=(n-m)L_{n+m}^{\mathbb{E}} .   
\end{align*}
Then the  Virasoro conjecture for pure descendant Gromov-Witten invariants
\begin{align*}
L_n \mathcal{D}(\mathbf{t})=0    
\end{align*}
is equivalent to the Virasoro conjecture for Hodge integral
\begin{align*}
L_n^{\mathbb{E}}\mathcal{D}^{\mathbb{E}}(\mathbf{t},\mathbf{s})=0.    
\end{align*}
The explicit formula of operator $L_n^{\mathbb{E}}$ is given in the following proposition.
\begin{proposition} For $n\geq-1$,
\begin{align}\label{eqn:expression-L_n^E}
L_n^{\mathbb{E}}
=&L_n+\delta_{n,-1}\frac{\chi(X)}{24}\mathfrak{s}_{1}+\sum_{m=1}^{n+1}\frac{(-1)^{m}}{m!}\sum_{k_1,...,k_m\ge1}\left(\prod_{i=1}^{m}\left(\frac{B_{2k_i}}{(2k_i)!}\mathfrak{s}_{k_i}\right)\right)\nonumber
\\&\hspace{10pt}\cdot\Bigg(\sum_{j=0}^{n+1}\sum_{r,\alpha}\left(\prod_{i=1}^{m}\Delta_{2k_i-1}\hat{e}_{n+1-j}(n,\alpha)\right)(r) \tilde{t}_r^\alpha \tau_{\sum_{i=1}^{m}(2k_i-1)+r+n-j}(c_1(X)^{j}\cup\phi_\alpha)\nonumber
\\&\hspace{10pt}-\frac{\hbar^2}{2}\sum_{j=0}^{n+1}\sum_{s,\beta}(-1)^{-s}\left(\prod_{i=1}^{m}\Delta_{2k_i-1}\check{e}_{n+1-j}(n,\beta)\right)(-s-1)\tau_s(\phi_\beta)\tau_{\sum_{i=1}^{m}(2k_i-1)-s-1+n-j}(c_1(X)^{j}\cup\phi^{\beta})\Bigg).
\end{align}
Here $\chi(X)$ is the Euler characteristic  of  $X$. In particular,
\begin{align*}
L_0^{\mathbb{E}}
=&L_0-\sum_{k\ge1}\frac{(2k-1)B_{2k}}{(2k)!}\mathfrak{s}_{k}\Bigg(\sum_{r,\alpha} \tilde{t}_r^\alpha \tau_{r+2k-1}(\phi_\alpha)
-\frac{\hbar^2}{2}\sum_{s,\beta}(-1)^{-s}\tau_s(\phi_\beta)\tau_{2k-2-s}(\phi^{\beta})\Bigg).
\end{align*}
\end{proposition}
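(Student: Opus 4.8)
The plan is to compute the conjugation in the definition $L_n^{\mathbb{E}} = \exp(A)L_n\exp(-A)$ directly, where $A = \sum_{k\geq 1}\frac{B_{2k}}{(2k)!}\mathfrak{s}_k\widehat{z^{2k-1}}$, by means of the Hadamard expansion $L_n^{\mathbb{E}} = \sum_{m\geq 0}\frac{1}{m!}\mathrm{ad}_A^m(L_n)$ with $\mathrm{ad}_A(\cdot) = [A,\cdot]$. Since $A$ is a linear combination of the $\widehat{z^{2k-1}}$, each operator $\mathrm{ad}_A^m$ expands into a sum over tuples $(k_1,\dots,k_m)$ of the scalar $\prod_{i=1}^m\frac{B_{2k_i}}{(2k_i)!}\mathfrak{s}_{k_i}$ times the iterated commutator $[\widehat{z^{2k_m-1}},[\dots,[\widehat{z^{2k_1-1}},L_n]\dots]]$. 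This immediately produces the overall factor $\frac{1}{m!}$ and the product $\prod_{i=1}^m\bigl(\frac{B_{2k_i}}{(2k_i)!}\mathfrak{s}_{k_i}\bigr)$ visible in the statement, so everything reduces to understanding the iterated commutators of the $\widehat{z^{2k-1}}$ against $L_n$.

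The computational heart is the single commutator $[\widehat{z^{2k-1}},L_n]$. Recalling that $L_n$ is, up to sign, the quantization $\widehat{l_n}$ of the infinitesimal symplectomorphism $l_n = l_0(zl_0)^n$, and that $\widehat{z^{2k-1}}$ is the quantization of multiplication by $z^{2k-1}$, I would invoke the projective-representation identity $[\widehat F,\widehat G] = \{F,G\}^{\widehat{}} + C(h_F,h_G)$ recalled in the Givental-quantization subsection. The bracket then splits into the quantization of the operator commutator $[z^{2k-1},l_n]$ on $\mathbb{H}$ together with a central cocycle correction $C(h_{z^{2k-1}},h_{l_n})$. For the first piece, the key observation is that under the identification $z^r\phi_\alpha\leftrightarrow\tau_r(\phi_\alpha)$ multiplication by $z^{2k-1}$ shifts the descendant level $r\mapsto r+(2k-1)$; commuting it through the $z\frac{d}{dz}$- and $\rho/z$-dependence of $l_n$ therefore replaces each coefficient function $\hat e_{n+1-j}(n,\alpha)(r)$ (and likewise $\check e_{n+1-j}$) by its shifted difference $\hat e(r+2k-1)-\hat e(r) = (\Delta_{2k-1}\hat e)(r)$. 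This is exactly how the difference operator $\Delta_{2k-1}$ enters, while the powers $\mathcal{C}^j$ of multiplication by $c_1(X)$ and the level shifts are carried along unchanged.

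Iterating is then clean because the multiplication operators $z^{2k-1}$ all commute with one another, so the nested commutators generate no cross terms: each additional $\mathrm{ad}_{\widehat{z^{2k_i-1}}}$ simply applies one more factor $\Delta_{2k_i-1}$ to the coefficient functions, and the cocycle constants produced at intermediate stages are central and hence annihilated by all further commutators. This yields the products $\prod_{i=1}^m\Delta_{2k_i-1}\hat e_{n+1-j}(n,\alpha)$ and $\prod_{i=1}^m\Delta_{2k_i-1}\check e_{n+1-j}(n,\beta)$ in the statement, with the sign $(-1)^m$ tracking the $\mathrm{ad}_A$ expansion against the difference-operator convention. The series truncates at $m = n+1$ because $\hat e_{n+1-j}(n,\alpha)(x)$ is a polynomial of degree $n+1-j\leq n+1$ in $x$ and each $\Delta_{2k_i-1}$ lowers this degree by one, so more than $n+1-j$ difference operators annihilate it; this simultaneously controls the inner sum over $j$. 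The cocycle piece, being central, contributes only field-independent constants: isolating the $p^2$-versus-$q^2$ pairing governed by $C(p_\alpha p_\beta,q_\alpha q_\beta)$ and matching against $\int_X c_d(X) = \chi(X)$ reproduces precisely the anomaly term $\delta_{n,-1}\frac{\chi(X)}{24}\mathfrak{s}_1$; consistently, for $n=-1$ the field-dependent sum $\sum_{m=1}^{0}$ is empty, so the whole correction is this single cocycle constant. As a final check I would specialize to $n=0$, where $m=1$ is forced and, among $j\in\{0,1\}$, only $j=0$ survives since $\hat e_0\equiv 1$ is killed by $\Delta_{2k-1}$ while $\Delta_{2k-1}(x+b_\alpha)=2k-1$, collapsing the general formula to the displayed expression for $L_0^{\mathbb{E}}$.

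The hard part will be the bookkeeping in the second and third steps: verifying that the operator commutator $[z^{2k-1},l_n]$ on $\mathbb{H}$ reorganizes exactly into the difference-operator form with the correct Chern-class powers $\mathcal{C}^j$ and the correct level shift $\sum_{i=1}^m(2k_i-1)+r+n-j$, and confirming that the central cocycle evaluates to the precise numerical anomaly. These are lengthy but conceptually routine symplectic-linear-algebra computations, with no essential gap once the projective-representation framework and the Faber--Pandharipande relation~\eqref{eqn:FP-rel-quan} are in place.
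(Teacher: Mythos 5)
Your proposal is correct and follows essentially the same route as the paper's proof: expand the conjugation via Baker--Campbell--Hausdorff into iterated commutators with the $\widehat{z^{2k-1}}$, compute the classical brackets $\{\dots\{l_n,z^{2k_1-1}\},\dots,z^{2k_m-1}\}$ (which is exactly where the difference operators $\Delta_{2k_i-1}$ enter, since multiplication by $z^{2k-1}$ shifts the descendant level in the coefficient functions), quantize the resulting quadratic Hamiltonians, identify the central cocycle anomaly $\delta_{n,-1}\frac{\chi(X)}{24}\mathfrak{s}_1$, and truncate at $m=n+1$ by the degree count on $\hat{e}_{n+1-j}$. The one place you are terse is the vanishing of the surviving cocycle for $n\geq0$ (for $m=1$ the $q_0q_0$-versus-$p_0p_0$ pairing forces $k_1=1$ and produces $\mathrm{tr}(\mathcal{C}^{n+1})$, which vanishes by nilpotency of multiplication by $c_1(X)$, while for $m\geq2$ the intermediate brackets shift degrees nonnegatively so no $q^2$ terms remain), but this matches the level of detail in the paper's own proof, which computes the cocycle explicitly only at $n=-1$.
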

\begin{proof}
We firstly focus on the cases $n\geq0$. 
By the Baker-Campbell-Hausdorff formula, 
\begin{align*}
\exp(A)B\exp(-A)=\exp(ad_{A})(B) 
\end{align*}
we have
\begin{align}\label{eqn:L_nE=L_n+L_nk1km}
&L_n^{\mathbb{E}}=\exp(\sum_{k\geq1}\frac{B_{2k}}{(2k)!}\mathfrak{s}_{k}\widehat{z^{2k-1}})L_n\exp(-\sum_{k\geq1}\frac{B_{2k}}{(2k)!}\mathfrak{s}_{k}\widehat{z^{2k-1}})\nonumber
% \\=&L_n+[\sum_{k\geq1}\frac{B_{2k}}{(2k)!}\mathfrak{s}_k\widehat{z^{2k-1}},L_n]+\frac{1}{2!}[\sum_{k\geq1}\frac{B_{2k}}{(2k)!}\mathfrak{s}_k\widehat{z^{2k-1}},[\sum_{k\geq1}\frac{B_{2k}}{(2k)!}\mathfrak{s}_k\widehat{z^{2k-1}},L_n]]+...
\\=&L_n+\sum_{m=1}^{\infty}\frac{1}{m!}\sum_{k_1,...,k_m\geq1}\frac{B_{2k_1}}{(2k_1)!}...\frac{B_{2k_m}}{(2k_m)!}\mathfrak{s}_{k_1}...\mathfrak{s}_{k_m}[\widehat{z^{2k_1-1}},[\widehat{z^{2k_2-1}},...,[\widehat{z^{2k_m-1}},L_n]...]]\nonumber
\\=&L_n+\sum_{m=1}^{\infty}\frac{(-1)^{m}}{m!}\sum_{k_1,...,k_m\ge1}\left(\prod_{i=1}^{m}\left(\frac{B_{2k_i}}{(2k_i)!}\mathfrak{s}_{k_i}\right)\right)L_{n;k_1,...,k_m}
\end{align}  
where  operators $L_{n;k_1,...,k_m}$ is defined by
\[L_{n;k_1,...,k_m}:=[...[[L_n, \widehat{z^{2k_1-1}}],\widehat{z^{2k_2-1}}],...,\widehat{z^{2k_m-1}}].\]
We will see below the summation over $m$ in equation~\eqref{eqn:L_nE=L_n+L_nk1km} is finite. 
Define operators 
\begin{align*}
l_{n;k_1,...,k_m}:= \{...\{\{ l_n,z^{2k_1-1}\},z^{2k_2-1}\},...,z^{2k_m-1}\}   
\end{align*}
whose quantization is $-L_{n;k_1,...,k_m}$. 
It is direct to compute 
\begin{align*}
& l_{n;k_1,...,k_m}(z^r\phi_\alpha) \\=&\sum_{j=0}^{n+1}\left(\Delta_{2k_1-1}...\Delta_{2k_m-1}\hat{e}_{n+1-j}(n,\alpha)\right)(r) z^{\sum_{i=1}^{m}(2k_i-1)+r+n-j} c_1(X)^j\cup\phi_\alpha
\end{align*}
For a generic point  $f=\sum_{r,\alpha}q_r^\alpha z^r\phi_\alpha+\sum_{s,\beta}p_s^\beta(-z)^{-s-1}\phi^\beta$ in $\mathbb{H}((z))$, we have
\begin{align*}
&l_{n;k_1,...,k_m}f
\\=&\sum_{j=0}^{n+1}\sum_{r,\alpha}\left(\Delta_{2k_1-1}...\Delta_{2k_m-1}\hat{e}_{n+1-j}(n,\alpha)\right)(r) q_r^\alpha z^{\sum_{i=1}^{m}(2k_i-1)+r+n-j}c_1(X)^j\cup\phi_\alpha
\\&+\sum_{j=0}^{n+1}\sum_{s,\beta}(-1)^{-s-1}\left(\Delta_{2k_1-1}...\Delta_{2k_m-1}\check{e}_{n+1-j}(n,\beta)\right)(-s-1)p_s^{\beta}z^{\sum_{i=1}^{m}(2k_i-1)-s-1+n-j}c_1(X)^j\cup\phi^\beta.
\end{align*}
So the quadratic hamitonian of $l_{n;k_1,...,k_m}$ equals to 
\begin{align*}
&h_{l_{n;k_1,...,k_m}}(f)=\frac{1}{2}\Omega(l_{n;k_1,...,k_m}f|_{z\rightarrow-z},f)
\\=&-\sum_{j=0}^{n+1}\sum_{r,\alpha,\beta}\left(\Delta_{2k_1-1}...\Delta_{2k_m-1}\hat{e}_{n+1-j}(n,\alpha)\right)(r)q_r^\alpha p_{\sum_{i=1}^{m}(2k_i-1)+r+n-j}^\beta \langle c_1(X)^j\cup\phi_\alpha,\phi^\beta\rangle 
\\&+\frac{1}{2}\sum_{j=0}^{n+1}\sum_{s,\beta,\beta'}(-1)^{-s}\left(\Delta_{2k_1-1}...\Delta_{2k_m-1}\check{e}_{n+1-j}(n,\beta)\right)(-s-1)p_s^{\beta}p_{\sum_{i=1}^{m}(2k_i-1)-s-1+n-j}^{\beta'}\langle c_1(X)^{j}\cup\phi^\beta,\phi^{\beta'}\rangle
\\&-\frac{1}{2}\delta_{m,0}\sum_{\alpha,\beta} q_0^\alpha q_0^\beta\langle c_1(X)^{n+1}\cup\phi_\alpha,\phi_\beta\rangle.
 \end{align*}
Its  quantization equals to 
\begin{align*}
&-L_{n;k_1,...,k_m}
\\=&-\sum_{j=0}^{n+1}\sum_{r,\alpha,\beta}\left(\prod_{i=1}^{m}\Delta_{2k_i-1}\hat{e}_{n+1-j}(n,\alpha)\right)(r) \langle c_1(X)^{j}\cup\phi_\alpha,\phi^\beta\rangle\tilde{t}_r^\alpha \frac{\partial}{\partial t^\beta_{\sum_{i=1}^{m}(2k_i-1)+r+n-j}}
\\&+\frac{\hbar^2}{2}\sum_{j=0}^{n+1}\sum_{s,\beta,\beta'}(-1)^{-s}\left(\prod_{i=1}^{m}\Delta_{2k_i-1}\check{e}_{n+1-j}(n,\beta)\right)(-s-1)\langle c_1(X)^{j}\cup\phi^{\beta},\phi^{\beta'}\rangle
\frac{\partial}{\partial t_s^\beta}\frac{\partial}{\partial t_{\sum_{i=1}^{m}(2k_i-1)-s-1+n-j}^{\beta'}}
\\&-\frac{1}{2\hbar^2}\delta_{m,0}\sum_{\alpha,\beta} t_0^\alpha t_0^\beta\langle c_1(X)^{n+1}\cup\phi_\alpha,\phi_\beta\rangle.
\end{align*}
Notice for $m=0$, the operator $L_{n;\emptyset}$ is slightly different from Virasoro operator $L_n$ by a constant 
\begin{align*}
L_n=L_{n;\emptyset}+\frac{\delta_{n,0}}{24}\int_{X}\left(\frac{3-d}{2}c_d(X)-c_1(X)\cup c_{d-1}(X)\right).    
\end{align*}

For $m\geq1$, under the identification $\tau_n(\phi_\alpha)=\frac{\partial}{\partial t_n^\alpha}$,
\begin{align*}
&L_{n;k_1,...,k_m}
\\=&\sum_{j=0}^{n+1}\sum_{r,\alpha}\left(\prod_{i=1}^{m}\Delta_{2k_i-1}\hat{e}_{n+1-j}(n,\alpha)\right)(r) \tilde{t}_r^\alpha \tau_{\sum_{i=1}^{m}(2k_i-1)+r+n-j}(c_1(X)^{j}\cup\phi_\alpha)
\\&-\frac{\hbar^2}{2}\sum_{j=0}^{n+1}\sum_{s,\beta}(-1)^{-s}\left(\prod_{i=1}^{m}\Delta_{2k_i-1}\check{e}_{n+1-j}(n,\beta)\right)(-s-1)\tau_s(\phi_\beta)\tau_{\sum_{i=1}^{m}(2k_i-1)-s-1+n-j}(c_1(X)^{j}\cup\phi^{\beta}).
\end{align*}
So we have for $n\geq0$
\begin{align*}
L_n^{\mathbb{E}}
=&L_n+\sum_{m=1}^{n+1}\frac{(-1)^{m}}{m!}\sum_{k_1,...,k_m\ge1}\left(\prod_{i=1}^{m}\left(\frac{B_{2k_i}}{(2k_i)!}\mathfrak{s}_{k_i}\right)\right)
\\&\hspace{10pt}\cdot\Bigg(\sum_{j=0}^{n+1}\sum_{r,\alpha}\left(\prod_{i=1}^{m}\Delta_{2k_i-1}\hat{e}_{n+1-j}(n,\alpha)\right)(r) \tilde{t}_r^\alpha \tau_{\sum_{i=1}^{m}(2k_i-1)+r+n-j}(c_1(X)^{j}\cup\phi_\alpha)
\\&\hspace{10pt}-\frac{\hbar^2}{2}\sum_{j=0}^{n+1}\sum_{s,\beta}(-1)^{-s}\left(\prod_{i=1}^{m}\Delta_{2k_i-1}\check{e}_{n+1-j}(n,\beta)\right)(-s-1)\tau_s(\phi_\beta)\tau_{\sum_{i=1}^{m}(2k_i-1)-s-1+n-j}(c_1(X)^{j}\cup\phi^{\beta})\Bigg).
\end{align*}
For $n=-1$, things become subtle, since $l_{-1}=z^{-1}$ which is a negative power of $z$. For a generic point $f=\sum_{r,\alpha}q_r^\alpha z^r\phi_\alpha+\sum_{s,\beta}p_s^\beta(-z)^{-s-1}\phi^\beta$ in $\mathbb{H}((z))$, we have
\begin{align*}
&h_{l_{-1;\emptyset}}(f)
=-\sum_{r,\alpha}q_r^\alpha p_{r-1}^\alpha 
-\frac{1}{2}\sum_{\alpha,\beta}\eta_{\alpha\beta} q_0^\alpha q_0^\beta,
 \end{align*}
and
\begin{align*}
&h_{z^{2k_1-1}}(f)=-\sum_{r,\alpha}q_r^\alpha p_{r+2k_1-1}^\alpha 
+\frac{1}{2}\sum_{i,\alpha,\beta}(-1)^{i}\eta^{\alpha\beta} p_i^\alpha p_{2k_1-2-i}^\beta. 
\end{align*}
% It is easy to compute
% \begin{align*}
% C(h_{l_{-1;\emptyset}},h_{z^{2k_1-1}})=\frac{1}{2}str(Id)    
% \end{align*}
So
\begin{align*}
&C(h_{l_{-1;\emptyset}},h_{z^{2k_1-1}}) 
=-\frac{1}{4}\sum_{i,\alpha,\beta,\alpha',\beta'}\eta_{\alpha\beta}(-1)^{i}\eta^{\alpha'\beta'}C(q_0^\alpha q_0^{\beta}, p_i^{\alpha'} p^{\beta'}_{2k_1-2-i})
=\frac{1}{2}\delta_{k_1}^{1}\chi(X).\footnotemark
\end{align*}
\footnotetext{There is a subtlety here: we always assume $H^{odd}(X=0)$ for simplicity. For general case, all elements in Givental quantization are $\mathbb{Z}_2$ graded (cf. \cite{MR2276766}).}
Therefore,
\begin{align*}
L_{-1}^{\mathbb{E}}
=&L_{-1}+\frac{\chi(X)}{24}\mathfrak{s}_{1}.   
\end{align*}
\end{proof}
\begin{remark}
The explicit formula of $L_n^{\mathbb{E}}$ is an infinite sum over indices $k_1,...,k_{n+1}$ and $r$. It can be viewed as a formal power series of parameters $\mathbf{s}=(\mathfrak{s}_1,\mathfrak{s}_2,...)$.
\end{remark}
\begin{remark}
Formally, the  Virasoro conjecture for Hodge integrals  is equivalent to the Virasoro conjecture for descendant Gromov-Witten invariants. 
Despite these Virasoro conjectures are equivalent, the former one contains much more information, such as Hodge intgeral. 
It provides a large family of differential equations among Hodge integrals and descendant Gromov-Witten invariants for target varieties with semisimple quantum cohomology. Since there are infinite many parameters $\{\mathfrak{s}_k\}$, it is interesting to investigate to what  extent the Virasoro constraints for Hodge integrals hold for any smooth projective variety?  
\end{remark}
\begin{remark} 
A different expression for the  operator $L_n^{\mathbb{E}}$ was obtained  in \cite{MR4631415} for any calibrated semisimple Frobenius manifold. In fact, the authors in \cite{MR4631415} defined a Virasoro-like infinite dimensional Lie algebra and expressed  $L_n^{\mathbb{E}}$  as a linear combination of its basis. As a result, we see
Theorem~\ref{thm:hodge-semisimple} with $L_n^{\mathbb{E}}$ having a different expression  is a special case of \cite[Theorem 3]{MR4631415}. For the case when $X$ is a point, Theorem~\ref{thm:hodge-semisimple} was obtained in
\cite{zhouHodge}.
% In particular, if target $X$ equals to a point,  formula~\eqref{eqn:expression-L_n^E} recovers the similar formula considered in \cite{zhouHodge}. 
\end{remark}
Let $\Psi_{g,n}^{\mathbb{E}}(\mathbf{t},\mathbf{s})$ be the coefficient of $\hbar^{2g-2}$
in the Laurent expansion of $\mathcal{D}^{\mathbb{E}}(\mathbf{t},\mathbf{s})^{-1}L_n^{\mathbb{E}}\mathcal{D}^{\mathbb{E}}(\mathbf{t},\mathbf{s})$. That is
\[\mathcal{D}^{\mathbb{E}}(\mathbf{t},\mathbf{s})^{-1}L_n^{\mathbb{E}}\mathcal{D}^{\mathbb{E}}(\mathbf{t},\mathbf{s})
=\sum_{g\geq0}\hbar^{2g-2}\Psi_{g,n}^{\mathbb{E}}(\mathbf{t},\mathbf{s}).\]
Moreover, we can expand $\Psi_{g,n}^{\mathbb{E}}(\mathbf{t},\mathbf{s})$ as a formal power series of $\mathbf{s}$
\[\Psi_{g,n}^{\mathbb{E}}(\mathbf{t},\mathbf{s})
=\sum_{m\geq0}\sum_{k_1,...,k_m\geq0}
\frac{1}{m!}\mathfrak{s}_{k_1}...\mathfrak{s}_{k_m}\Psi_{g,n;k_1,...k_m}^{\mathbb{E}}(\mathbf{t}).\]
Therefore, the vanishing of $\Psi_{g,n}^{\mathbb{E}}(\mathbf{t},\mathbf{s})$ is equivalent to the vanishing of $\Psi_{g,n;k_1,...,k_m}^{\mathbb{E}}(\mathbf{t})$ for all $k_1,...,k_m$. Furthermore, the vanishing of $\Psi_{g,n;\emptyset}^{\mathbb{E}}(\mathbf{t})$ is exactly the Virasoro constraints for descendant Gromov-Witten invariants. 
\begin{proposition}
    \label{pro:psi-gnE}For $n\geq1$, 
\begin{align}\label{eqn:psignEts-formula}
&\Psi_{g,n}^{\mathbb{E}}(\mathbf{t},\mathbf{s})\nonumber   \\=&\sum_{m=0}^{n+1}\frac{(-1)^{m}}{m!}\sum_{k_1,...,k_m\ge1}\left(\prod_{i=1}^{m}\left(\frac{B_{2k_i}}{(2k_i)!}\mathfrak{s}_{k_i}\right)\right)\left\{\sum_{j=0}^{n+1}\sum_{r,\alpha,\beta}\left(\prod_{i=1}^{m}\Delta_{2k_i-1}\hat{e}_{n+1-j}(\alpha)\right)(r)\langle c_1(X)^{j}\cup\phi_\alpha,\phi^\beta\rangle\nonumber
\right.\\&\left.\hspace{20pt}\cdot\tilde{t}_r^\alpha \sum_{l\geq0}\sum_{a_1,...,a_l\geq1}\frac{\mathfrak{s}_{a_1}...\mathfrak{s}_{a_l}}{l!}\frac{\partial F_{g;a_1,...,a_l}^{\mathbb{E}}(\mathbf{t})}{\partial t^\beta_{\sum_{i=1}^{m}(2k_i-1)+r+n-j}}\nonumber
\right.\\&\left.-\frac{1}{2}\sum_{g_1+g_2=g}\sum_{j=0}^{n+1}\sum_{s,\beta,\beta'}(-1)^{-s}\left(\prod_{i=1}^{m}\Delta_{2k_i-1}\check{e}_{n+1-j}(\beta)\right)(-s-1)\langle c_1(X)^{j}\cup\phi^{\beta},\phi^{\beta'}\rangle\nonumber
\right.\\&\left.\hspace{50pt}\cdot
\left(\sum_{l\geq0}\sum_{a'_1,...,a'_l\geq1}\frac{\mathfrak{s}_{a'_1}...\mathfrak{s}_{a'_l}}{l!}\frac{\partial F_{g_1;a'_1,...,a'_l}^{\mathbb{E}}(\mathbf{t})}{\partial t_s^\beta}\right)\left(\sum_{l\geq0}\sum_{a''_1,...,a''_l\geq1}\frac{\mathfrak{s}_{a''_1}...\mathfrak{s}_{a''_l}}{l!}\frac{\partial F_{g_2;a''_1,...,a''_l}^{\mathbb{E}}(\mathbf{t})}{\partial t_{\sum_{i=1}^{m}(2k_i-1)-s-1+n-j}^{\beta'}} 
\right)\nonumber
\right.\\&\left.-\frac{1}{2}\sum_{j=0}^{n+1}\sum_{s,\beta,\beta'}(-1)^{-s}\left(\prod_{i=1}^{m}\Delta_{2k_i-1}\check{e}_{n+1-j}(\beta)\right)(-s-1)\langle c_1(X)^{j}\cup\phi^{\beta},\phi^{\beta'}\rangle
\nonumber\right.\\&\left.
\hspace{50pt}\cdot\sum_{l\geq0}\sum_{a_1,...,a_l\geq1}\frac{\mathfrak{s}_{a_1}...\mathfrak{s}_{a_l}}{l!}\frac{\partial^2 F_{g-1;a_1,...,a_l}^{\mathbb{E}}(\mathbf{t})}{\partial t_s^\beta\partial t_{\sum_{i=1}^{m}(2k_i-1)-s-1+n-j}^{\beta'}}\right\}
\nonumber\\&+\frac{1}{2}\delta_{g,0}\sum_{\alpha,\beta} t_0^\alpha t_0^\beta\langle c_1(X)^{n+1}\cup\phi_\alpha,\phi_\beta\rangle,
\end{align} 
where $F_{g;k_1,...,k_m}^{\mathbb{E}}$ is the generating function defined as follows
\[F_{g;k_1,...,k_m}^{\mathbb{E}}:=\langle\exp(\sum_{n,\alpha}t_n^\alpha\tau_n(\phi_\alpha));\ch_{2k_1-1}(\mathbb{E})...\ch_{2k_m-1}(\mathbb{E})\rangle_g.\]
\end{proposition}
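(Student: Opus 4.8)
The plan is to start from the explicit expression~\eqref{eqn:expression-L_n^E} for $L_n^{\mathbb{E}}$ and conjugate it by the total Hodge potential $\mathcal{D}^{\mathbb{E}}$, then extract the coefficient of $\hbar^{2g-2}$. Since $n\geq1$, the constant terms $\delta_{n,0}$ and $\delta_{n,-1}$ carried by $L_n$ vanish, so $L_n^{\mathbb{E}}$ decomposes into exactly three pieces: a first-order part of the form $\tilde{t}_r^\alpha\tau_\bullet(c_1(X)^j\cup\phi_\alpha)$ (including the $m=0$ contribution, which is the linear part of $L_n$ itself), a second-order part carrying the prefactor $-\tfrac{\hbar^2}{2}$, and the purely quadratic term $\tfrac{1}{2\hbar^2}(\mathcal{C}^{n+1})_{\alpha\beta}t_0^\alpha t_0^\beta$. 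I would treat each piece by the standard conjugation rules.

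Writing $S=\log\mathcal{D}^{\mathbb{E}}=\sum_{g\geq0}\hbar^{2g-2}F_g^{\mathbb{E}}$, the key algebraic facts are
\[
(\mathcal{D}^{\mathbb{E}})^{-1}\frac{\partial}{\partial t_a^\alpha}\mathcal{D}^{\mathbb{E}}=\frac{\partial S}{\partial t_a^\alpha},\qquad (\mathcal{D}^{\mathbb{E}})^{-1}\frac{\partial^2}{\partial t_a^\alpha\partial t_b^\beta}\mathcal{D}^{\mathbb{E}}=\frac{\partial S}{\partial t_a^\alpha}\frac{\partial S}{\partial t_b^\beta}+\frac{\partial^2 S}{\partial t_a^\alpha\partial t_b^\beta},
\]
both of which follow at once from $\partial_t\mathcal{D}^{\mathbb{E}}=\mathcal{D}^{\mathbb{E}}\,\partial_t S$. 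Applying the first identity to the linear part, and using $\tau_k(c_1(X)^j\cup\phi_\alpha)=\sum_\beta\langle c_1(X)^j\cup\phi_\alpha,\phi^\beta\rangle\,\partial/\partial t_k^\beta$, produces the pairings $\langle c_1(X)^j\cup\phi_\alpha,\phi^\beta\rangle$ that appear in the statement; applying the second identity to the $-\tfrac{\hbar^2}{2}$ part produces both a product term $\partial_t S\,\partial_{t'}S$ and a connected term $\partial_t\partial_{t'}S$.

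The genus is then sorted out purely by $\hbar$-power counting. In $S$ each $F_g^{\mathbb{E}}$ carries $\hbar^{2g-2}$; the linear part carries no extra power of $\hbar$, so its coefficient of $\hbar^{2g-2}$ is exactly $\partial F_g^{\mathbb{E}}/\partial t$. The $-\tfrac{\hbar^2}{2}$ in front of the quadratic part shifts powers by $+2$: the product term $\partial_t S\,\partial_{t'}S=\sum_{g_1,g_2}\hbar^{2g_1+2g_2-4}(\cdots)$ contributes $-\tfrac{1}{2}\sum_{g_1+g_2=g}\partial F_{g_1}^{\mathbb{E}}\,\partial F_{g_2}^{\mathbb{E}}$ at order $\hbar^{2g-2}$, while the connected term $\partial_t\partial_{t'}S=\sum_{g'}\hbar^{2g'-2}\partial_t\partial_{t'}F_{g'}^{\mathbb{E}}$ contributes $-\tfrac{1}{2}\partial^2 F_{g-1}^{\mathbb{E}}$ at the same order. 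The term $\tfrac{1}{2\hbar^2}(\mathcal{C}^{n+1})_{\alpha\beta}t_0^\alpha t_0^\beta$, carrying $\hbar^{-2}$, lands purely in genus zero and supplies the $\delta_{g,0}$ summand. Finally I would insert the $\mathbf{s}$-expansion $F_g^{\mathbb{E}}(\mathbf{t},\mathbf{s})=\sum_{l\geq0}\sum_{a_1,\ldots,a_l\geq1}\tfrac{1}{l!}\mathfrak{s}_{a_1}\cdots\mathfrak{s}_{a_l}F_{g;a_1,\ldots,a_l}^{\mathbb{E}}(\mathbf{t})$, which is immediate from the definition~\eqref{eqn:def0FgE}, to arrive at the stated formula.

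The computation is essentially bookkeeping, so the step demanding the most care---and the one I would single out as the main obstacle---is the genus splitting of the second-order term: the single operator $\partial^2/(\partial t_s^\beta\partial t_\bullet^{\beta'})$ yields, after conjugation, both a \emph{disconnected} product of one-point functions (whose genera add to $g$) and a \emph{connected} two-point function at genus $g-1$, and one must track the $\hbar$-grading precisely to assign each piece to the correct genus. I would also confirm that the sum over $m$ in $L_n^{\mathbb{E}}$ truncates at $m=n+1$, as already established in the derivation of~\eqref{eqn:expression-L_n^E}, so that the outer sum in the statement is finite and every manipulation above is legitimate term by term.
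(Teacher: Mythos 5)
Your proposal is correct and follows essentially the same route as the paper: the paper likewise conjugates the pieces $L_{n;k_1,\dots,k_m}$ (equivalently, the three parts of $L_n^{\mathbb{E}}$ from~\eqref{eqn:expression-L_n^E}) by $\mathcal{D}^{\mathbb{E}}$, uses exactly the identities $(\mathcal{D}^{\mathbb{E}})^{-1}\partial_t\mathcal{D}^{\mathbb{E}}=\partial_t S$ and $(\mathcal{D}^{\mathbb{E}})^{-1}\partial_t\partial_{t'}\mathcal{D}^{\mathbb{E}}=\partial_t S\,\partial_{t'}S+\partial_t\partial_{t'}S$ with the $\mathbf{s}$-expansion of $F_g^{\mathbb{E}}$, and sorts the disconnected ($g_1+g_2=g$), connected ($g-1$), and purely quadratic ($\delta_{g,0}$) contributions by the same $\hbar$-power counting before recombining via~\eqref{eqn:L_nE=L_n+L_nk1km}. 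You have correctly isolated the one genuinely delicate point, the genus splitting of the second-order term, and your handling of it agrees with the paper's.
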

\begin{proof}
By definition
\begin{align*}
&F_g^{\mathbb{E}}(\mathbf{t},\mathbf{s})   =\langle\exp(\sum_{n,\alpha}t_n^\alpha\tau_n(\phi_\alpha));\exp(\sum_{k\geq1}\mathfrak{s}_{k}\ch_{2k-1}(\mathbb{E}))\rangle_g
\\=&\sum_{m\geq0}\frac{1}{m!}\sum_{k_1,...,k_m\geq1}\mathfrak{s}_{k_1}...\mathfrak{s}_{k_m}\langle\exp(\sum_{n,\alpha}t_n^\alpha\tau_n(\phi_\alpha));\ch_{2k_1-1}(\mathbb{E})...\ch_{2k_m-1}(\mathbb{E})\rangle_g
\\=&\sum_{m\geq0}\frac{1}{m!}\sum_{k_1,...,k_m\geq1}\mathfrak{s}_{k_1}...\mathfrak{s}_{k_m}F_{g;k_1,...,k_m}^{\mathbb{E}}(\mathbf{t}),
\end{align*}
then
\begin{align*}
&\mathcal{D}^{\mathbb{E}}(\mathbf{t},\mathbf{s})=\exp(\sum_{g\geq0}\hbar^{2g-2}F_g^{\mathbb{E}}(\mathbf{t},\mathbf{s})) =\exp\left(\sum_{m\geq0}\frac{1}{m!}\sum_{k_1,...,k_m\geq1}\mathfrak{s}_{k_1}...\mathfrak{s}_{k_m}\sum_{g\geq0}\hbar^{2g-2}F_{g;k_1,...,k_m}^{\mathbb{E}}(\mathbf{t})\right).   
\end{align*}
Taking derivatives of $\mathcal{D}^{\mathbb{E}}(\mathbf{t},\mathbf{s})$, we get 
\begin{align*}
\mathcal{D}^{\mathbb{E}}(\mathbf{t},\mathbf{s})^{-1}\frac{\partial}{\partial t_n^\alpha}\mathcal{D}^{\mathbb{E}}(\mathbf{t},\mathbf{s})  =\sum_{m\geq0}\frac{1}{m!}\sum_{k_1,...,k_m\geq1}\mathfrak{s}_{k_1}...\mathfrak{s}_{k_m}\sum_{g\geq0}\hbar^{2g-2}\frac{\partial}{\partial t_n^\alpha}F_{g;k_1,...,k_m}^{\mathbb{E}}(\mathbf{t})  
\end{align*}
and
\begin{align*}
&\mathcal{D}^{\mathbb{E}}(\mathbf{t},\mathbf{s})^{-1}\frac{\hbar^2}{2}\frac{\partial}{\partial t_n^\alpha}\frac{\partial}{\partial t_l^\beta}\mathcal{D}^{\mathbb{E}}(\mathbf{t},\mathbf{s})  \\=&\sum_{m\geq0}\frac{1}{m!}\sum_{k_1,...,k_m\geq1}\mathfrak{s}_{k_1}...\mathfrak{s}_{k_m}\sum_{g\geq0}\frac{\hbar^{2g}}{2}\frac{\partial}{\partial t_n^\alpha}\frac{\partial}{\partial t_l^\beta}F_{g;k_1,...,k_m}^{\mathbb{E}}(\mathbf{t})   
\\&+\frac{\hbar^2}{2}
\left(\sum_{m\geq0}\frac{1}{m!}\sum_{k_1,...,k_m\geq1}\mathfrak{s}_{k_1}...\mathfrak{s}_{k_m}\sum_{g\geq0}\hbar^{2g-2}\frac{\partial}{\partial t_n^\alpha}F_{g;k_1,...,k_m}^{\mathbb{E}}(\mathbf{t})\right)
\\&\hspace{20pt}\cdot\left(\sum_{m\geq0}\frac{1}{m!}\sum_{k_1,...,k_m\geq1}\mathfrak{s}_{k_1}...\mathfrak{s}_{k_m}\sum_{g\geq0}\hbar^{2g-2}\frac{\partial}{\partial t_l^\beta}F_{g;k_1,...,k_m}^{\mathbb{E}}(\mathbf{t})\right).
% \\&+\frac{\hbar^2}{2}
% \left(\sum_{m\geq0}\frac{1}{m!}\sum_{k_1,...,k_m\geq1}\mathfrak{s}_{k_1}...\mathfrak{s}_{k_m}\sum_{g\geq0}\hbar^{2g-2}\frac{\partial}{\partial t_n^\alpha}\frac{\partial}{\partial t_l^\beta}F_{g;k_1,...,k_m}^{\mathbb{E}}(\mathbf{t})\right)
\end{align*}
Recall that
 \begin{align*}
&L_{n;k_1,...,k_m}
\\=&\sum_{j=0}^{n+1}\sum_{r,\alpha,\beta}\left(\prod_{i=1}^{m}\Delta_{2k_i-1}\hat{e}_{n+1-j}(\alpha)\right)(r) \langle c_1(X)^{j}\cup\phi_\alpha,\phi^\beta\rangle\tilde{t}_r^\alpha \frac{\partial}{\partial t^\beta_{\sum_{i=1}^{m}(2k_i-1)+r+n-j}}
\\&-\frac{\hbar^2}{2}\sum_{j=0}^{n+1}\sum_{s,\beta,\beta'}(-1)^{-s}\left(\prod_{i=1}^{m}\Delta_{2k_i-1}\check{e}_{n+1-j}(\beta)\right)(-s-1)\langle c_1(X)^{j}\cup\phi^{\beta},\phi^{\beta'}\rangle
\frac{\partial}{\partial t_s^\beta}\frac{\partial}{\partial t_{\sum_{i=1}^{m}(2k_i-1)-s-1+n-j}^{\beta'}}
\\&+\frac{1}{2\hbar^2}\delta_{m,0}\sum_{\alpha,\beta} t_0^\alpha t_0^\beta\langle c_1(X)^{n+1}\cup\phi_\alpha,\phi_\beta\rangle,
\end{align*}
so we have
\begin{align*}
&\mathcal{D}^{\mathbb{E}}(\mathbf{t},\mathbf{s})^{-1}L_{n;k_1,...,k_m}\mathcal{D}^{\mathbb{E}}(\mathbf{t},\mathbf{s})   \\=&\sum_{l\geq0}\sum_{a_1,...,a_l\geq1}\frac{\mathfrak{s}_{a_1}...\mathfrak{s}_{a_l}}{l!}\sum_{g\geq0}\hbar^{2g-2}
\\&\cdot\sum_{j=0}^{n+1}\sum_{r,\alpha,\beta}\left(\prod_{i=1}^{m}\Delta_{2k_i-1}\hat{e}_{n+1-j}(\alpha)\right)(r) \langle c_1(X)^{j}\cup\phi_\alpha,\phi^\beta\rangle\tilde{t}_r^\alpha \frac{\partial F_{g;a_1,...,a_l}^{\mathbb{E}}(\mathbf{t})}{\partial t^\beta_{\sum_{i=1}^{m}(2k_i-1)+r+n-j}}
\\&-\sum_{g\geq0}\frac{\hbar^{2g-2}}{2}\sum_{g_1+g_2=g}\sum_{j=0}^{n+1}\sum_{s,\beta,\beta'}(-1)^{-s}\left(\prod_{i=1}^{m}\Delta_{2k_i-1}\check{e}_{n+1-j}(\beta)\right)(-s-1)\langle c_1(X)^{j}\cup\phi^{\beta},\phi^{\beta'}\rangle
\\&\cdot
\left(\sum_{l\geq0}\sum_{a'_1,...,a'_l\geq1}\frac{\mathfrak{s}_{a'_1}...\mathfrak{s}_{a'_l}}{l!}\frac{\partial F_{g_1;a'_1,...,a'_l}^{\mathbb{E}}(\mathbf{t})}{\partial t_s^\beta}\right)\left(\sum_{l\geq0}\sum_{a''_1,...,a''_l\geq1}\frac{\mathfrak{s}_{a''_1}...\mathfrak{s}_{a''_l}}{l!}\frac{\partial F_{g_2;a''_1,...,a''_l}^{\mathbb{E}}(\mathbf{t})}{\partial t_{\sum_{i=1}^{m}(2k_i-1)-s-1+n-j}^{\beta'}} 
\right)
\\&-\sum_{g\geq0}\frac{\hbar^{2g}}{2}\sum_{j=0}^{n+1}\sum_{s,\beta,\beta'}(-1)^{-s}\left(\prod_{i=1}^{m}\Delta_{2k_i-1}\check{e}_{n+1-j}(\beta)\right)(-s-1)\langle c_1(X)^{j}\cup\phi^{\beta},\phi^{\beta'}\rangle
\\&\cdot
\left(\sum_{l\geq0}\sum_{a_1,...,a_l\geq1}\frac{\mathfrak{s}_{a_1}...\mathfrak{s}_{a_l}}{l!}\frac{\partial^2 F_{g;a_1,...,a_l}^{\mathbb{E}}(\mathbf{t})}{\partial t_s^\beta\partial t_{\sum_{i=1}^{m}(2k_i-1)-s-1+n-j}^{\beta'}}
\right)
\\&+\frac{1}{2\hbar^2}\delta_{m,0}\sum_{\alpha,\beta} t_0^\alpha t_0^\beta\langle c_1(X)^{n+1}\cup\phi_\alpha,\phi_\beta\rangle.
\end{align*}
Taking the coefficient of $\hbar^{2g-2}$, we get
\begin{align*}
&[\hbar^{2g-2}]\mathcal{D}^{\mathbb{E}}(\mathbf{t},\mathbf{s})^{-1}L_{n;k_1,...,k_m}\mathcal{D}^{\mathbb{E}}(\mathbf{t},\mathbf{s}) \\=&\sum_{l\geq0}\sum_{a_1,...,a_l\geq1}\frac{\mathfrak{s}_{a_1}...\mathfrak{s}_{a_l}}{l!}
\sum_{j=0}^{n+1}\sum_{r,\alpha,\beta}\left(\prod_{i=1}^{m}\Delta_{2k_i-1}\hat{e}_{n+1-j}(\alpha)\right)(r) \langle c_1(X)^{j}\cup\phi_\alpha,\phi^\beta\rangle\tilde{t}_r^\alpha \frac{\partial F_{g;a_1,...,a_l}^{\mathbb{E}}(\mathbf{t})}{\partial t^\beta_{\sum_{i=1}^{m}(2k_i-1)+r+n-j}}
\\&-\frac{1}{2}\sum_{g_1+g_2=g}\sum_{j=0}^{n+1}\sum_{s,\beta,\beta'}(-1)^{-s}\left(\prod_{i=1}^{m}\Delta_{2k_i-1}\check{e}_{n+1-j}(\beta)\right)(-s-1)\langle c_1(X)^{j}\cup\phi^{\beta},\phi^{\beta'}\rangle
\\&\cdot
\left(\sum_{l\geq0}\sum_{a'_1,...,a'_l\geq1}\frac{\mathfrak{s}_{a'_1}...\mathfrak{s}_{a'_l}}{l!}\frac{\partial F_{g_1;a'_1,...,a'_l}^{\mathbb{E}}(\mathbf{t})}{\partial t_s^\beta}\right)\left(\sum_{l\geq0}\sum_{a''_1,...,a''_l\geq1}\frac{\mathfrak{s}_{a''_1}...\mathfrak{s}_{a''_l}}{l!}\frac{\partial F_{g_2;a''_1,...,a''_l}^{\mathbb{E}}(\mathbf{t})}{\partial t_{\sum_{i=1}^{m}(2k_i-1)-s-1+n-j}^{\beta'}} 
\right)
\\&-\frac{1}{2}\sum_{j=0}^{n+1}\sum_{s,\beta,\beta'}(-1)^{-s}\left(\prod_{i=1}^{m}\Delta_{2k_i-1}\check{e}_{n+1-j}(\beta)\right)(-s-1)\langle c_1(X)^{j}\cup\phi^{\beta},\phi^{\beta'}\rangle
\\&
\hspace{40pt}\cdot\sum_{l\geq0}\sum_{a_1,...,a_l\geq1}\frac{\mathfrak{s}_{a_1}...\mathfrak{s}_{a_l}}{l!}\frac{\partial^2 F_{g-1;a_1,...,a_l}^{\mathbb{E}}(\mathbf{t})}{\partial t_s^\beta\partial t_{\sum_{i=1}^{m}(2k_i-1)-s-1+n-j}^{\beta'}}
\\&+\frac{1}{2}\delta_{g,0}\delta_{m,0}\sum_{\alpha,\beta} t_0^\alpha t_0^\beta\langle c_1(X)^{n+1}\cup\phi_\alpha,\phi_\beta\rangle.
\end{align*} 
Combining with equation~\eqref{eqn:L_nE=L_n+L_nk1km}, we obtain equation~\eqref{eqn:psignEts-formula}.
\end{proof}

In this paper below, we will first prove the vanishing of $\Psi_{g,n;k_1,...,k_m}^{\mathbb{E}}(\mathbf{t})$ in genus $g=0$, then we  focus on proving the vanishing of $\Psi_{g,n;k_1}^{\mathbb{E}}(\mathbf{t})$ in genus $g\geq1$. From Proposition~\ref{pro:psi-gnE}, we see
\begin{corollary}\label{cor:formula-psi-gnk1E}
\begin{align}\label{eqn:formula-psi-gnk1E}
&\Psi^{\mathbb{E}}_{g,n;k_1}\nonumber   
\\=&-\frac{B_{2k_1}}{(2k_1)!}\left\{\sum_{j=0}^{n+1}\sum_{r,\alpha,\beta}\left(\Delta_{2k_1-1}\hat{e}_{n+1-j}(n,\alpha)\right)(r) \langle c_1(X)^{j}\cup\phi_\alpha,\phi^\beta\rangle\tilde{t}_r^\alpha \frac{\partial F_{g;\emptyset}^{\mathbb{E}}(\mathbf{t})}{\partial t^\beta_{(2k_1-1)+r+n-j}}\nonumber
\right.\\&\left.-\frac{1}{2}\sum_{g_1+g_2=g}\sum_{j=0}^{n+1}\sum_{s,\beta,\beta'}(-1)^{-s}\left(\Delta_{2k_1-1}\check{e}_{n+1-j}(n,\beta)\right)(-s-1)\langle c_1(X)^{j}\cup\phi^{\beta},\phi^{\beta'}\rangle
\frac{\partial F_{g_1;\emptyset}^{\mathbb{E}}(\mathbf{t})}{\partial t_s^\beta}\frac{\partial F_{g_2;\emptyset}^{\mathbb{E}}(\mathbf{t})}{\partial t_{(2k_1-1)-s-1+n-j}^{\beta'}} 
\nonumber
\right.\\&\left.-\frac{1}{2}\sum_{j=0}^{n+1}\sum_{s,\beta,\beta'}(-1)^{-s}\left(\Delta_{2k_1-1}\check{e}_{n+1-j}(n,\beta)\right)(-s-1)\langle c_1(X)^{j}\cup\phi^{\beta},\phi^{\beta'}\rangle
\frac{\partial^2 F_{g-1;\emptyset}^{\mathbb{E}}(\mathbf{t})}{\partial t_s^\beta\partial t_{(2k_1-1)-s-1+n-j}^{\beta'}}\right\}
\nonumber\\&+\left\{\sum_{j=0}^{n+1}\sum_{r,\alpha,\beta}\left(\hat{e}_{n+1-j}(n,\alpha)\right)(r) \langle c_1(X)^{j}\cup\phi_\alpha,\phi^\beta\rangle\tilde{t}_r^\alpha \frac{\partial F_{g;k_1}^{\mathbb{E}}(\mathbf{t})}{\partial t^\beta_{r+n-j}}
\nonumber\right.\\&\left.-\frac{1}{2}\sum_{g_1+g_2=g}\sum_{j=0}^{n+1}\sum_{s,\beta,\beta'}(-1)^{-s}\left(\check{e}_{n+1-j}(n,\beta)\right)(-s-1)\langle c_1(X)^{j}\cup\phi^{\beta},\phi^{\beta'}\rangle
\frac{\partial F_{g_1;k_1}^{\mathbb{E}}(\mathbf{t})}{\partial t_s^\beta}\frac{\partial F_{g_2;\emptyset}^{\mathbb{E}}(\mathbf{t})}{\partial t_{-s-1+n-j}^{\beta'}} 
\nonumber
\right.\\&\left.-\frac{1}{2}\sum_{g_1+g_2=g}\sum_{j=0}^{n+1}\sum_{s,\beta,\beta'}(-1)^{-s}\left(\check{e}_{n+1-j}(n,\beta)\right)(-s-1)\langle c_1(X)^{j}\cup\phi^{\beta},\phi^{\beta'}\rangle
\frac{\partial F_{g_1;\emptyset}^{\mathbb{E}}(\mathbf{t})}{\partial t_s^\beta}\frac{\partial F_{g_2;k_1}^{\mathbb{E}}(\mathbf{t})}{\partial t_{-s-1+n-j}^{\beta'}} 
\nonumber
\right.\\&\left.-\frac{1}{2}\sum_{j=0}^{n+1}\sum_{s,\beta,\beta'}(-1)^{-s}\left(\check{e}_{n+1-j}(n,\beta)\right)(-s-1)\langle c_1(X)^{j}\cup\phi^{\beta},\phi^{\beta'}\rangle
\frac{\partial^2 F_{g-1;k_1}^{\mathbb{E}}(\mathbf{t})}{\partial t_s^\beta\partial t_{-s-1+n-j}^{\beta'}}\right\}
\end{align}  
for $g\geq1$ and $n\geq1$. 
    
\end{corollary}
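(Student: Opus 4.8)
The plan is to obtain \eqref{eqn:formula-psi-gnk1E} directly from Proposition~\ref{pro:psi-gnE} by extracting from the master formula \eqref{eqn:psignEts-formula} its part that is linear in the deformation variables $\mathbf{s}$. Indeed, by the defining expansion
\[
\Psi_{g,n}^{\mathbb{E}}(\mathbf{t},\mathbf{s})=\sum_{m\geq 0}\sum_{k_1,\dots,k_m\geq 0}\frac{1}{m!}\,\mathfrak{s}_{k_1}\cdots\mathfrak{s}_{k_m}\,\Psi_{g,n;k_1,\dots,k_m}^{\mathbb{E}}(\mathbf{t}),
\]
the quantity $\Psi^{\mathbb{E}}_{g,n;k_1}$ is precisely the coefficient of $\mathfrak{s}_{k_1}$ in $\Psi_{g,n}^{\mathbb{E}}(\mathbf{t},\mathbf{s})$, i.e.\ its degree-one part in $\mathbf{s}$. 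Thus the entire proof reduces to isolating the total-$\mathbf{s}$-degree-one contribution of \eqref{eqn:psignEts-formula}, and \eqref{eqn:formula-psi-gnk1E} then follows by inspection.

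The first step is to identify the two independent sources of $\mathbf{s}$-dependence in \eqref{eqn:psignEts-formula}. On the one hand there is the outer operator factor $\frac{(-1)^m}{m!}\prod_{i=1}^{m}\frac{B_{2k_i}}{(2k_i)!}\mathfrak{s}_{k_i}$, which carries $\mathbf{s}$-degree $m$; on the other hand each Hodge generating function $F_{g;a_1,\dots,a_l}^{\mathbb{E}}$ enters together with $\frac{\mathfrak{s}_{a_1}\cdots\mathfrak{s}_{a_l}}{l!}$ and so carries $\mathbf{s}$-degree $l$. Requiring the total degree to equal one forces exactly two scenarios. In the first, $m=1$ with the single operator index equal to $k_1$ while every Hodge function is taken at $\mathbf{s}=0$ (all $l=0$, i.e.\ $F_{\bullet;\emptyset}^{\mathbb{E}}$); since the prefactor for $m=1$ is $-\frac{B_{2k_1}}{(2k_1)!}\mathfrak{s}_{k_1}$, this reproduces the first brace of \eqref{eqn:formula-psi-gnk1E}, with overall factor $-\frac{B_{2k_1}}{(2k_1)!}$ and the difference operator $\Delta_{2k_1-1}$ acting on $\hat{e}$ and $\check{e}$. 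In the second, $m=0$, so the operator reduces to $L_{n;\emptyset}$ (no $\Delta$ operators and bare symmetric functions $\hat{e}_{n+1-j}(n,\alpha)$, $\check{e}_{n+1-j}(n,\beta)$), and the single factor $\mathfrak{s}_{k_1}$ is supplied by exactly one Hodge insertion, i.e.\ one of the generating functions is $F_{\bullet;k_1}^{\mathbb{E}}$ and all others are $F_{\bullet;\emptyset}^{\mathbb{E}}$; this reproduces the second brace.

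The one point requiring care is the bilinear two-point terms of \eqref{eqn:psignEts-formula}, where a product of two generating functions $F_{g_1}^{\mathbb{E}}$ and $F_{g_2}^{\mathbb{E}}$ appears. In the $m=0$ scenario the single factor $\mathfrak{s}_{k_1}$ may be carried by either factor of this product, so that contribution splits into a symmetric pair, one term pairing $F_{g_1;k_1}^{\mathbb{E}}$ with $F_{g_2;\emptyset}^{\mathbb{E}}$ and one pairing $F_{g_1;\emptyset}^{\mathbb{E}}$ with $F_{g_2;k_1}^{\mathbb{E}}$; these are exactly the second and third lines of the second brace of \eqref{eqn:formula-psi-gnk1E}. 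In the $m=1$ scenario both factors are necessarily taken at $\emptyset$, giving the single bilinear term in the first brace. Finally, since $g\geq 1$ the purely quadratic piece $\frac{1}{2}\delta_{g,0}\sum_{\alpha,\beta} t_0^\alpha t_0^\beta\langle c_1(X)^{n+1}\cup\phi_\alpha,\phi_\beta\rangle$ (which is $\mathbf{s}$-independent) contributes nothing, and for $n\geq 1$ all the $\delta_{n,0}$ and $\delta_{n,-1}$ anomaly constants are absent.

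The argument is thus essentially a bookkeeping matter rather than a substantive computation: the only genuine obstacle is organizing the combinatorial factors $\frac{1}{m!}$ and $\frac{1}{l!}$, together with the signs $(-1)^m$ and $(-1)^{-s}$, so that the degree-one coefficient in $\mathbf{s}$ is collected with correct multiplicities—in particular so that the two ways of attaching $\mathfrak{s}_{k_1}$ in the bilinear terms are accounted for exactly once each. Once this is done, \eqref{eqn:formula-psi-gnk1E} is read off immediately from the two scenarios above.
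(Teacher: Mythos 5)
Your proposal is correct and is exactly the paper's (implicit) argument: the paper states this corollary as an immediate consequence of Proposition~\ref{pro:psi-gnE}, read off by extracting the coefficient of $\mathfrak{s}_{k_1}$ in \eqref{eqn:psignEts-formula}, which is precisely your two-scenario bookkeeping ($m=1$ with all Hodge functions at $\emptyset$, versus $m=0$ with a single $F_{\bullet;k_1}^{\mathbb{E}}$ insertion, split over the two factors in the bilinear terms). Your handling of the prefactor $-\frac{B_{2k_1}}{(2k_1)!}$, the symmetric splitting in the $g_1+g_2=g$ terms, and the vanishing of the $\delta_{g,0}$ piece for $g\geq1$ all match.
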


\section{Virasoro constraints for Hodge integral in  genus 0}\label{sec:g=0-Virasoro-Hodge}
Virasoro constraints for quantum cohomolgy of any compact symplectic manifold was firstly proved in \cite{MR1690740}.
Note that if the target is considered to be a symplectic manifold, usually
the holomorphic dimension $p_\alpha$ is replaced by a half of the real dimension of $\phi_\alpha$. In fact, genus zero Virasoro constraints are a consequence of the genus-0 WDVV equation and some basic differential equations including string equation, dilation equation, divisor equation and selection rule.  Besides the work of \cite{MR1690740}, there are some other proofs of genus-0 Virasoro constraints (see \cite{MR1740678}, \cite{MR1718143}, \cite{MR2115767}).  
\subsection{Proof of Theorem~\ref{thm:g=0-L_n^E-constraints}}
In this subsection, we give the proof of Theorem~\ref{thm:g=0-L_n^E-constraints}. By definition,
\begin{align*}
[\hbar^{-2}]\mathcal{D}^{\mathbb{E}}(\mathbf{t},\mathbf{s})^{-1}L_n^{\mathbb{E}}\mathcal{D}^{\mathbb{E}}(\mathbf{t},\mathbf{s})
=\Psi_{0,n}^{\mathbb{E}}(\mathbf{t},\mathbf{s}).    
\end{align*}
So Virasoro constraints for genus-0 Hodge integrals is equivalent to 
\begin{align*}
\Psi^{\mathbb{E}}_{0,n}(\mathbf{t},\mathbf{s}) =0,\quad n\geq-1. 
\end{align*}    
As pointed out in subsection~\ref{subsec:virasoro-gw-inva}, the vanishing of $\Psi^{\mathbb{E}}_{0,-1}(\mathbf{t},\mathbf{s})$ is the genus-0 string equation and the vanishing of $\Psi^{\mathbb{E}}_{0,0}(\mathbf{t},\mathbf{s})$ follows from genus-0  divisor equation and selection rule. In the below, we only focus on the cases $n\geq1$. 
Recall
\begin{align*}
&L_n^{\mathbb{E}}=L_n+\sum_{m=1}^{\infty}\frac{(-1)^{m}}{m!}\sum_{k_1,...,k_m\ge1}\left(\prod_{i=1}^{m}\left(\frac{B_{2k_i}}{(2k_i)!}\mathfrak{s}_{k_i}\right)\right)L_{n;k_1,...,k_m}
\end{align*}  
for $n\geq1$, 
where  the operator $L_{n;k_1,...,k_m}$ is defined by
\[L_{n;k_1,...,k_m}:=[...[[L_n, \widehat{z^{2k_1-1}}],\widehat{z^{2k_2-1}}],...,\widehat{z^{2k_m-1}}].\]
Formally, we define
\begin{align*}
\mathcal{D}(\mathbf{t})^{-1}L_{n;k_1,...,k_m}\mathcal{D}(\mathbf{t})=\sum_{g\geq0}\hbar^{2g-2}\Psi_{g,n;k_1,...,k_m}. \end{align*}
Then for $n\geq1$
\begin{align}\label{eqn:psiE0n=psi0nemp+}
\Psi^{\mathbb{E}}_{0,n}(\mathbf{t},\mathbf{s})=\Psi_{0,n;\emptyset}+\sum_{m=1}^{\infty}\frac{(-1)^{m}}{m!}\sum_{k_1,...,k_m\ge1}\left(\prod_{i=1}^{m}\left(\frac{B_{2k_i}}{(2k_i)!}\mathfrak{s}_{k_i}\right)\right)\Psi_{0,n;k_1,...,k_m}.
\end{align}
In the below, we will prove the vanishing of $\Psi_{0,n;k_1,...,k_m}$ by induction.  For $m=0$, the vanishing of $\Psi_{0,n;\emptyset}$ is  exactly  the Virasoro constraints in genus zero. Assume $\Psi_{0,n;k_1,...,k_m}=0$,
then
\begin{align*}
&[L_{n;k_1,...,k_m},\widehat{z^{2k_{m+1}-1}}]\mathcal{D}(\mathbf{t})
\\=&L_{n;k_1,...,k_m}\left(\mathcal{D}(\mathbf{t})\cdot\sum_{g\geq0}\hbar^{2g-2}\frac{(2k_{m+1})!}{B_{2k_{m+1}}}\langle\langle \ch_{2k_{m+1}-1}(\mathbb{E})\rangle\rangle_{g}\right)
-\widehat{z^{2k_{m+1}-1}}\left(\mathcal{D}(\mathbf{t})\cdot\sum_{g\geq0}\hbar^{2g-2}\Psi_{g,n;k_1,...,k_m}
\right).
\end{align*}
Multiplying $\exp(-\sum_{g\geq0}\hbar^{2g-2}F_g)$, we get
\begin{align*}
&\exp(-\sum_{g\geq0}\hbar^{2g-2}F_g)L_{n;k_1,...,k_{m+1}} \exp(\sum_{g\geq0}\hbar^{2g-2}F_g)
\\=&
\Big(-\exp(\sum_{g\geq0}\hbar^{2g-2}F_g)\Big)L_{n;k_1,...,k_m}\Big(\exp(\sum_{g\geq0}\hbar^{2g-2}F_g)\Big)\cdot \sum_{g\geq0}\hbar^{2g-2}\langle\langle \frac{(2k_{m+1})!}{B_{2k_{m+1}}}\ch_{2k_{m+1}-1}(\mathbb{E})\rangle\rangle_g
\\&+  \sum_{g\geq0}\hbar^{2g-2}\left(L_{n;k_1,...,k_m}-\delta_{m,0}\frac{1}{2\hbar^2}\sum_{\alpha,\beta}(\mathcal{C}^{n+1})_{\alpha\beta}t_0^\alpha t_0^\beta\right)\langle\langle \frac{(2k_{m+1})!}{B_{2k_{m+1}}}\ch_{2k_{m+1}-1}(\mathbb{E})\rangle\rangle_g
\\&+\sum_{g\geq0}\frac{\hbar^{2g-2}}{2}\sum_{g_1+g_2=g}\sum_{j=0}^{n}\sum_{s,\beta}(-1)^{-s}\left(\prod_{i=1}^{m}\Delta_{2k_i-1}\check{e}_{n+1-j}(\beta)\right)(-s-1)
\\&\hspace{120pt}\cdot\langle\langle\tau_s(\phi_\beta)\rangle\rangle_{g_1}\langle\langle\tau_{\sum_{i=1}^{m}(2k_i-1)-s-1+n-j}(c_1(X)^{j}\cup\phi^{\beta});\frac{(2k_{m+1})!}{B_{2k_{m+1}}}\ch_{2k_{m+1}-1}(\mathbb{E})\rangle\rangle_{g_2}
\\&+\sum_{g\geq0}\frac{\hbar^{2g-2}}{2}\sum_{g_1+g_2=g}\sum_{j=0}^{n}\sum_{s,\beta}(-1)^{-s}\left(\prod_{i=1}^{m}\Delta_{2k_i-1}\check{e}_{n+1-j}(\beta)\right)(-s-1)
\\&\hspace{120pt}\cdot\langle\langle\tau_s(\phi_\beta);\frac{(2k_{m+1})!}{B_{2k_{m+1}}}\ch_{2k_{m+1}-1}(\mathbb{E})\rangle\rangle_{g_1}\langle\langle\tau_{\sum_{i=1}^{m}(2k_i-1)-s-1+n-j}(c_1(X)^{j}\cup\phi^{\beta})\rangle\rangle_{g_2}
\\&-\left(\exp(-\sum_{g\geq0}\hbar^{2g-2}F_g)\right)\widehat{z^{2k_{m+1}-1}}\left(\exp(\sum_{g\geq0}\hbar^{2g-2}F_g)\right)
\cdot\sum_{g\geq0}\hbar^{2g-2}\Psi_{g,n;k_1,...,k_m}
\\&+\sum_{g\geq0}\sum_{i\geq0}\sum_{\alpha}\hbar^{2g-2}\tilde{t}_i^\alpha\frac{\partial\Psi_{g,n;k_1,...,k_m}}{\partial t_{i+2k_{m+1}-1}^\alpha}
-\sum_{g\geq 0}\sum_{i=0}^{2k_{m+1}-2}\sum_{\alpha,\beta}\frac{\hbar^{2g}}{2}(-1)^{i}\eta^{\alpha\beta}\frac{\partial^2\Psi_{g,n;k_1,...,k_m}}{\partial t_i^\alpha \partial t_{2k_{m+1}-2-i}^\beta}
\\&-\sum_{g\geq0}\sum_{i=0}^{2k_{m+1}-2}\hbar^{2g-2}
\sum_{g_1+g_2=g}\sum_{\alpha}(-1)^i
\langle\langle\tau_i(\phi^\alpha)\rangle\rangle_{g_1}\frac{\partial^2\Psi_{g_2,n;k_1,...,k_m}}{\partial t_{2k_{m+1}-2-i}^\alpha}.
\end{align*}
Choosing the coefficient of $\hbar^{-2}$
and by induction hypothesis, using the fact  $\ch_{2m-1}(\mathbb{E})=0$ for $m>0$ on the genus-0 moduli space of curves $\overline{\mathcal{M}}_{0,n}$, we prove
\begin{align*}
\Psi_{0,n;k_1,...,k_{m+1}}=0.    
\end{align*}
In the end, the vanishing of $\Psi_{0,n}^{\mathbb{E}}(\mathbf{t},\mathbf{s})$ follows from the vanishing of $\{\Psi_{0,n;k_1,...,k_{m}}:k_1,...,k_m\geq1\}_{m\geq0}$ and equation~\eqref{eqn:psiE0n=psi0nemp+}. 

\begin{remark}
In \cite{MR4631415}, certain Virasoro-like constraints were proved for the genus-0 potential function of any Frobenius manifold.  In fact, both proofs of Theorem~\ref{thm:g=0-L_n^E-constraints} and \cite[Theorem 2]{MR4631415} are based on the genus-0 Virasoro constraints. By \cite[Theorem 1]{MR4631415}, one can see the Virasoro constraints for genus-0 Hodge intergals is a certain linear combination of the Virasoro-like constraints in \cite{MR4631415}.
\end{remark}

\subsection{Eguchi-Hori-Xiong's $\widetilde{L}_{n+1}$ equation}
Virasoro constraints for Hodge integrals in genus-0 gives us a large family of differential equations of genus-0 Gromov-Witten invariants of any target variety 
\begin{align}\label{eqn:genus-0-Hodge-gw-equation}
&-\sum_{j=0}^{n}\sum_{r,\alpha}\left(\prod_{i=1}^{m}\Delta_{2k_i-1}\hat{e}_{n+1-j}(n,\alpha)\right)(r) \tilde{t}_r^\alpha \langle\langle\tau_{\sum_{i=1}^{m}(2k_i-1)+r+n-j}(c_1(X)^{j}\cup\phi_\alpha)
\rangle\rangle_0\nonumber
\\&+\frac{1}{2}\sum_{j=0}^{n}\sum_{s,\beta}(-1)^{-s}\left(\prod_{i=1}^{m}\Delta_{2k_i-1}\check{e}_{n+1-j}(n,\beta)\right)(-s-1)\langle\langle\tau_s(\phi_\beta)\rangle\rangle_0\langle\langle\tau_{\sum_{i=1}^{m}(2k_i-1)-s-1+n-j}(c_1(X)^{j}\cup\phi^{\beta})\rangle\rangle_0
\nonumber
\\&=0
\end{align}
for any $m,n,k_1,..., k_m\geq1$.

Recall that in \cite{MR1454328}, the following sequence of equations (for $n=0,1$) are conjectured for quantum cohomology of compact Fano manifolds

\begin{align}\label{eqn:genus-0-Lntilde}
&-\sum_{j=0}^{n}\sum_{r,\alpha}\left(\hat{e}_{n-j}(n-1,\alpha)\right)(r+1) \tilde{t}_r^\alpha \langle\langle\tau_{1+r+n-j}(c_1(X)^{j}\cup\phi_\alpha)
\rangle\rangle_0\nonumber
\\&+\frac{1}{2}\sum_{j=0}^{n}\sum_{s,\alpha}(-1)^{-s}\left(\check{e}_{n-j}(n-1,\alpha)\right)(-s)\langle\langle\tau_s(\phi_\alpha)\rangle\rangle_0\langle\langle\tau_{-s+n-j}(c_1(X)^{j}\cup\phi^{\alpha})\rangle\rangle_0=0,\hspace{30pt} n\geq0.
\end{align}
We refer to~\eqref{eqn:genus-0-Lntilde} as the Eguchi-Hori-Xiong's $\widetilde{L}_{n+1}$ equation

The  $\widetilde{L}_{n}$ equations were firstly proved in \cite{MR1690740} for $n=1,2$ and  then generalized to certain Virasoro-like constraints for genus-0 potential function of any Frobenius manifold in \cite{MR4631415}.   Below  we will see $\widetilde{L}_{n}$ equations are embedded in the genus-0 Virasoro constraints for Hodge integrals.
\begin{corollary}
For $n\geq1$, Eguchi-Hori-Xiong $\widetilde{L}_n$ constraint  holds for quantum cohomology of any target variety.  
\end{corollary}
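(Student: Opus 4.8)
The plan is to derive the Eguchi--Hori--Xiong $\widetilde{L}_n$ equation directly from the genus-$0$ Virasoro constraints for Hodge integrals established in Theorem~\ref{thm:g=0-L_n^E-constraints}, by specializing to a single Hodge insertion $\ch_1(\mathbb{E})$, i.e. by taking $m=1$ and $k_1=1$ in the family of genus-$0$ relations~\eqref{eqn:genus-0-Hodge-gw-equation}. First I would write out~\eqref{eqn:genus-0-Hodge-gw-equation} for $m=1,\,k_1=1$; since $2k_1-1=1$, every difference operator appearing there is $\Delta_1$, and the descendant levels $(2k_1-1)+r+n-j$ and $(2k_1-1)-s-1+n-j$ collapse to exactly $1+r+n-j$ and $-s+n-j$, which are precisely the levels occurring in~\eqref{eqn:genus-0-Lntilde}. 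Thus after this specialization the two equations already carry the same correlators, and it only remains to compare their coefficients.

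The heart of the argument is the pair of difference-operator identities
\[
(\Delta_1 \hat{e}_{n+1-j}(n,\alpha))(r) = (n+1)\,\hat{e}_{n-j}(n-1,\alpha)(r+1),
\qquad
(\Delta_1 \check{e}_{n+1-j}(n,\beta))(-s-1) = (n+1)\,\check{e}_{n-j}(n-1,\beta)(-s).
\]
To prove the first, I would expand $\hat{e}_{n+1-j}(n,\alpha)(r+1)$ and $\hat{e}_{n+1-j}(n,\alpha)(r)$ as $e_{n+1-j}$ of the two $(n+1)$-element sets $\{r+1+b_\alpha+i\}_{i=0}^{n}$ and $\{r+b_\alpha+i\}_{i=0}^{n}$. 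These share the $n$ common variables $Y=\{r+1+b_\alpha,\dots,r+n+b_\alpha\}$ and differ only in the extra elements $r+n+1+b_\alpha$ and $r+b_\alpha$. Applying the elementary relation $e_p(Y\cup\{a\})=e_p(Y)+a\,e_{p-1}(Y)$ and subtracting, the common part $e_{n+1-j}(Y)$ cancels and what remains is $(n+1)\,e_{n-j}(Y)$; since $e_{n-j}(Y)=\hat{e}_{n-j}(n-1,\alpha)(r+1)$ by definition, the identity follows. The second identity is the same computation with $b_\alpha$ replaced by $b^\beta$ and evaluation point $r$ replaced by $-s-1$.

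Substituting these two identities into the specialized~\eqref{eqn:genus-0-Hodge-gw-equation}, both the one-point term and the product term acquire the common factor $(n+1)$, so the entire left-hand side equals $(n+1)$ times the left-hand side of~\eqref{eqn:genus-0-Lntilde}. Here I would also note that the $j=n+1$ contribution is automatically absent, since $\Delta_1$ annihilates the constant $\hat e_0$, so the summation range is exactly $0\le j\le n$, matching~\eqref{eqn:genus-0-Lntilde}. Because $n+1\neq 0$, dividing yields the $\widetilde{L}_{n+1}$ equation for every $n\ge 1$, hence $\widetilde{L}_n$ for all $n\ge 2$. The remaining case $\widetilde{L}_1$ corresponds to $n=0$ and is obtained by the same mechanism from the genus-$0$ part of the $L_0^{\mathbb{E}}$-constraint, using the explicit formula for $L_0^{\mathbb{E}}$ in the Proposition (where the factor $n+1=1$ makes the two equations literally coincide).

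I do not expect a serious obstacle here: the only non-bookkeeping step is the elementary symmetric function computation above, and the sole points requiring care are the index shift between the Hodge relation (indexed by $n$) and the $\widetilde{L}$-equation (indexed by $n+1$), and checking that the endpoint $\widetilde{L}_1$, which is the $n=0$ case not literally covered by~\eqref{eqn:genus-0-Hodge-gw-equation} as stated, is indeed supplied by the $L_0^{\mathbb{E}}$-constraint.
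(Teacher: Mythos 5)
Your proposal follows the paper's own proof essentially verbatim: the paper likewise specializes~\eqref{eqn:genus-0-Hodge-gw-equation} to $m=1$, $k_1=1$ and establishes exactly the two identities $(\Delta_1\hat e_{n+1-j}(n,\alpha))(r)=(n+1)\,\hat e_{n-j}(n-1,\alpha)(r+1)$ and $(\Delta_1\check e_{n+1-j}(n,\alpha))(-s-1)=(n+1)\,\check e_{n-j}(n-1,\alpha)(-s)$ by the same elementary-symmetric-function manipulation, so the two arguments coincide. Your additional care with the endpoint $\widetilde L_1$ (the $n=0$ case of~\eqref{eqn:genus-0-Lntilde}, which amounts to the $k_1=1$ genus-$0$ Faber--Pandharipande equation~\eqref{eqn:g=0-FP-equation}) is a minor point the paper's proof leaves implicit, and is correctly handled.
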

\begin{proof}
Taking $m=1$ and $k_1=1$ in  equation~\eqref{eqn:genus-0-Hodge-gw-equation}, we obtain 
\begin{align}\label{eqn:genus-0-Hodge-gw-equation-Lntilde}
&-\sum_{j=0}^{n}\sum_{r,\alpha}\left(\Delta_{1}\hat{e}_{n+1-j}(n,\alpha)\right)(r) \tilde{t}_r^\alpha \langle\langle\tau_{1+r+n-j}(c_1(X)^{j}\cup\phi_\alpha)
\rangle\rangle_0\nonumber
\\&+\frac{1}{2}\sum_{j=0}^{n}\sum_{s,\alpha}(-1)^{-s}\left(\Delta_{1}\check{e}_{n+1-j}(n,\alpha)\right)(-s-1)\langle\langle\tau_s(\phi_\alpha)\rangle\rangle_0\langle\langle\tau_{-s+n-j}(c_1(X)^{j}\cup\phi^{\alpha})\rangle\rangle_0
\nonumber
\\=&0.
\end{align}
Up to a factor $n+1$, equation~\eqref{eqn:genus-0-Hodge-gw-equation-Lntilde} match exactly with equation~\eqref{eqn:genus-0-Lntilde}. 
In fact,
\begin{align*}
 &\left(\Delta_{1}\hat{e}_{n+1-j}(n,\alpha)\right)(r)  
\\=&e_{n+1-j}(r+1+b_\alpha,...,r+1+b_\alpha+n)-e_{n+1-j}(r+b_\alpha,...,r+b_\alpha+n)
\\=&(n+1)e_{n-j}(r+1+b_\alpha,...,r+b_\alpha+n)
\\=&(n+1)\hat{e}_{n-j}(n-1,\alpha)(r+1)
\end{align*}
and
\begin{align*}
 &\left(\Delta_{1}\check{e}_{n+1-j}(n,\alpha)\right)(-s-1)  
\\=&e_{n+1-j}(-s+b^\alpha,...,-s+b^\alpha+n)-e_{n+1-j}(-s-1+b^\alpha,...,-s-1+b^\alpha+n)
\\=&(n+1)e_{n-j}(-s+b^\alpha,...,-s-1+b^\alpha+n)
\\=&(n+1)\check{e}_{n-j}(n-1,\alpha)(-s).
\end{align*}
\end{proof}

% \begin{remark}
% If $m=1,k_1=1$, the vanishing of $\Psi_{0,n;k_1}$ is exactly the genus-0  $\widetilde{L}_n$ equations proposed by Eguchi,Hori and Xiong in \cite{MR1454328},     
% \end{remark}

\section{Proof of Theorem~\ref{thm:psiEg=1}}
\label{sec:proof-thm-psiEg=1}
In this section, we give the proof to Theorem~\ref{thm:psiEg=1}. For our purpose, we introduce a $T$ operator on the big phase space 
\begin{align*}
T(W)=\tau_+(W)-\sum_{\alpha}\langle\langle W\phi^\alpha\rangle\rangle_0\phi_\alpha
\end{align*}
for any vector field $W$ on the big phase space, where $\tau_+(W)$ is a linear operator defined
by $\tau_+(\tau_n(\phi_\alpha)):=\tau_{n+1}(\phi_\alpha)$. We will also use $\tau_k(W):=\tau_+^k(W)$. 
By induction, it is not difficult to obtain (cf. \cite[Equation (25)]{MR2775793})
\begin{align}\label{eqn:T^k(W)-tau_kW}
T^k(W)=\tau_k(W)-\sum_{i=0}^{k-1}\sum_{\alpha}
(-1)^{i}\langle\langle W\tau_i(\phi^\alpha)\rangle\rangle_0
\tau_{k-1-i}(\phi_\alpha).    
\end{align}
Another useful relation between operators $\tau_+$ and $T$
is:
for any contravariant tensors $P$ and $Q$ on the big phase space,
\begin{align}\label{eqn:tau_jtau_m-j=TjTm-j}
\sum_{j=0}^{m}\sum_{\alpha}(-1)^j P(\tau_j(\phi_\alpha)) Q(\tau_{m-j}(\phi^\alpha))
=\sum_{j=0}^{m}\sum_{\alpha}(-1)^j P(T^j(\phi_\alpha)) Q(T^{m-j}(\phi^\alpha))
\end{align}
for any $m\geq0$ (cf. \cite[Proposition 3.2]{MR2775793}).

\subsection{Quasi-homogenous equation of Euler vector field $\mathcal{X}$}
The third one is 
the Euler vector field on the big phase space  defined by
\[\mathcal{X}=-\sum_{n,\alpha}(n+b_\alpha-b_1-1)\tilde{t}_n^\alpha\tau_{n}(\phi_\alpha)-\sum_{n,\alpha}\tilde{t}_n^\alpha\tau_{n-1}(c_1(X)\cup\phi_\alpha).\]
Then the divisor equation for first Chern class $c_1(X)$ and selection rule gives the following quasi-homogenous equation for Euler vector field
\begin{align}\label{eqn:quasi-homo-eqn}
\langle\langle\mathcal{X}\rangle\rangle_g=   (3-d)(1-g)F_g+\frac{1}{2}\delta_{g,0}\sum_{\alpha,\beta}\mathcal{C}_{\alpha\beta}t_0^\alpha t_0^\beta -\frac{1}{24}\delta_{g,1}\int_{X}c_1(X)\cup c_{d-1}(X),
\end{align}
where $d$ is the complex dimension of $X$ and $c_i$
is the $i$-th Chern class.
Derivatives of quasi-homogenous equation~\eqref{eqn:quasi-homo-eqn} give us
\begin{lemma}\label{lem:der-quasi-homo-eqn}
\begin{align*}
&(i)\langle\langle\mathcal{X}\tau_{m}(\phi_\alpha)\rangle\rangle_0 =(m+b_\alpha+b_1+1)\langle\langle\tau_m(\phi_\alpha)
\rangle\rangle_0+\langle\langle\tau_{m-1}(c_1(X)\cup\phi_\alpha)\rangle\rangle_0+\delta_{m,0}\sum_{\beta}\mathcal{C}_{\alpha\beta}t_0^\beta,
\\&(ii)
\langle\langle\mathcal{X}\tau_m(\phi_\alpha)\tau_n(\phi_\beta)\rangle\rangle_0=(m+n+b_\alpha+b_\beta)\langle\langle\tau_m(\phi_\alpha)\tau_n(\phi_\beta)\rangle\rangle_0 +\delta_{m,0}\delta_{n,0}\mathcal{C}_{\alpha\beta}
\nonumber\\& \hspace{140pt}
+\langle\langle\tau_{m-1}(c_1(X)\cup\phi_\alpha)\tau_n(\phi_\beta)\rangle\rangle_0
+\langle\langle\tau_m(\phi_\alpha)\tau_{n-1}(c_1(X)\cup\phi_\beta)\rangle\rangle_0,
\\&(iii)\langle\langle\mathcal{X}\tau_{m}(\phi_\alpha)\tau_n(\phi_\beta)\tau_k(\phi_{\gamma})\rangle\rangle_0
=(m+n+k+b_\alpha+b_\beta+b_\gamma-\frac{3-d}{2})\langle\langle\tau_{m}(\phi_\alpha)\tau_n(\phi_\beta)\tau_k(\phi_\gamma)\rangle\rangle_0\nonumber
\\&\hspace{160pt}+
\langle\langle\tau_{m-1}(c_1(X)\cup\phi_\alpha)\tau_n(\phi_\beta)\tau_k(\phi_\gamma)\rangle\rangle_0 \nonumber
\\&\hspace{160pt}+\langle\langle\tau_m(\phi_\alpha)\tau_{n-1}(c_1(X)\cup\phi_\beta)\tau_k(\phi_\gamma)\rangle\rangle_0\nonumber
\\&\hspace{160pt}+\langle\langle\tau_{m}(\phi_\alpha)\tau_n(\phi_\beta)\tau_{k-1}(c_1(X)\cup \phi_\gamma)\rangle\rangle_0.
\end{align*}
for any $m,n,k$ and $\alpha,\beta,\gamma$.    
\end{lemma}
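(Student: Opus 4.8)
The plan is to derive (i), (ii), (iii) by successively differentiating the genus-zero quasi-homogeneous equation~\eqref{eqn:quasi-homo-eqn}, which at $g=0$ reads $\langle\langle\mathcal{X}\rangle\rangle_0=(3-d)F_0+\frac{1}{2}\sum_{\gamma,\delta}\mathcal{C}_{\gamma\delta}t_0^\gamma t_0^\delta$. The one genuine subtlety is that $\mathcal{X}$ is \emph{not} a coordinate vector field: its coefficients depend on $\mathbf{t}$ through the dilaton-shifted variables $\tilde{t}_n^\alpha$. Consequently a covariant derivative $\partial/\partial t_m^\alpha$ does not simply pass through the $\mathcal{X}$-slot of a double bracket; it also hits the coefficient functions of $\mathcal{X}$, and tracking those extra terms is the whole content of the argument.

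First I would record the commutation rule once and for all. Writing $\mathcal{X}=\sum_{n,\beta}X_{n,\beta}\tau_n(\phi_\beta)$ with coefficients $X_{n,\beta}=-(n+b_\beta-b_1-1)\tilde{t}_n^\beta-\sum_\gamma\mathcal{C}_\gamma^\beta\tilde{t}_{n+1}^\gamma$ (the second sum coming from the $c_1(X)$-term of $\mathcal{X}$ after the index shift $\tau_{n-1}(c_1(X)\cup\phi_\gamma)=\sum_\beta\mathcal{C}_\gamma^\beta\tau_{n-1}(\phi_\beta)$), and using that the dilaton shift is constant so that $\partial\tilde{t}_n^\beta/\partial t_m^\alpha=\delta_{n,m}\delta^\beta_\alpha$, a direct computation of $\partial X_{n,\beta}/\partial t_m^\alpha$ yields, for any product $Z$ of coordinate vector fields,
\[
\langle\langle\mathcal{X}\tau_m(\phi_\alpha)Z\rangle\rangle_0
=\frac{\partial}{\partial t_m^\alpha}\langle\langle\mathcal{X}Z\rangle\rangle_0
+(m+b_\alpha-b_1-1)\langle\langle\tau_m(\phi_\alpha)Z\rangle\rangle_0
+\langle\langle\tau_{m-1}(c_1(X)\cup\phi_\alpha)Z\rangle\rangle_0.
\]
This is the only place the structure of $\mathcal{X}$ enters; everything afterwards is iteration.

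Next I would prove (i) by taking $Z$ empty and substituting the differentiated quasi-homogeneous equation $\partial_{t_m^\alpha}\langle\langle\mathcal{X}\rangle\rangle_0=(3-d)\langle\langle\tau_m(\phi_\alpha)\rangle\rangle_0+\delta_{m,0}\sum_\beta\mathcal{C}_{\alpha\beta}t_0^\beta$, where the last term uses the symmetry $\mathcal{C}_{\alpha\beta}=\mathcal{C}_{\beta\alpha}$. Collecting the two multiples of $\langle\langle\tau_m(\phi_\alpha)\rangle\rangle_0$ and inserting $b_1=\tfrac{1-d}{2}$ (since $\phi_1$ is the identity, $p_1=0$), the coefficient $(3-d)+(m+b_\alpha-b_1-1)$ collapses to $m+b_\alpha+b_1+1$, which is exactly (i). Identities (ii) and (iii) then follow by applying the commutation rule once more with $Z=\tau_n(\phi_\beta)$, and twice more with $Z=\tau_n(\phi_\beta)\tau_k(\phi_\gamma)$, each time feeding in the previously established identity; the constant shifts telescope, and using $-b_1-1=-\tfrac{3-d}{2}$ produces precisely the coefficient $m+n+k+b_\alpha+b_\beta+b_\gamma-\tfrac{3-d}{2}$ appearing in (iii), while the $\mathcal{C}$-insertion terms reproduce the stated $c_1(X)\cup\phi$ brackets and the residual $\delta_{m,0}\delta_{n,0}\mathcal{C}_{\alpha\beta}$.

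The main obstacle is purely organizational: keeping the bookkeeping of the coefficient-derivative terms straight, and correctly carrying the index shift induced by multiplication by $c_1(X)$ in the second sum defining $\mathcal{X}$. There is no analytic or geometric difficulty beyond this — the entire content sits in the displayed commutation rule together with the explicit value of $b_1$, and the three identities are its zeroth, first, and second iterates.
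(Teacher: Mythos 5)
Your proposal is correct and takes essentially the same route as the paper: the paper states this lemma with only the remark that it follows from differentiating the quasi-homogeneous equation~\eqref{eqn:quasi-homo-eqn}, and your explicit commutation rule (tracking the $\mathbf{t}$-dependence of the coefficients of $\mathcal{X}$ through the dilaton-shifted variables), combined with $b_1+1=\tfrac{3-d}{2}$ and the symmetry of $\mathcal{C}_{\alpha\beta}$, yields exactly the stated coefficients in (i), (ii), (iii). Nothing is missing; your write-up simply makes explicit the bookkeeping the paper leaves implicit.
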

\subsection{Genus-0 Faber-Pandharipande equation}
Faber-Pandharipande equation~\eqref{eqn:FP-formula} in genus-0 is 
\begin{align}\label{eqn:g=0-FP-equation}
&-\sum_{r,\alpha}\Tilde{t}_r^\alpha\langle\langle\tau_{r+2k_1-1}(\phi_\alpha)\rangle\rangle_{0}+\frac{1}{2}\sum_{i=0}^{2k_1-2}\sum_{\alpha}(-1)^i\langle\langle\tau_i(\phi_\alpha)\rangle\rangle_0\langle\langle\tau_{2k_1-2-i}(\phi^\alpha)\rangle\rangle_0=0    
\end{align}
for $k_1>0$.

\begin{lemma}\label{lem:der-g=0-FP-eqn}For $k_1\geq1$, 
\begin{align*}
&(i) \sum_{r,\alpha}\Tilde{t}_r^\alpha\langle\langle\tau_{r+2k_1-1}(\phi_\alpha)\tau_{m}(\phi_\beta)\rangle\rangle_{0}
=\sum_{i,\alpha}(-1)^i\langle\langle\tau_i(\phi_\alpha)\tau_m(\phi_\beta)\rangle\rangle_0\langle\langle\tau_{2k_1-2-i}(\phi^\alpha)\rangle\rangle_0-\langle\langle\tau_{m+2k_1-1}(\phi_\beta)\rangle\rangle_0,
\\&(ii)\sum_{r,\alpha}\Tilde{t}_r^\alpha\langle\langle\tau_{r+2k_1-1}(\phi_\alpha)\tau_{m}(\phi_\beta)\tau_n(\phi_\sigma)\rangle\rangle_{0}
=\sum_{i,\alpha}(-1)^i\langle\langle\tau_i(\phi_\alpha)\tau_m(\phi_\beta)\tau_n(\phi_\sigma)\rangle\rangle_0\langle\langle\tau_{2k_1-2-i}(\phi^\alpha)\rangle\rangle_0,
\\&(iii)\sum_{r,\alpha}\Tilde{t}_r^\alpha\langle\langle\tau_{r+2k_1-1}(\phi_\alpha)\tau_{m}(\phi_\beta)\tau_n(\phi_\sigma)\tau_l(\phi_\epsilon)\rangle\rangle_{0}
=\sum_{i,\alpha}(-1)^i\langle\langle\tau_i(\phi_\alpha)\tau_m(\phi_\beta)\tau_n(\phi_\sigma)\tau_{l}(\phi_\epsilon)\rangle\rangle_0\langle\langle\tau_{2k_1-2-i}(\phi^\alpha)\rangle\rangle_0
\end{align*}
for any fixed $\beta,\sigma,\epsilon$.
\end{lemma}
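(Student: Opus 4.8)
The plan is to read off (i), (ii), (iii) as the first, second and third covariant derivatives of the genus-$0$ Faber--Pandharipande equation~\eqref{eqn:g=0-FP-equation} in the directions $\partial/\partial t_m^\beta$, $\partial/\partial t_n^\sigma$, $\partial/\partial t_l^\epsilon$. Throughout I will use that $\partial\tilde t_r^\alpha/\partial t_m^\beta=\delta_{r,m}\delta^\alpha_\beta$ and that $\partial/\partial t_m^\beta$ turns a genus-$0$ bracket into the same bracket with one extra insertion $\tau_m(\phi_\beta)$.

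For (i) I would differentiate~\eqref{eqn:g=0-FP-equation} once. In the linear term the derivative either lands on the dilaton-shifted coordinate $\tilde t_r^\alpha$, producing the lone boundary term $-\langle\langle\tau_{m+2k_1-1}(\phi_\beta)\rangle\rangle_0$, or on the bracket, producing the left-hand side of (i). In the quadratic term the Leibniz rule gives two summands; relabelling $i\mapsto 2k_1-2-i$ and exchanging $\phi_\alpha\leftrightarrow\phi^\alpha$ (legitimate because $(-1)^{2k_1-2-i}=(-1)^i$ and the pairing $\eta$ is symmetric and non-degenerate) shows the two summands coincide, cancelling the factor $\tfrac12$ and giving exactly the convolution on the right-hand side of (i). Rearranging yields (i).

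For (ii) I would differentiate (i) by $\partial/\partial t_n^\sigma$. The left-hand side yields the left-hand side of (ii) plus a boundary term $\langle\langle\tau_{n+2k_1-1}(\phi_\sigma)\tau_m(\phi_\beta)\rangle\rangle_0$; the correction term of (i) differentiates to $-\langle\langle\tau_{m+2k_1-1}(\phi_\beta)\tau_n(\phi_\sigma)\rangle\rangle_0$; and the right-hand convolution produces the right-hand side of (ii) together with a cross term
\[
\sum_{i=0}^{2k_1-2}\sum_\alpha(-1)^i\langle\langle\tau_i(\phi_\alpha)\tau_m(\phi_\beta)\rangle\rangle_0\langle\langle\tau_{2k_1-2-i}(\phi^\alpha)\tau_n(\phi_\sigma)\rangle\rangle_0.
\]
Thus (ii) reduces to showing that this cross term equals $\langle\langle\tau_{m+2k_1-1}(\phi_\beta)\tau_n(\phi_\sigma)\rangle\rangle_0+\langle\langle\tau_m(\phi_\beta)\tau_{n+2k_1-1}(\phi_\sigma)\rangle\rangle_0$. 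I would prove this by first converting $\tau_i,\tau_{2k_1-2-i}$ into $T^i,T^{2k_1-2-i}$ via~\eqref{eqn:tau_jtau_m-j=TjTm-j} and~\eqref{eqn:T^k(W)-tau_kW}, and then collapsing the sum with the genus-$0$ topological recursion relation in its two-point (stability-corrected) form, whose $k_1=1$ instance is $\sum_\alpha\langle\langle\phi_\alpha\tau_m(\phi_\beta)\rangle\rangle_0\langle\langle\phi^\alpha\tau_n(\phi_\sigma)\rangle\rangle_0=\langle\langle\tau_{m+1}(\phi_\beta)\tau_n(\phi_\sigma)\rangle\rangle_0+\langle\langle\tau_m(\phi_\beta)\tau_{n+1}(\phi_\sigma)\rangle\rangle_0$. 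Statement (iii) is then obtained by differentiating (ii) by $\partial/\partial t_l^\epsilon$: the cross term now takes the form $\sum_{i,\alpha}(-1)^i\langle\langle\tau_i(\phi_\alpha)\tau_m(\phi_\beta)\tau_n(\phi_\sigma)\rangle\rangle_0\langle\langle\tau_{2k_1-2-i}(\phi^\alpha)\tau_l(\phi_\epsilon)\rangle\rangle_0$, and because two spectator insertions $\tau_m(\phi_\beta),\tau_n(\phi_\sigma)$ are now present the ordinary (uncorrected) genus-$0$ topological recursion relation applies and collapses it to the single term $\langle\langle\tau_{l+2k_1-1}(\phi_\epsilon)\tau_m(\phi_\beta)\tau_n(\phi_\sigma)\rangle\rangle_0$, which cancels the boundary term coming from the $\tilde t$-derivative on the left, leaving exactly (iii).

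The main obstacle is the cross-term cancellation behind (ii) and (iii): it is not a formal consequence of differentiation but rests on the genus-$0$ topological recursion relation, and crucially on its two different incarnations according to the number of spectator insertions -- the stability-corrected two-point form (two terms) for (ii) versus the ordinary form (one term) for (iii). The bookkeeping I expect to be most delicate is tracking the signs $(-1)^i$ through the reindexing $i\mapsto 2k_1-2-i$ and the $\phi_\alpha\leftrightarrow\phi^\alpha$ exchange, and verifying that the conversion to the $T$-operators via~\eqref{eqn:tau_jtau_m-j=TjTm-j} genuinely telescopes the weighted convolution for every $k_1\geq1$ rather than only in the base case.
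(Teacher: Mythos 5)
Your proposal is correct and takes essentially the same route as the paper: differentiate the genus-$0$ Faber--Pandharipande equation~\eqref{eqn:g=0-FP-equation} successively, merge the two halves of the quadratic term via the reindexing $i\mapsto 2k_1-2-i$, and absorb the boundary/cross terms using the two-point identity for (ii) and the generalized genus-$0$ topological recursion relation for (iii). The only difference is cosmetic: where you sketch $T$-operator derivations of those two inputs (correctly, including the distinction between the stability-corrected two-point form and the uncorrected three-point form), the paper simply cites them as \cite[Lemma 2.1]{MR2775793} and \cite[Lemma 2.2]{MR2775793}.
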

\begin{proof}
Taking derivative of equation~\eqref{eqn:g=0-FP-equation} with respect to $t_m^\beta$, we get
\begin{align*}
&\sum_{r,\alpha}\Tilde{t}_r^\alpha\langle\langle\tau_{r+2k_1-1}(\phi_\alpha)\tau_{m}(\phi_\beta)\rangle\rangle_{0}+\langle\langle\tau_{m+2k_1-1}(\phi_\beta)\rangle\rangle_0
\\=&\frac{1}{2}\sum_{i=0}^{2k_1-2}\sum_{\alpha}(-1)^i\langle\langle\tau_i(\phi_\alpha)\tau_m(\phi_\beta)\rangle\rangle_0\langle\langle\tau_{2k_1-2-i}(\phi^\alpha)\rangle\rangle_0+\frac{1}{2}\sum_{i=0}^{2k_1-2}\sum_{\alpha}(-1)^i\langle\langle\tau_i(\phi_\alpha)\rangle\rangle_0\langle\langle\tau_{2k_1-2-i}(\phi^\alpha)\tau_m(\phi_\beta)\rangle\rangle_0   \\=&\sum_{i=0}^{2k_1-2}\sum_{\alpha}(-1)^i\langle\langle\tau_i(\phi_\alpha)\tau_m(\phi_\beta)\rangle\rangle_0\langle\langle\tau_{2k_1-2-i}(\phi^\alpha)\rangle\rangle_0.
\end{align*}
So we prove equation~$(i)$. Taking derivative of the above equation with respect to $t_n^\sigma$, we have
\begin{align*}
&\sum_{r,\alpha}\Tilde{t}_r^\alpha\langle\langle\tau_{r+2k_1-1}(\phi_\alpha)\tau_{m}(\phi_\beta)\tau_n(\phi_\sigma)\rangle\rangle_{0}
+\langle\langle\tau_{n+2k_1-1}(\phi_\sigma)\tau_m(\phi_\beta)\rangle\rangle_0+\langle\langle\tau_{m+2k_1-1}(\phi_\beta)\tau_n(\phi_\sigma)\rangle\rangle_0
\\=&\sum_{i=0}^{2k_1-2}\sum_{\alpha}(-1)^i\langle\langle\tau_i(\phi_\alpha)\tau_m(\phi_\beta)\tau_n(\phi_\sigma)\rangle\rangle_0\langle\langle\tau_{2k_1-2-i}(\phi^\alpha)\rangle\rangle_0
\\&+\sum_{i=0}^{2k_1-2}\sum_{\alpha}(-1)^i\langle\langle\tau_i(\phi_\alpha)\tau_m(\phi_\beta)\rangle\rangle_0\langle\langle\tau_{2k_1-2-i}(\phi^\alpha)\tau_n(\phi_\sigma)\rangle\rangle_0.
\end{align*}
Notice that by \cite[Lemma 2.1]{MR2775793}, 
\begin{align*}
&\langle\langle\tau_{n+2k_1-1}(\phi_\sigma)\tau_m(\phi_\beta)\rangle\rangle_0+\langle\langle\tau_{m+2k_1-1}(\phi_\beta)\tau_n(\phi_\sigma)\rangle\rangle_0
\\=&\sum_{i=0}^{2k_1-2}\sum_{\alpha}(-1)^i\langle\langle\tau_i(\phi_\alpha)\tau_m(\phi_\beta)\rangle\rangle_0\langle\langle\tau_{2k_1-2-i}(\phi^\alpha)\tau_n(\phi_\sigma)\rangle\rangle_0.
\end{align*}
So we obtain equation~$(ii)$
\begin{align*}
&\sum_{r,\alpha}\Tilde{t}_r^\alpha\langle\langle\tau_{r+2k_1-1}(\phi_\alpha)\tau_{m}(\phi_\beta)\tau_n(\phi_\sigma)\rangle\rangle_{0}
=\sum_{i=0}^{2k_1-2}\sum_{\alpha}(-1)^i\langle\langle\tau_i(\phi_\alpha)\tau_m(\phi_\beta)\tau_n(\phi_\sigma)\rangle\rangle_0\langle\langle\tau_{2k_1-2-i}(\phi^\alpha)\rangle\rangle_0.
\end{align*}
Taking derivative of equation~$(ii)$ with respect to $t_l^{\epsilon}$, we get 
\begin{align}\label{eqn:der-of-ii}
&\sum_{r,\alpha}\Tilde{t}_r^\alpha\langle\langle\tau_{r+2k_1-1}(\phi_\alpha)\tau_{m}(\phi_\beta)\tau_n(\phi_\sigma)\tau_l(\phi_\epsilon)\rangle\rangle_{0}
+\langle\langle\tau_{l+2k_1-1}(\phi_\epsilon)\tau_m(\phi_\beta)\tau_n(\phi_\sigma)\rangle\rangle_0
\nonumber\\=&\sum_{i=0}^{2k_1-2}\sum_{\alpha}(-1)^i\langle\langle\tau_i(\phi_\alpha)\tau_m(\phi_\beta)\tau_n(\phi_\sigma)\tau_{l}(\phi_\epsilon)\rangle\rangle_0\langle\langle\tau_{2k_1-2-i}(\phi^\alpha)\rangle\rangle_0
\nonumber\\&+\sum_{i=0}^{2k_1-2}\sum_{\alpha}(-1)^i\langle\langle\tau_i(\phi_\alpha)\tau_m(\phi_\beta)\tau_n(\phi_\sigma)\rangle\rangle_0\langle\langle\tau_{2k_1-2-i}(\phi^\alpha)\tau_{l}(\phi_\epsilon)\rangle\rangle_0.
\end{align}
By the generalized genus-0 topological recursion relation \cite[Lemma 2.2]{MR2775793}, we have
\begin{align}\label{eqn:generalized-g=0-trr}
&\langle\langle\tau_{l+2k_1-1}(\phi_\epsilon)\tau_m(\phi_\beta)\tau_n(\phi_\sigma)\rangle\rangle_0
=\sum_{i=0}^{2k_1-2}\sum_{\alpha}(-1)^i\langle\langle\tau_i(\phi_\alpha)\tau_m(\phi_\beta)\tau_n(\phi_\sigma)\rangle\rangle_0\langle\langle\tau_{2k_1-2-i}(\phi^\alpha)\tau_{l}(\phi_\epsilon)\rangle\rangle_0.  
\end{align}
Combining equations~\eqref{eqn:der-of-ii} and \eqref{eqn:generalized-g=0-trr}, we get
\begin{align*}
&\sum_{r,\alpha}\Tilde{t}_r^\alpha\langle\langle\tau_{r+2k_1-1}(\phi_\alpha)\tau_{m}(\phi_\beta)\tau_n(\phi_\sigma)\tau_l(\phi_\epsilon)\rangle\rangle_{0}
\\=&\sum_{i=0}^{2k_1-2}\sum_{\alpha}(-1)^i\langle\langle\tau_i(\phi_\alpha)\tau_m(\phi_\beta)\tau_n(\phi_\sigma)\tau_{l}(\phi_\epsilon)\rangle\rangle_0\langle\langle\tau_{2k_1-2-i}(\phi^\alpha)\rangle\rangle_0.
\end{align*}
So we prove equation~$(iii)$.     
\end{proof}

\subsection{Vanishing of $\Psi^{\mathbb{E}}_{1,1;k_1}$ for $k_1>1$}

By formula~\eqref{eqn:formula-psi-gnk1E},  the vanishing of $\Psi^{\mathbb{E}}_{1,1;k_1}$ for $k_1>1$ is equivalent to 
\begin{align}\label{eqn:11k1needtoporve}
&-\sum_{j=0}^{1}\sum_{r,\alpha,\beta}\left(\Delta_{2k_1-1}\hat{e}_{2-j}(\alpha)\right)(r) \langle c_1(X)^{j}\cup\phi_\alpha,\phi^\beta\rangle\tilde{t}_r^\alpha \frac{\partial F_{1;\emptyset}^{\mathbb{E}}(\mathbf{t})}{\partial t^\beta_{2k_1+r-j}}\nonumber
\\&+\frac{1}{2}\sum_{g_1+g_2=1}\sum_{j=0}^{1}\sum_{s,\beta,\beta'}(-1)^{-s}\left(\Delta_{2k_1-1}\check{e}_{2-j}(\beta)\right)(-s-1)\langle c_1(X)^{j}\cup\phi^{\beta},\phi^{\beta'}\rangle
\frac{\partial F_{g_1;\emptyset}^{\mathbb{E}}(\mathbf{t})}{\partial t_s^\beta}\frac{\partial F_{g_2;\emptyset}^{\mathbb{E}}(\mathbf{t})}{\partial t_{2k_1-1-s-j}^{\beta'}} 
\nonumber
\\&+\frac{1}{2}\sum_{j=0}^{1}\sum_{s,\beta,\beta'}(-1)^{-s}\left(\Delta_{2k_1-1}\check{e}_{2-j}(\beta)\right)(-s-1)\langle c_1(X)^{j}\cup\phi^{\beta},\phi^{\beta'}\rangle
\frac{\partial^2 F_{0;\emptyset}^{\mathbb{E}}(\mathbf{t})}{\partial t_s^\beta\partial t_{2k_1-1-s-j}^{\beta'}}
\nonumber\\=&0.
\end{align}    
Here simple calculations show
\begin{align}\label{eqn:delta2k_1-1e0e1}
(\Delta_{2k_1-1}\hat{e}_{0}(1,\alpha))(r)
=0,
\quad (\Delta_{2k_1-1}\hat{e}_{1}(1,\alpha))(r)
=2(2k_1-1)
\end{align}
and
\begin{align}\label{eqn:delta2k_1-1e2}
&(\Delta_{2k_1-1}\hat{e}_{2}(1,\alpha))(r)
=2(2k_1-1)(b_\alpha+r+k_1).
\end{align}
and similar formulae hold for functions $\check{e}_i(1,\alpha)$ $i=0,1,2$. 

To prove equation~\eqref{eqn:11k1needtoporve}, we need to compute its left hand side term by term. 
\begin{lemma}\label{lem:qqk_1-1st-term}
The first term in the left hand side of equation~\eqref{eqn:11k1needtoporve} equals to 
\begin{align}\label{eqn:lem:qqk_1-1st-term}
&-\sum_{j=0}^{1}\sum_{r,\alpha,\beta}\left(\Delta_{2k_1-1}\hat{e}_{2-j}(1,\alpha)\right)(r) \langle c_1(X)^{j}\cup\phi_\alpha,\phi^\beta\rangle\tilde{t}_r^\alpha \frac{\partial F_{1;\emptyset}^{\mathbb{E}}(\mathbf{t})}{\partial t^\beta_{2k_1+r-j}}
\nonumber\\=&-2(2k_1-1)\sum_{j=0}^{1}\sum_{\alpha,\beta}e_{1-j}(k_1-1+b_\beta)\langle\langle\tau_{2k_1-1-j}(c_1(X)^{j}\cup\phi_\beta)\rangle\rangle_0\langle\langle\phi^\beta\rangle\rangle_1
\nonumber\\&-2(2k_1-1)\sum_{j=0}^{1}\sum_{i}\sum_{\alpha,\beta}(-1)^i e_{1-j}(i+b_\alpha-k_1+1)\langle\langle\tau_{i-j}(c_1(X)^{j}\cup\phi_\alpha)\phi_\beta\rangle\rangle_0\langle\langle\tau_{2k_1-2-i}(\phi^\alpha)\rangle\rangle_0\langle\langle\phi^\beta\rangle\rangle_1
\nonumber\\&-\frac{1}{12}(2k_1-1)\sum_{j=0}^{1}\sum_{i}\sum_{\alpha,\beta}(-1)^i e_{1-j}(i+b_\alpha+1-k_1)\langle\langle\tau_{i-j}(c_1(X)^j\cup\phi_\alpha)\phi_\beta\phi^\beta\rangle\rangle_0\langle\langle\tau_{2k_1-2-i}(\phi^\alpha)\rangle\rangle_0.
\end{align}
\end{lemma}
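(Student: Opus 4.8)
The plan is to reduce every term on the left-hand side of \eqref{eqn:lem:qqk_1-1st-term} to genus-0 data and then reorganize. First I would substitute the explicit values \eqref{eqn:delta2k_1-1e0e1} and \eqref{eqn:delta2k_1-1e2} of the difference operators; this splits the first term into a $j=0$ contribution carrying the weight $2(2k_1-1)(r+b_\alpha+k_1)$ and a $j=1$ contribution carrying the constant weight $2(2k_1-1)$ together with a $c_1(X)$-insertion. Since $F_{1;\emptyset}^{\mathbb{E}}=F_1$, each derivative $\partial F_{1;\emptyset}^{\mathbb{E}}/\partial t^\beta_{\bullet}$ is, after raising the index, a genus-1 one-point function $\langle\langle\tau_\bullet(c_1(X)^j\cup\phi_\alpha)\rangle\rangle_1$. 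I would then apply the standard genus-1 topological recursion relation
\[
\langle\langle\tau_{k+1}(\gamma)\rangle\rangle_1=\tfrac{1}{24}\langle\langle\tau_k(\gamma)\phi_\mu\phi^\mu\rangle\rangle_0+\langle\langle\tau_k(\gamma)\phi_\mu\rangle\rangle_0\langle\langle\phi^\mu\rangle\rangle_1
\]
to each such one-point function, thereby splitting the whole expression into ``$\langle\langle\phi^\mu\rangle\rangle_1$-pieces'' and ``$\tfrac{1}{24}$-trace pieces,'' all now purely genus-0.

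The crucial step is handling the weight $r+b_\alpha+k_1$ attached to the $j=0$ term, whose linear dependence on the summation variable $r$ obstructs a direct application of the Faber-Pandharipande identities. I would write
\[
r+b_\alpha+k_1=\big((2k_1+r-1)+b_\alpha+b_\mu\big)-(k_1-1+b_\mu)
\]
and use the quasi-homogeneity identity Lemma~\ref{lem:der-quasi-homo-eqn}(ii) to trade the first bracket for $\langle\langle\mathcal{X}\tau_{2k_1+r-1}(\phi_\alpha)\phi_\mu\rangle\rangle_0-\langle\langle\tau_{2k_1+r-2}(c_1(X)\cup\phi_\alpha)\phi_\mu\rangle\rangle_0$. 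The $c_1(X)$-shifted term produced here cancels \emph{exactly} against the $j=1$ contribution after the latter is also expanded by the genus-1 recursion; this cancellation is what makes the two values of $j$ conspire. The second bracket $-(k_1-1+b_\mu)$ is constant in $r$, so the Faber-Pandharipande derivative Lemma~\ref{lem:der-g=0-FP-eqn}(i) applies and yields both the boundary term $-\langle\langle\tau_{2k_1-1}(\phi_\mu)\rangle\rangle_0$, which produces the $j=0$ part of the first group of terms on the right-hand side of \eqref{eqn:lem:qqk_1-1st-term}, and a remaining sum over $i$.

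For the surviving Euler-field piece $\langle\langle\mathcal{X}\tau_{2k_1+r-1}(\phi_\alpha)\phi_\mu\rangle\rangle_0$ I would apply Lemma~\ref{lem:der-g=0-FP-eqn}(ii) to resolve the $\tilde{t}_r^\alpha$-sum into $\sum_i(-1)^i\langle\langle\tau_{2k_1-2-i}(\phi^\alpha)\rangle\rangle_0\langle\langle\mathcal{X}\tau_i(\phi_\alpha)\phi_\mu\rangle\rangle_0$, and then invoke Lemma~\ref{lem:der-quasi-homo-eqn}(ii) once more to evaluate $\langle\langle\mathcal{X}\tau_i(\phi_\alpha)\phi_\mu\rangle\rangle_0$. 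Combining this with the constant-weight contribution collapses the two weights into the single coefficient $i+b_\alpha-k_1+1=e_1(i+b_\alpha-k_1+1)$, which is precisely the second group of terms in \eqref{eqn:lem:qqk_1-1st-term}; the leftover $c_1(X)$-shift $\langle\langle\tau_{i-1}(c_1(X)\cup\phi_\alpha)\phi_\mu\rangle\rangle_0$ and the $\delta_{i,0}\mathcal{C}_{\alpha\mu}$ term from Lemma~\ref{lem:der-quasi-homo-eqn}(ii) supply exactly the $j=1$ contributions to the first two groups. The $\tfrac{1}{24}$-trace pieces are treated identically, now using the four-point identity Lemma~\ref{lem:der-g=0-FP-eqn}(iii) and the three-point quasi-homogeneity Lemma~\ref{lem:der-quasi-homo-eqn}(iii); the factor $2\cdot\tfrac{1}{24}=\tfrac{1}{12}$ and the argument $i+b_\alpha+1-k_1$ of the third group emerge in the same manner, with the grading shift $\tfrac{3-d}{2}$ absorbing the $b_\mu+b^\mu=1$ coming from the trace.

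The main obstacle is the weight bookkeeping: tracking the index shifts and verifying that the linear-in-$r$ coefficient from $\hat e_2$ is converted, through the interplay of the Euler/quasi-homogeneity relations and the Faber-Pandharipande reductions, into the elementary symmetric functions $e_{1-j}$ with the precise arguments $k_1-1+b_\beta$ and $i+b_\alpha-k_1+1$. The clean cancellation of all $c_1(X)$-shifted terms between the explicit $j=0$ and $j=1$ contributions—so that the $j=1$ terms on the right-hand side are in fact \emph{regenerated} by the Euler field acting on the $j=0$ data—is the structural fact that must be checked carefully, since without it the stated right-hand side would not close up.
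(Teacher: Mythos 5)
Your proposal follows the paper's own proof essentially step for step: the same splitting via the explicit values of $\Delta_{2k_1-1}\hat{e}_{2-j}$ and the genus-1 topological recursion relation~\eqref{eqn:g=1TRR} into $\langle\langle\phi^\beta\rangle\rangle_1$-pieces and $\frac{1}{24}$-trace pieces, the same decomposition $r+b_\alpha+k_1=(r+2k_1-1+b_\alpha+b_\beta)-(k_1-1+b_\beta)$ feeding the quasi-homogeneity identities of Lemma~\ref{lem:der-quasi-homo-eqn}$(ii)$--$(iii)$ and the Faber--Pandharipande derivative identities of Lemma~\ref{lem:der-g=0-FP-eqn}$(i)$--$(iii)$, and the same absorption of the $j=1$ contribution into the Euler-field term, with the $\delta_{i,0}\mathcal{C}_{\alpha\beta}$ and $c_1(X)$-shifted terms regenerating exactly the $j=1$ parts of the right-hand side. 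The argument is correct and coincides with the paper's proof.
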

\begin{proof}
By equations~\eqref{eqn:delta2k_1-1e0e1} and \eqref{eqn:delta2k_1-1e2}, we have
\begin{align}\label{eqn:-sumj=01sumralphadelta2k_1-1e_2-j}
&-\sum_{j=0}^{1}\sum_{r,\alpha}\left(\Delta_{2k_1-1}\hat{e}_{2-j}(1,\alpha)\right)(r) \langle c_1(X)^{j}\cup\phi_\alpha,\phi^\beta\rangle\tilde{t}_r^\alpha \frac{\partial F_{1;\emptyset}^{\mathbb{E}}(\mathbf{t})}{\partial t^\beta_{2k_1+r-j}}\nonumber
\\=&-2(2k_1+1)\sum_{r,\alpha}(r+b_\alpha+k_1) \Tilde{t}_r^\alpha\langle\langle\tau_{r+2k_1}(\phi_\alpha)\rangle\rangle_1
-2(2k_1+1)\sum_{r,\alpha}\tilde{t}_r^\alpha\langle\langle\tau_{r+2k_1-1}(c_1(X)\cup\phi_\alpha)\rangle\rangle_1\nonumber
\\=&-2(2k_1+1)\sum_{r,\alpha,\beta}(r+b_\alpha+k_1) \Tilde{t}_r^\alpha\langle\langle\tau_{r+2k_1-1}(\phi_\alpha)\phi^\beta\rangle\rangle_0\langle\langle\phi_\beta\rangle\rangle_1\nonumber
\\&-\frac{1}{12}(2k_1+1)\sum_{r,\alpha,\beta}(r+b_\alpha+k_1) \Tilde{t}_r^\alpha\langle\langle\tau_{r+2k_1-1}(\phi_\alpha)\phi_\beta\phi^\beta\rangle\rangle_0\nonumber
\\&-2(2k_1+1)\sum_{r,\alpha,\beta}\tilde{t}_r^\alpha\langle\langle\tau_{r+2k_1-2}(c_1(X)\cup\phi_\alpha)\phi^\beta\rangle\rangle_0
\langle\langle\phi_\beta\rangle\rangle_1\nonumber
\\&-\frac{1}{12}(2k_1+1)\sum_{r,\alpha,\beta}\tilde{t}_r^\alpha\langle\langle\tau_{r+2k_1-2}(c_1(X)\cup\phi_\alpha)\phi_\beta\phi^\beta\rangle\rangle_0.
\end{align}
Here in the last equality, we used  the genus-1 topological recursion relation
\begin{align}\label{eqn:g=1TRR}
\langle\langle\tau_s(\phi_\beta)\rangle\rangle_1 =\sum_{\alpha}\langle\langle\tau_{s-1}(\phi_\beta)\phi^\alpha\rangle\rangle_{0}
\langle\langle\phi_\alpha\rangle\rangle_1
+\frac{1}{24}\sum_{\alpha}\langle\langle\tau_{s-1}(\phi_\beta)\phi_\alpha\phi^\alpha\rangle\rangle_0.
\end{align}
% By the genus-1 topological recursion relation~\eqref{eqn:g=1TRR}, the above is equal to
% \begin{align*}
% % \\=&2\sum_{i}(-1)^i\Big(\delta_i^0\mathcal{C}_{\alpha\beta}+(i+b_\alpha+b_\beta)\langle\langle\tau_i(\phi_\alpha)\phi_\beta\rangle\rangle_0+\langle\langle\tau_{i-1}(c_1(X)\cup\phi_\alpha)\phi_\beta\rangle\rangle_0\Big)\langle\langle\tau_{2k_1-2-i}(\phi^\alpha)\rangle\rangle_0\langle\langle\phi^\beta\rangle\rangle_1
% % \\&+2\sum_{\alpha,\beta}(k_1-1+b_\beta)\langle\langle\tau_{2k_1-1}(\phi_\beta)\rangle\rangle_0\langle\langle\phi^\beta\rangle\rangle_1
% % \\&-2\sum_{\alpha,\beta}(k_1-1+b_\beta)\sum_{i}(-1)^i\langle\langle\tau_i(\phi_\alpha)\phi_\beta\rangle\rangle_0
% % \langle\langle\tau_{2k_1-2-i}(\phi^\alpha)\rangle\rangle_0\langle\langle\phi^\beta\rangle\rangle_1
% % \\&+\frac{1}{12}\sum_{i=0}^{2k_1-2}(-1)^i(i+b_\alpha+1-k_1)\langle\langle\tau_i(\phi_\alpha)\phi_\beta\phi^\beta\rangle\rangle_0\langle\langle\tau_{2k_1-2-i}(\phi^\alpha)\rangle\rangle_0
% % \\&+\frac{1}{12}\sum_{i=0}^{2k_1-2}(-1)^{i}\langle\langle\tau_{i-1}(c_1(X)\cup\phi_\alpha)\phi_\beta\phi^\beta\rangle\rangle_0\langle\langle\tau_{2k_1-2-i}(\phi^\alpha)\rangle\rangle_0
% \end{align*}

 By equation~$(ii)$ in Lemma~\ref{lem:der-quasi-homo-eqn}, we have
\begin{align*}
\langle\langle\mathcal{X}\tau_{r+2k_1-1}(\phi_\alpha)\phi_\beta\rangle\rangle_0=&(r+2k_1-1+b_\alpha+b_\beta)\langle\langle\tau_{r+2k_1-1}(\phi_\alpha)\phi_\beta\rangle\rangle_0
+\langle\langle\tau_{r+2k_1-2}(c_1(X)\cup\phi_\alpha)\phi_\beta\rangle\rangle_0.
\end{align*}
Then multiplying $\Tilde{t}_r^\alpha\langle\langle\phi_\beta\rangle\rangle_1$ on the above equation and summing over $r$ and $\alpha$, $\beta$, we get
\begin{align*}
&\sum_{r,\alpha,\beta}(r+b_\alpha+k_1) \Tilde{t}_r^\alpha\langle\langle\tau_{r+2k_1-1}(\phi_\alpha)\phi^\beta\rangle\rangle_0\langle\langle\phi_\beta\rangle\rangle_1
+\sum_{r,\alpha,\beta}\tilde{t}_r^\alpha\langle\langle\tau_{r+2k_1-2}(c_1(X)\cup\phi_\alpha)\phi^\beta\rangle\rangle_0
\langle\langle\phi_\beta\rangle\rangle_1
\\=&\sum_{r,\alpha,\beta}\Tilde{t}_r^\alpha\langle\langle\mathcal{X}\tau_{r+2k_1-1}(\phi_\alpha)\phi_\beta\rangle\rangle_0\langle\langle\phi^\beta\rangle\rangle_1
-\sum_{r,\alpha,\beta}(k_1-1+b_\beta)\Tilde{t}_r^\alpha\langle\langle\tau_{r+2k_1-1}(\phi_\alpha)\phi_\beta\rangle\rangle_0\langle\langle\phi^\beta\rangle\rangle_1.
\end{align*}
By equations~$(i)$ and $(ii)$ in Lemma~\ref{lem:der-g=0-FP-eqn}, we get
\begin{align}\label{eqn:-sumj=01sumralphadelta2k_1-1e_2-j-13}
&\sum_{r,\alpha,\beta}(r+b_\alpha+k_1) \Tilde{t}_r^\alpha\langle\langle\tau_{r+2k_1-1}(\phi_\alpha)\phi^\beta\rangle\rangle_0\langle\langle\phi_\beta\rangle\rangle_1
+\sum_{r,\alpha,\beta}\tilde{t}_r^\alpha\langle\langle\tau_{r+2k_1-2}(c_1(X)\cup\phi_\alpha)\phi^\beta\rangle\rangle_0
\langle\langle\phi_\beta\rangle\rangle_1
\nonumber\\=&\sum_{i}\sum_{\alpha,\beta}(-1)^i\langle\langle\tau_i(\phi_\alpha)\mathcal{X}\phi_\beta\rangle\rangle_0\langle\langle\tau_{2k_1-2-i}(\phi^\alpha)\rangle\rangle_0\langle\langle\phi^\beta\rangle\rangle_1
\nonumber\\&+\sum_{\alpha,\beta}(k_1-1+b_\beta)\langle\langle\tau_{2k_1-1}(\phi_\beta)\rangle\rangle_0\langle\langle\phi^\beta\rangle\rangle_1
\nonumber\\&-\sum_{\alpha,\beta}(k_1-1+b_\beta)\sum_{i}(-1)^i\langle\langle\tau_i(\phi_\alpha)\phi_\beta\rangle\rangle_0
\langle\langle\tau_{2k_1-2-i}(\phi^\alpha)\rangle\rangle_0\langle\langle\phi^\beta\rangle\rangle_1
\nonumber\\=&\sum_{i}(-1)^i\Big(\delta_{i,0}\mathcal{C}_{\alpha\beta}+(i+b_\alpha+b_\beta)\langle\langle\tau_i(\phi_\alpha)\phi_\beta\rangle\rangle_0+\langle\langle\tau_{i-1}(c_1(X)\cup\phi_\alpha)\phi_\beta\rangle\rangle_0\Big)\langle\langle\tau_{2k_1-2-i}(\phi^\alpha)\rangle\rangle_0\langle\langle\phi^\beta\rangle\rangle_1
\nonumber\\&+\sum_{\alpha,\beta}(k_1-1+b_\beta)\langle\langle\tau_{2k_1-1}(\phi_\beta)\rangle\rangle_0\langle\langle\phi^\beta\rangle\rangle_1
\nonumber\\&-\sum_{\alpha,\beta}(k_1-1+b_\beta)\sum_{i}(-1)^i\langle\langle\tau_i(\phi_\alpha)\phi_\beta\rangle\rangle_0
\langle\langle\tau_{2k_1-2-i}(\phi^\alpha)\rangle\rangle_0\langle\langle\phi^\beta\rangle\rangle_1
\nonumber\\=&\sum_{\beta}\langle\langle\tau_{2k_1-2}(c_1(X)\cup\phi_\beta)\rangle\rangle_0\langle\langle\phi^\beta\rangle\rangle_1
+\sum_{\beta}(k_1-1+b_\beta)\langle\langle\tau_{2k_1-1}(\phi_\beta)\rangle\rangle_0\langle\langle\phi^\beta\rangle\rangle_1
\nonumber\\&+\sum_{i}\sum_{\alpha,\beta}(-1)^i\langle\langle\tau_{i-1}(c_1(X)\cup\phi_\alpha)\phi_\beta\rangle\rangle_0\langle\langle\tau_{2k_1-2-i}(\phi^\alpha)\rangle\rangle_0\langle\langle\phi^\beta\rangle\rangle_1
\nonumber\\&+\sum_{i}\sum_{\alpha,\beta}(-1)^i(-k_1+1+i+b_\alpha)\langle\langle\tau_i(\phi_\alpha)\phi_\beta\rangle\rangle_0
\langle\langle\tau_{2k_1-2-i}(\phi^\alpha)\rangle\rangle_0\langle\langle\phi^\beta\rangle\rangle_1.
\end{align}
Here we used equation $(ii)$ in Lemma~\ref{lem:der-quasi-homo-eqn} in the second equality.

By equation~$(iii)$ in Lemma~\ref{lem:der-quasi-homo-eqn}, we have 
\begin{align*}
&(r+b_\alpha+k_1) \langle\langle\tau_{r+2k_1-1}(\phi_\alpha)\phi_\beta\phi^\beta\rangle\rangle_0
+\langle\langle\tau_{r+2k_1-2}(c_1(X)\cup\phi_\alpha)\phi_\beta\phi^\beta\rangle\rangle_0    
\\=&\langle\langle\mathcal{X}\tau_{r+2k_1-1}(\phi_\alpha)\phi_\beta\phi^\beta\rangle\rangle_0
-(k_1-1+b_\beta+b^\beta-\frac{3-d}{2})\langle\langle\tau_{r+2k_1-1}(\phi_\alpha)\phi_\beta\phi^\beta\rangle\rangle_0
\\=&\langle\langle\mathcal{X}\tau_{r+2k_1-1}(\phi_\alpha)\phi_\beta\phi^\beta\rangle\rangle_0
-(k_1-\frac{3-d}{2})\langle\langle\tau_{r+2k_1-1}(\phi_\alpha)\phi_\beta\phi^\beta\rangle\rangle_0.
\end{align*} 
Here in the last equality, we used basic identity $b_\beta+b^\beta=1$. 
Then multiplying $\Tilde{t}_r^\alpha$ on the above equation and summing over $r$ and $\alpha, \beta$, we get
\begin{align*}
&\sum_{r,\alpha,\beta}(r+b_\alpha+k_1) \Tilde{t}_r^\alpha\langle\langle\tau_{r+2k_1-1}(\phi_\alpha)\phi_\beta\phi^\beta\rangle\rangle_0
+\sum_{r,\alpha,\beta}\tilde{t}_r^\alpha\langle\langle\tau_{r+2k_1-2}(c_1(X)\cup\phi_\alpha)\phi_\beta\phi^\beta\rangle\rangle_0  
\\=&\sum_{r,\alpha,\beta}\Tilde{t}_r^\alpha\langle\langle\mathcal{X}\tau_{r+2k_1-1}(\phi_\alpha)\phi_\beta\phi^\beta\rangle\rangle_0
-(k_1-\frac{3-d}{2})\sum_{r,\alpha,\beta}\Tilde{t}_r^\alpha\langle\langle\tau_{r+2k_1-1}(\phi_\alpha)\phi_\beta\phi^\beta\rangle\rangle_0.
\end{align*}
By equation~$(ii)$ and $(iii)$ in Lemma~\ref{lem:der-g=0-FP-eqn}, then by equation~$(iii)$ in Lemma~\ref{lem:der-quasi-homo-eqn},  we get
\begin{align}\label{eqn:-sumj=01sumralphadelta2k_1-1e_2-j-24}
&\sum_{r,\alpha,\beta}(r+b_\alpha+k_1) \Tilde{t}_r^\alpha\langle\langle\tau_{r+2k_1-1}(\phi_\alpha)\phi_\beta\phi^\beta\rangle\rangle_0
+\sum_{r,\alpha,\beta}\tilde{t}_r^\alpha\langle\langle\tau_{r+2k_1-2}(c_1(X)\cup\phi_\alpha)\phi_\beta\phi^\beta\rangle\rangle_0 \nonumber \\=&\sum_{i=0}^{2k_1-2}\sum_{\alpha,\beta}(-1)^i\langle\langle\tau_i(\phi_\alpha)\mathcal{X}\phi_\beta\phi^\beta\rangle\rangle_0\langle\langle\tau_{2k_1-2-i}(\phi^\alpha)\rangle\rangle_0\nonumber
\\&-(k_1-\frac{3-d}{2})\sum_{i=0}^{2k_1-2}\sum_{\alpha,\beta}(-1)^i\langle\langle\tau_i(\phi_\alpha)\phi_\beta\phi^\beta\rangle\rangle_0
\langle\langle\tau_{2k_1-2-i}(\phi^\alpha)\rangle\rangle_0\nonumber
\\=&\sum_{i=0}^{2k_1-2}\sum_{\alpha,\beta}(-1)^i(i+b_\alpha+1-\frac{3-d}{2})\langle\langle\tau_i(\phi_\alpha)\phi_\beta\phi^\beta\rangle\rangle_0\langle\langle\tau_{2k_1-2-i}(\phi^\alpha)\rangle\rangle_0\nonumber
\\&+\sum_{i=0}^{2k_1-2}\sum_{\alpha,\beta}(-1)^{i}\langle\langle\tau_{i-1}(c_1(X)\cup\phi_\alpha)\phi_\beta\phi^\beta\rangle\rangle_0\langle\langle\tau_{2k_1-2-i}(\phi^\alpha)\rangle\rangle_0\nonumber
\\&-(k_1-\frac{3-d}{2})\sum_{i=0}^{2k_1-2}\sum_{\alpha,\beta}(-1)^i\langle\langle\tau_i(\phi_\alpha)\phi_\beta\phi^\beta\rangle\rangle_0
\langle\langle\tau_{2k_1-2-i}(\phi^\alpha)\rangle\rangle_0
\nonumber\\=&\sum_{i=0}^{2k_1-2}\sum_{\alpha,\beta}(-1)^i(i+b_\alpha+1-k_1)\langle\langle\tau_i(\phi_\alpha)\phi_\beta\phi^\beta\rangle\rangle_0\langle\langle\tau_{2k_1-2-i}(\phi^\alpha)\rangle\rangle_0\nonumber
\\&+\sum_{i=0}^{2k_1-2}\sum_{\alpha,\beta}(-1)^{i}\langle\langle\tau_{i-1}(c_1(X)\cup\phi_\alpha)\phi_\beta\phi^\beta\rangle\rangle_0\langle\langle\tau_{2k_1-2-i}(\phi^\alpha)\rangle\rangle_0.
\end{align}
Thus equation~\eqref{eqn:lem:qqk_1-1st-term} follows from equations~\eqref{eqn:-sumj=01sumralphadelta2k_1-1e_2-j}, \eqref{eqn:-sumj=01sumralphadelta2k_1-1e_2-j-13} and \eqref{eqn:-sumj=01sumralphadelta2k_1-1e_2-j-24}.
% so we have
% \begin{align*}
% &-\sum_{j=0}^{1}\sum_{r,\alpha}\left(\Delta_{2k_1-1}\hat{e}_{2-j}(1,\alpha)\right)(r) \cup\langle c_1(X)^{j}\cdot\phi_\alpha,\phi^\beta\rangle\cdot\tilde{t}_r^\alpha \frac{\partial F_{1;\emptyset}^{\mathbb{E}}(\mathbf{t})}{\partial t^\beta_{(2k_1-1)+r+1-j}}
% \\=&-2(2k_1-1)\sum_{i}(-1)^i\Big(\delta_i^0\mathcal{C}_{\alpha\beta}+(i+b_\alpha+b_\beta)\langle\langle\tau_i(\phi_\alpha)\phi_\beta\rangle\rangle_0+\langle\langle\tau_{i-1}(c_1\cdot\phi_\alpha)\phi_\beta\rangle\rangle_0\Big)\langle\langle\tau_{2k_1-2-i}(\phi^\alpha)\rangle\rangle_0\langle\langle\phi^\beta\rangle\rangle_1
% \\&-2(2k_1-1)\sum_{\alpha,\beta}(k_1-1+b_\beta)\langle\langle\tau_{2k_1-1}(\phi_\beta)\rangle\rangle_0\langle\langle\phi^\beta\rangle\rangle_1
% \\&+2(2k_1-1)\sum_{\alpha,\beta}(k_1-1+b_\beta)\sum_{i}(-1)^i\langle\langle\tau_i(\phi_\alpha)\phi_\beta\rangle\rangle_0
% \langle\langle\tau_{2k_1-2-i}(\phi^\alpha)\rangle\rangle_0\langle\langle\phi^\beta\rangle\rangle_1
% \\&-\frac{1}{12}(2k_1-1)\sum_{i=0}^{2k_1-2}(-1)^i(i+b_\alpha+1-k_1)\langle\langle\tau_i(\phi_\alpha)\phi_\beta\phi^\beta\rangle\rangle_0\langle\langle\tau_{2k_1-2-i}(\phi^\alpha)\rangle\rangle_0
% \\&-\frac{1}{12}(2k_1-1)\sum_{i=0}^{2k_1-2}(-1)^{i}\langle\langle\tau_{i-1}(c_1\cdot\phi_\alpha)\phi_\beta\phi^\beta\rangle\rangle_0\langle\langle\tau_{2k_1-2-i}(\phi^\alpha)\rangle\rangle_0
% \end{align*}
\end{proof}
\begin{lemma}\label{lem:qqk_1-2nd-term}
The second term in the left hand side of equation~\eqref{eqn:11k1needtoporve} equals to  
\begin{align}\label{eqn:lem:qqk_1-2nd-term}
&\frac{1}{2}\sum_{g_1+g_2=1}\sum_{j=0}^{1}\sum_{s,\beta,\beta'}(-1)^{-s}\left(\Delta_{2k_1-1}\check{e}_{2-j}(1,\beta)\right)(-s-1)\langle c_1(X)^{j}\cup\phi^{\beta},\phi^{\beta'}\rangle
\frac{\partial F_{g_1;\emptyset}^{\mathbb{E}}(\mathbf{t})}{\partial t_s^\beta}\frac{\partial F_{g_2;\emptyset}^{\mathbb{E}}(\mathbf{t})}{\partial t_{2k_1-1-s-j}^{\beta'}} 
\nonumber
\\=&2(2k_1-1)\sum_{\beta}\sum_{j=0}^{1}e_{1-j}(k_1-1+b^\beta)
\langle\langle\phi_\beta\rangle\rangle_{1}
\langle\langle\tau_{2k_1-1-j}(c_1(X)^{j}\cup\phi^\beta)\rangle\rangle_{0}
\nonumber\\&+2(2k_1-1)\sum_{s,\alpha,\beta}\sum_{j=0}^{1}(-1)^{-s}e_{1-j}(k_1-1-s+b^\beta)
\langle\langle\tau_{s-1}(\phi_\beta)\phi^\alpha\rangle\rangle_{0}
\langle\langle\phi_\alpha\rangle\rangle_1
\langle\langle\tau_{2k_1-1-s-j}(c_1(X)^j\cup\phi^\beta)\rangle\rangle_{0}
\nonumber\\&+\frac{1}{12}(2k_1-1)\sum_{s,\alpha,\beta}\sum_{j=0}^{1}(-1)^{-s}e_{1-j}(k_1-1-s+b^\beta)
\langle\langle\tau_{s-1}(\phi_\beta)\phi_\alpha\phi^\alpha\rangle\rangle_0\langle\langle\tau_{2k_1-1-s-j}(c_1(X)^{j}\cup\phi^\beta)\rangle\rangle_{0}. 
\end{align}
\end{lemma}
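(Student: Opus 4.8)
The plan is to evaluate the left-hand side of \eqref{eqn:lem:qqk_1-2nd-term} directly, in the spirit of the proof of Lemma~\ref{lem:qqk_1-1st-term}. First I would record the value of the difference operator on $\check{e}$: by the analogues of \eqref{eqn:delta2k_1-1e0e1} and \eqref{eqn:delta2k_1-1e2} for $\check{e}_i(1,\beta)$ one gets $\left(\Delta_{2k_1-1}\check{e}_{2-j}(1,\beta)\right)(-s-1)=2(2k_1-1)e_{1-j}(k_1-1-s+b^\beta)$ for $j=0,1$. Substituting this (and using $\tfrac12\cdot 2=1$) turns the second term into $(2k_1-1)\sum_{g_1+g_2=1}\sum_{j,s,\beta,\beta'}(-1)^{-s}e_{1-j}(k_1-1-s+b^\beta)\langle c_1(X)^j\cup\phi^\beta,\phi^{\beta'}\rangle\langle\langle\tau_s(\phi_\beta)\rangle\rangle_{g_1}\langle\langle\tau_{2k_1-1-s-j}(\phi_{\beta'})\rangle\rangle_{g_2}$, where I have kept the pairing uncontracted and used $F_{g;\emptyset}^{\mathbb{E}}=F_g$.

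The key step, which I expect to be the main obstacle, is to show that the two summands $(g_1,g_2)=(1,0)$ and $(g_1,g_2)=(0,1)$ coincide, so that the whole sum equals twice the $(1,0)$ term. I would prove this by relabelling the $(0,1)$ summand: set $\tilde{s}=2k_1-1-s-j$ (the level of the genus-$1$ factor) and exchange the two cohomology indices. Three inputs make the two match. The triple intersection $\langle c_1(X)^j\cup\phi^\beta,\phi^{\beta'}\rangle=\int_X c_1(X)^j\cup\phi^\beta\cup\phi^{\beta'}$ is symmetric in $\beta,\beta'$; the selection rule forces $b_{\beta'}=b^\beta+j$ on the support of this pairing, so the argument $k_1-1-s+b^\beta$ of $e_{1-j}$ is carried to the negative of the corresponding argument in the $(1,0)$ term; and the sign satisfies $(-1)^{-s}=(-1)^{-\tilde s}(-1)^{j+1}$. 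Since $e_1(-y)=-e_1(y)$ while $e_0\equiv1$, for $j=0$ the sign flip of $e_1$ cancels the factor $(-1)^{j+1}=-1$, and for $j=1$ both factors are trivial; hence each piece of the $(0,1)$ summand reproduces the $(1,0)$ summand exactly. Tracking these signs is the delicate part.

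Finally, after the contraction $\sum_{\beta'}\langle c_1(X)^j\cup\phi^\beta,\phi^{\beta'}\rangle\tau_m(\phi_{\beta'})=\tau_m(c_1(X)^j\cup\phi^\beta)$, the expression reads $2(2k_1-1)\sum_{j,s,\beta}(-1)^{-s}e_{1-j}(k_1-1-s+b^\beta)\langle\langle\tau_s(\phi_\beta)\rangle\rangle_1\langle\langle\tau_{2k_1-1-s-j}(c_1(X)^j\cup\phi^\beta)\rangle\rangle_0$. I would separate the $s=0$ term, which is already primary and yields the first line of \eqref{eqn:lem:qqk_1-2nd-term}, and apply the genus-$1$ topological recursion relation \eqref{eqn:g=1TRR} to $\langle\langle\tau_s(\phi_\beta)\rangle\rangle_1$ for $s\geq1$: its $\langle\langle\phi_\alpha\rangle\rangle_1$-piece gives the second line, while its $\tfrac1{24}$-piece gives the third, the coefficient $2(2k_1-1)\cdot\tfrac1{24}=\tfrac1{12}(2k_1-1)$ matching the statement. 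Collecting the three contributions yields \eqref{eqn:lem:qqk_1-2nd-term}.
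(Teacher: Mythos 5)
Your proposal is correct and follows essentially the same route as the paper: substitute the explicit values $\left(\Delta_{2k_1-1}\check{e}_{2-j}(1,\beta)\right)(-s-1)=2(2k_1-1)\,e_{1-j}(k_1-1-s+b^\beta)$, combine the $(g_1,g_2)=(1,0)$ and $(0,1)$ contributions into twice the genus-$(1,0)$ term, then split off the $s=0$ part and apply the genus-$1$ topological recursion relation~\eqref{eqn:g=1TRR} to the rest. The only difference is cosmetic: you treat $j=0,1$ uniformly and spell out, via the selection rule $b_{\beta'}=b^\beta+j$ together with the parity of $e_1$ and the sign $(-1)^{-s}=(-1)^{-\tilde{s}}(-1)^{j+1}$, the symmetrization step that the paper simply asserts.
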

\begin{proof}

It is easy to see $j=0$ part in the first term on the left hand side of equation~\eqref{eqn:11k1needtoporve} equals to 
\begin{align*}
&\frac{1}{2}\sum_{g_1+g_2=g}\sum_{s,\beta,\beta'}(-1)^{-s}\left(\Delta_{2k_1-1}\check{e}_{2}(1,\beta)\right)(-s-1)\langle \phi^{\beta},\phi^{\beta'}\rangle
\frac{\partial F_{g_1;\emptyset}^{\mathbb{E}}(\mathbf{t})}{\partial t_s^\beta}\frac{\partial F_{g_2;\emptyset}^{\mathbb{E}}(\mathbf{t})}{\partial t_{2k_1-1-s}^{\beta'}}
\\=&\frac{1}{2}(2k_1-1)\sum_{g_1+g_2=1}(-1)^{-s}\sum_{s,\beta}(2k_1-2-2s+2b^\beta)
\langle\langle\tau_s(\phi_\beta)\rangle\rangle_{g_1}
\langle\langle\tau_{2k_1-1-s}(\phi^\beta)\rangle\rangle_{g_2}
\\=&2(2k_1-1)\sum_{\beta}(k_1-1+b^\beta)
\langle\langle\tau_0(\phi_\beta)\rangle\rangle_{1}
\langle\langle\tau_{2k_1-1}(\phi^\beta)\rangle\rangle_{0}
\\&+2(2k_1-1)\sum_{s\geq1,\beta}(-1)^{-s}(k_1-1-s+b^\beta)
\langle\langle\tau_s(\phi_\beta)\rangle\rangle_{1}
\langle\langle\tau_{2k_1-1-s}(\phi^\beta)\rangle\rangle_{0}.
\end{align*}

Applying the genus-1 topological recursion relation~\eqref{eqn:g=1TRR} to the last term, we get that it is  equal to
\begin{align}\label{eqn:j=0part=second-term-resu}
&2(2k_1-1)\sum_{\beta}(k_1-1+b^\beta)
\langle\langle\tau_0(\phi_\beta)\rangle\rangle_{1}
\langle\langle\tau_{2k_1-1}(\phi^\beta)\rangle\rangle_{0}\nonumber
\\&+2(2k_1-1)\sum_{s\geq1,\alpha,\beta}(-1)^{-s}(k_1-1-s+b^\beta)
\Big(\langle\langle\tau_{s-1}(\phi_\beta)\phi^\alpha\rangle\rangle_{0}
\langle\langle\phi_\alpha\rangle\rangle_1
\Big)\langle\langle\tau_{2k_1-1-s}(\phi^\beta)\rangle\rangle_{0}\nonumber
\\&+2(2k_1-1)\sum_{s\geq1,\alpha,\beta}(-1)^{-s}(k_1-1-s+b^\beta)
\Big(\frac{1}{24}\langle\langle\tau_{s-1}(\phi_\beta)\phi_\alpha\phi^\alpha\rangle\rangle_0\Big)\langle\langle\tau_{2k_1-1-s}(\phi^\beta)\rangle\rangle_{0}.    
\end{align}

And $j=1$ part in the first term in the left hand side of equation~\eqref{eqn:11k1needtoporve}  equals to 
\begin{align*}
&\frac{1}{2}\sum_{g_1+g_2=g}\sum_{s,\beta,\beta'}(-1)^{-s}\left(\Delta_{2k_1-1}\check{e}_{1}(1,\beta)\right)(-s-1)\langle c_1(X)\cup\phi^{\beta},\phi^{\beta'}\rangle
\frac{\partial F_{g_1;\emptyset}^{\mathbb{E}}(\mathbf{t})}{\partial t_s^\beta}\frac{\partial F_{g_2;\emptyset}^{\mathbb{E}}(\mathbf{t})}{\partial t_{2k_1-2-s}^{\beta'}} 
\\=&(2k_1-1)\sum_{g_1+g_2=1}(-1)^{-s}\sum_{s,\beta}
\langle\langle\tau_s(\phi_\beta)\rangle\rangle_{g_1}
\langle\langle\tau_{2k_1-2-s}(c_1(X)\cup\phi^\beta)\rangle\rangle_{g_2}
\\=&(2k_1-1)\sum_{s,\beta}(-1)^{-s}
\langle\langle\tau_s(\phi_\beta)\rangle\rangle_{0}
\langle\langle\tau_{2k_1-2-s}(c_1(X)\cup\phi^\beta)\rangle\rangle_{1}
\\&+(2k_1-1)\sum_{s,\beta}(-1)^{-s}\langle\langle\tau_s(\phi_\beta)\rangle\rangle_{1}
\langle\langle\tau_{2k_1-2-s}(c_1(X)\cup\phi^\beta)\rangle\rangle_{0}
\\=&2(2k_1-1)\sum_{s,\beta}(-1)^{-s}\langle\langle\tau_s(\phi_\beta)\rangle\rangle_{1}
\langle\langle\tau_{2k_1-2-s}(c_1(X)\cup\phi^\beta)\rangle\rangle_{0}.
\end{align*}
Applying the genus-1 topological recursion relation~\eqref{eqn:g=1TRR} to the last term, we get that it is  equal to
\begin{align}\label{eqn:j=1part=second-term-resu}
&2(2k_1-1)\sum_{\beta}
\langle\langle\tau_0(\phi_\beta)\rangle\rangle_{1}
\langle\langle\tau_{2k_1-2}(c_1(X)\cup\phi^\beta)\rangle\rangle_{0}\nonumber
\\&+2(2k_1-1)\sum_{s\geq1}\sum_{\alpha,\beta}(-1)^{-s}
\langle\langle\tau_{s-1}(\phi_\beta)\phi^\alpha\rangle\rangle_{0}
\langle\langle\phi_\alpha\rangle\rangle_1
\langle\langle\tau_{2k_1-2-s}(c_1(X)\cup\phi^\beta)\rangle\rangle_{0}\nonumber
\\&+2(2k_1-1)\sum_{s\geq1}\sum_{\alpha,\beta}
\frac{(-1)^{-s}}{24}\langle\langle\tau_{s-1}(\phi_\beta)\phi_\alpha\phi^\alpha\rangle\rangle_0\langle\langle\tau_{2k_1-2-s}(c_1(X)\cup\phi^\beta)\rangle\rangle_{0}.
\end{align}  
Combining the expressions~\eqref{eqn:j=0part=second-term-resu} and \eqref{eqn:j=1part=second-term-resu}, we prove this lemma.  
\end{proof}  
\begin{lemma}\label{lem:qqk_1-3rd-term}
The third term in the left hand side of equation~\eqref{eqn:11k1needtoporve} equals to 
\begin{align}\label{eqn:lem:qqk_1-3rd-term}
&\frac{1}{2}\sum_{s,\beta,\beta'}(-1)^{-s}\left(\Delta_{2k_1-1}\check{e}_{2}(1,\beta)\right)(-s-1)\langle \phi^{\beta},\phi^{\beta'}\rangle
\frac{\partial^2 F_{0;\emptyset}^{\mathbb{E}}(\mathbf{t})}{\partial t_s^\beta\partial t_{2k_1-1-s}^{\beta'}} 
\nonumber\\&+\frac{1}{2}\sum_{s,\beta,\beta'}(-1)^{-s}\left(\Delta_{2k_1-1}\check{e}_{1}(1,\beta)\right)(-s-1)\langle c_1(X)\cup\phi^{\beta},\phi^{\beta'}\rangle
\frac{\partial^2 F_{0;\emptyset}^{\mathbb{E}}(\mathbf{t})}{\partial t_s^\beta\partial t_{2k_1-2-s}^{\beta'}}
\nonumber\\=&(2k_1-1)\sum_{s,\beta}(-1)^{-s}(k_1-1-s+b^\beta)
\langle\langle\tau_s(\phi_\beta)\tau_{2k_1-1-s}(\phi^\beta)\rangle\rangle_0
\nonumber\\&+(2k_1-1)\sum_{s,\beta}(-1)^{-s}\langle\langle\tau_s(\phi_\beta)\tau_{2k_1-2-s}(c_1(X)\cup\phi^\beta)\rangle\rangle_0.
\end{align}    
\end{lemma}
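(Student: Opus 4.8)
The plan is to prove this identity by direct substitution, since, in contrast with the first two terms, the third term involves only the genus-zero two-point functions with no contribution from unstable correlators; consequently no topological recursion relation or quasi-homogeneity input is needed. First I would note that $F_{0;\emptyset}^{\mathbb{E}}=F_0$, so that the second derivatives appearing here are precisely the genus-zero two-point correlators,
\[
\frac{\partial^2 F_{0;\emptyset}^{\mathbb{E}}}{\partial t_s^\beta\,\partial t_l^{\beta'}}
=\langle\langle\tau_s(\phi_\beta)\tau_l(\phi_{\beta'})\rangle\rangle_0 ,
\]
and then read off the two contributions $j=0$ and $j=1$ displayed on the left-hand side.

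Next I would insert the explicit evaluations of the difference operators. Using the $\check{e}$-analogues of \eqref{eqn:delta2k_1-1e0e1} and \eqref{eqn:delta2k_1-1e2} (replace $b_\alpha$ by $b^\beta$), one has
\[
\bigl(\Delta_{2k_1-1}\check{e}_{1}(1,\beta)\bigr)(-s-1)=2(2k_1-1),
\qquad
\bigl(\Delta_{2k_1-1}\check{e}_{2}(1,\beta)\bigr)(-s-1)=2(2k_1-1)(k_1-1-s+b^\beta).
\]
Substituting the $\check{e}_2$ value into the $j=0$ summand and the $\check{e}_1$ value into the $j=1$ summand cancels the prefactor $\frac{1}{2}$ and yields the overall factor $(2k_1-1)$ in each.

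Finally I would perform the contraction over $\beta'$. For $j=0$ the pairing is $\langle\phi^\beta,\phi^{\beta'}\rangle=\eta^{\beta\beta'}$, so summing against it simply raises the index and gives $\langle\langle\tau_s(\phi_\beta)\tau_{2k_1-1-s}(\phi^\beta)\rangle\rangle_0$. For $j=1$ I would use the expansion $c_1(X)\cup\phi^\beta=\sum_{\beta'}\langle c_1(X)\cup\phi^\beta,\phi^{\beta'}\rangle\phi_{\beta'}$ to absorb the structure constants into the insertion, producing $\langle\langle\tau_s(\phi_\beta)\tau_{2k_1-2-s}(c_1(X)\cup\phi^\beta)\rangle\rangle_0$. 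Adding the two contributions gives the claimed right-hand side. I do not anticipate any genuine obstacle in this lemma: unlike Lemmas~\ref{lem:qqk_1-1st-term} and \ref{lem:qqk_1-2nd-term}, the two-point functions need not be reduced further, so the only points requiring care are the bookkeeping of the sign $(-1)^{-s}$ and the index-raising convention for $\mathcal{C}$.
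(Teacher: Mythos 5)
Your proposal is correct and matches the paper's own proof, which likewise reduces the lemma to the two evaluations $\bigl(\Delta_{2k_1-1}\check{e}_{1}(1,\beta)\bigr)(-s-1)=2(2k_1-1)$ and $\bigl(\Delta_{2k_1-1}\check{e}_{2}(1,\beta)\bigr)(-s-1)=2(2k_1-1)(k_1-1-s+b^\beta)$ followed by direct substitution and contraction over $\beta'$. The extra details you supply (identifying the second derivatives of $F_{0;\emptyset}^{\mathbb{E}}=F_0$ with two-point functions and absorbing $\langle c_1(X)\cup\phi^\beta,\phi^{\beta'}\rangle$ into the insertion) are exactly the steps the paper leaves implicit as ``simple calculations.''
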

\begin{proof}
This lemma follows from  simple calculations  
\begin{align*}
 (\Delta_{2k_1-1}\check{e}_{1}(1,\alpha))(-s-1)
=2(2k_1-1),\,
(\Delta_{2k_1-1}\check{e}_{2}(1,\alpha))(-s-1)
=2(2k_1-1)(b^\alpha-s-1+k_1).
\end{align*}
    
\end{proof}

By Lemma~\ref{lem:qqk_1-1st-term} and \ref{lem:qqk_1-2nd-term}, the sum of the first two terms in equation~\eqref{eqn:11k1needtoporve}
\begin{align}\label{eqn:sumof-first-two-terms}
&-\sum_{j=0}^{1}\sum_{r,\alpha,\beta}\left(\Delta_{2k_1-1}\hat{e}_{2-j}(\alpha)\right)(r) \langle c_1(X)^{j}\cup\phi_\alpha,\phi^\beta\rangle\tilde{t}_r^\alpha \frac{\partial F_{1;\emptyset}^{\mathbb{E}}(\mathbf{t})}{\partial t^\beta_{2k_1+r-j}}
\nonumber\\&+\frac{1}{2}\sum_{g_1+g_2=1}\sum_{j=0}^{1}\sum_{s,\beta,\beta'}(-1)^{-s}\left(\Delta_{2k_1-1}\check{e}_{2-j}(\beta)\right)(-s-1)\langle c_1(X)^{j}\cup\phi^{\beta},\phi^{\beta'}\rangle
\frac{\partial F_{g_1;\emptyset}^{\mathbb{E}}(\mathbf{t})}{\partial t_s^\beta}\frac{\partial F_{g_2;\emptyset}^{\mathbb{E}}(\mathbf{t})}{\partial t_{2k_1-1-s-j}^{\beta'}}
\nonumber\\=&0.
\end{align}    
In fact, it is easy to see the first term on the right hand side of \eqref{eqn:lem:qqk_1-1st-term} cancel with the first term on the right hand side of equation~\eqref{eqn:lem:qqk_1-2nd-term}, the second term on the right hand side of \eqref{eqn:lem:qqk_1-1st-term} cancel with the second term on the right hand side of equation~\eqref{eqn:lem:qqk_1-2nd-term} and the third term on the right hand side of \eqref{eqn:lem:qqk_1-1st-term} cancel with the third term on the right hand side of equation~\eqref{eqn:lem:qqk_1-2nd-term}.

Thus in order to prove the vanishing of $\Psi_{1,1;k_1}^{\mathbb{E}}$, via Lemma~\ref{lem:qqk_1-3rd-term}, we only need to prove the following identity 
\begin{align*}
&\sum_{s,\beta}(-1)^{-s}(k_1-1-s+b^\beta)
\langle\langle\tau_s(\phi_\beta)\tau_{2k_1-1-s}(\phi^\beta)\rangle\rangle_0
+\sum_{s,\beta}(-1)^{-s}\langle\langle\tau_s(\phi_\beta)\tau_{2k_1-2-s}(c_1(X)\cup\phi^\beta)\rangle\rangle_0=0.    
\end{align*}

For our purpose, we introduce $Q$ operator on the big phase space, which is defined by
\begin{align}
\label{eqn:Q-operator-definition}    Q(W):=\sum_{n,\alpha}f_{n,\alpha}\left((n+b_\alpha)\tau_{n}(\phi_\alpha) +\tau_{n-1}(c_1(X)\cup\phi_\alpha)\right)
\end{align}
for any vector fields $W=\sum_{n,\alpha}f_{n,\alpha}\tau_{n}(\phi_\alpha)$ on the big phase space.
% \Xin{geometrically, $b_\alpha$ comes from Hodge grading operator 
% \[G(\phi_\alpha)=b_\alpha\phi_\alpha,\, G(\phi^\alpha)=\sum_{\beta}\eta^{\alpha\beta}G(\phi_\beta)=\sum_{\beta}\eta^{\alpha\beta}b_\beta\phi_\beta=\sum_{\beta}\eta^{\alpha\beta}(1-b_\alpha)\phi_\beta
% =b^\alpha\phi^\alpha\]
% and
% \[b^\alpha\neq \sum_{\beta}\eta^{\alpha\beta}b_\beta\]}
By symmetry of indices, it is easy to see
\begin{align*}
\sum_{s,\alpha}(-1)^{s}\langle\langle\tau_s(\phi_\alpha)\tau_{2k-1-s}(\phi^\alpha)\rangle\rangle_0=0.    
\end{align*}

After simplification, it becomes (we always assume $k_1>1$)
\begin{align*}
&\sum_{s,\beta}(-1)^{-s}(k_1-1-s+b^\beta)
\langle\langle\tau_s(\phi_\beta)\tau_{2k_1-1-s}(\phi^\beta)\rangle\rangle_0
+\sum_{s,\beta}(-1)^{-s}\langle\langle\tau_s(\phi_\beta)\tau_{2k_1-2-s}(c_1(X)\cup\phi^\beta)\rangle\rangle_0
\\=&\sum_{s,\beta}(-1)^{-s}(2k_1-1-s+b^\beta)
\langle\langle\tau_s(\phi_\beta)\tau_{2k_1-1-s}(\phi^\beta)\rangle\rangle_0
+\sum_{s,\beta}(-1)^{-s}\langle\langle\tau_s(\phi_\beta)\tau_{2k_1-2-s}(c_1(X)\cup\phi^\beta)\rangle\rangle_0
\\=&\sum_{s,\beta}(-1)^{-s}
\langle\langle\tau_s(\phi_\beta)Q(\tau_{2k_1-1-s}(\phi^\beta))\rangle\rangle_0.
\end{align*} 
This can be proved 
to be 0 by the following lemma. 
\begin{lemma}\label{lem:-1jtau2l-1-jQtaujg=0} For $l\geq2$, 
\begin{align*}
\sum_{j=0}^{2l-1}\sum_{\alpha}(-1)^{j}\langle\langle\tau_{2l-1-j}(\phi^\alpha)Q(\tau_j(\phi_\alpha))\rangle\rangle_{0}=0.   \end{align*}   
\end{lemma}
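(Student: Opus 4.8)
The plan is to expand the operator $Q$ inside the bracket and split the sum into a ``weighted two-point'' piece and a ``$c_1$-shifted'' piece. Writing $Q(\tau_j(\phi_\alpha)) = (j+b_\alpha)\tau_j(\phi_\alpha) + \tau_{j-1}(c_1(X)\cup\phi_\alpha)$, the claim becomes $A+B=0$, where
\[
A = \sum_{j,\alpha}(-1)^j(j+b_\alpha)\langle\langle\tau_{2l-1-j}(\phi^\alpha)\tau_j(\phi_\alpha)\rangle\rangle_0, \qquad B = \sum_{j,\alpha}(-1)^j\langle\langle\tau_{2l-1-j}(\phi^\alpha)\tau_{j-1}(c_1(X)\cup\phi_\alpha)\rangle\rangle_0 .
\]
First I would record the two elementary bookkeeping tools: the Casimir symmetry $\sum_\alpha\phi^\alpha\otimes\phi_\alpha = \sum_\alpha\phi_\alpha\otimes\phi^\alpha$ combined with the symmetry of $\langle\langle\cdot\,\cdot\rangle\rangle_0$, and the self-adjointness of $c_1(X)\cup$ with respect to $\eta$. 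Since $(-1)^{2l-1-j}=-(-1)^j$, substituting $j\mapsto 2l-1-j$ and applying the Casimir symmetry immediately yields the unweighted vanishing $\sum_{j,\alpha}(-1)^j\langle\langle\tau_{2l-1-j}(\phi^\alpha)\tau_j(\phi_\alpha)\rangle\rangle_0=0$, which I will use repeatedly.

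Next I would exploit the quasi-homogeneity equation. Note that $Q$ is precisely the restriction to $\mathbb{H}_+$ of the infinitesimally symplectic operator $l_0=z\frac{d}{dz}+\mu+\frac{1}{2}+\frac{\rho}{z}$ under $\tau_j(\phi_\alpha)\leftrightarrow z^j\phi_\alpha$; correspondingly, Lemma~\ref{lem:der-quasi-homo-eqn}(ii) can be rephrased as the Leibniz identity $\langle\langle Q(u)\,v\rangle\rangle_0 + \langle\langle u\,Q(v)\rangle\rangle_0 = \langle\langle\mathcal{X}\,u\,v\rangle\rangle_0 - (\text{boundary }\mathcal{C}\text{ term})$ for basis fields $u,v$. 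Applying this with $u=\tau_{2l-1-j}(\phi^\alpha)$, $v=\tau_j(\phi_\alpha)$, using $b^\alpha+b_\alpha=1$ and $l\geq 2$ (so the $\mathcal{C}$-boundary term is absent), and summing against $(-1)^j$ collapses the $\mathcal{X}$-contribution to $\mathcal{X}$ applied to the unweighted sum, hence to zero; this lets me symmetrize the weights in $A$ and match the $c_1$ terms between $A$ and $B$ through self-adjointness. I must emphasize, however, that symmetry and quasi-homogeneity \emph{alone} only produce identities that are automatically consistent — for instance they reduce $A+B=0$ to a relation equivalent to itself — so a genuinely genus-zero input is unavoidable (this is already visible for $X=\mathrm{pt}$, where the claim forces relations such as $\langle\langle\tau_1\tau_2\rangle\rangle_0=3\langle\langle\tau_0\tau_3\rangle\rangle_0$ that do not follow from the Euler grading).

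The essential step, and the main obstacle, is therefore to invoke the genus-$0$ topological recursion relation. Concretely, I would recast the alternating descendant sum $A+B$ using the conversion identity~\eqref{eqn:tau_jtau_m-j=TjTm-j} between $\tau_+$-sums and $T$-sums, replacing $\tau_j(\phi_\alpha)$ and $\tau_{2l-1-j}(\phi^\alpha)$ by $T^j(\phi_\alpha)$ and $T^{2l-1-j}(\phi^\alpha)$; the correction terms produced by~\eqref{eqn:T^k(W)-tau_kW} are themselves genus-$0$ two-point functions, and the generalized genus-$0$ TRR~\eqref{eqn:generalized-g=0-trr} (together with Lemma~2.1 of \cite{MR2775793}) makes the resulting double sum telescope, likely organized as an induction on $l$. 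Morally this is the statement that the genus-$0$ calibration assembled from the two-point functions is symplectic, so that transporting the infinitesimally symplectic operator $Q=l_0$ through it preserves the vanishing $\sum_j(-1)^j\langle\langle\tau_{m-j}(\phi^\alpha)\tau_j(\phi_\alpha)\rangle\rangle_0=0$. Verifying the telescoping — carefully tracking the weights $(j+b_\alpha)$ and the $c_1(X)$-insertions as they pass through the recursion — is the part that requires real work; once it is in place all surviving terms cancel in pairs and $A+B=0$ follows.
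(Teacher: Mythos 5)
Your preparatory observations are all sound: the unweighted vanishing $\sum_{j,\alpha}(-1)^j\langle\langle\tau_{2l-1-j}(\phi^\alpha)\tau_j(\phi_\alpha)\rangle\rangle_0=0$ by symmetry, the correct diagnosis that Euler grading alone is vacuous here (your point-case check $\langle\langle\tau_1\tau_2\rangle\rangle_0=3\langle\langle\tau_0\tau_3\rangle\rangle_0$ is right), and the correct toolkit — the conversion identity~\eqref{eqn:tau_jtau_m-j=TjTm-j} and a genus-0 topological recursion input. But there is a genuine gap at exactly the step you defer as ``requires real work'', and it is not merely unverified: the mechanism you propose does not exist at the level where you invoke it. Both the TRR in the form $\langle\langle T(\tau_n(\phi_\alpha))\tau_m(\phi_\beta)\tau_k(\phi_\gamma)\rangle\rangle_0=0$ and the generalized TRR~\eqref{eqn:generalized-g=0-trr} annihilate \emph{three-point} functions carrying a $T$-insertion. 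After you convert $A+B$ to $T$-form you are left with genuine \emph{two-point} functions $\langle\langle T^{2l-1-j}(\phi^\alpha)\,Q(T^j(\phi_\alpha))\rangle\rangle_0$; these do not vanish termwise (e.g.\ $\langle\langle T(u)v\rangle\rangle_0=\langle\langle\tau_+(u)v\rangle\rangle_0-\sum_\alpha\langle\langle u\phi^\alpha\rangle\rangle_0\langle\langle\phi_\alpha v\rangle\rangle_0\neq0$ in general), and no telescoping scheme is exhibited or, as far as I can see, available at the two-point stage.

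The paper closes this gap by a pivot you explicitly set aside: quasi-homogeneity is used not to differentiate the known-zero unweighted sum (which, as you correctly note, yields $0=0$), but by inserting the Euler field $\mathcal{X}$ into the \emph{weighted} sum itself. By Lemma~\ref{lem:der-quasi-homo-eqn}$(ii)$, the definition of $Q$, and $b_\alpha+b^\alpha=1$, one computes
\begin{align*}
\sum_{j,\alpha}(-1)^j\langle\langle\tau_{2l-1-j}(\phi^\alpha)\,\mathcal{X}\,Q(\tau_j(\phi_\alpha))\rangle\rangle_0
=2l\sum_{j,\alpha}(-1)^j\langle\langle\tau_{2l-1-j}(\phi^\alpha)\,Q(\tau_j(\phi_\alpha))\rangle\rangle_0,
\end{align*}
so the target sum $S$ reappears with the nonzero eigenvalue $2l$ — but now packaged as a \emph{three-point} function. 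Converting that three-point sum to $T$-form via~\eqref{eqn:tau_jtau_m-j=TjTm-j} and commuting $Q$ past the powers of $T$ with $(QT^k-T^kQ)(W)=kT^k(W)-T^{k-1}(\mathcal{X}\bullet W)$ leaves only terms in which at least one slot carries $T^k$ with $k\geq1$ (this is precisely where $l\geq2$ enters: for $j\leq1$ or $j=2l-1$ the opposite slot still has $T^{2l-1-j}$ or $T^{j-1}$ with exponent at least $2l-2\geq2$), so every term is killed by the genus-0 TRR, giving $2l\,S=0$, hence $S=0$. The missing idea in your proposal is thus the $\mathcal{X}$-insertion that raises the problem from the two-point to the three-point level where the recursion relations actually bite; without it, your final step cannot be completed by the tools you cite.
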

\begin{proof}
By equation~$(ii)$ in Lemma~\ref{lem:der-quasi-homo-eqn},
\begin{align*}
&\sum_{j=0}^{2l-1}\sum_{\alpha}(-1)^{j}\langle\langle\tau_{2l-1-j}(\phi^\alpha)\mathcal{X} Q(\tau_j(\phi_\alpha))\rangle\rangle_0   
\\=&\sum_{j,\alpha}(-1)^{j}\langle\langle Q(\tau_{2l-1-j}(\phi^\alpha))Q(\tau_j(\phi_\alpha))\rangle\rangle_0+\sum_{j,\alpha}(-1)^{j}\langle\langle\tau_{2l-1-j}(\phi^\alpha)Q^2(\tau_j(\phi_\alpha))\rangle\rangle_0
\\&+\sum_{j,\alpha}(-1)^{j}\delta_{2l-1-j,0}\langle c_1(X)\cup\phi^\alpha,\pi Q(\tau_{j}(\phi_\alpha))\rangle
\\=&\sum_{j,\alpha}(-1)^{j}(2l-1-j+b^\alpha)\langle\langle\tau_{2l-1-j}(\phi^\alpha)Q(\tau_j(\phi_\alpha))\rangle\rangle_0
+\sum_{j,\alpha}(-1)^{j}\langle\langle\tau_{2l-2-j}(c_1(X)\cup\phi^\alpha)Q(\tau_j(\phi_\alpha))\rangle\rangle_0
\\&+\sum_{j,\alpha}(-1)^{j}(j+b_\alpha)\langle\langle\tau_{2l-1-j}(\phi^\alpha)Q(\tau_j(\phi_\alpha))\rangle\rangle_0
+\sum_{j,\alpha}(-1)^{j}\langle\langle\tau_{2l-1-j}(\phi^\alpha)Q(\tau_{j-1}(c_1(X)\cup\phi_\alpha))\rangle\rangle_0
\\&+\sum_{j,\alpha}(-1)^{j}\delta_{2l-1-j,0}\langle c_1(X)\cup\phi^\alpha,\pi Q(\tau_{j}(\phi_\alpha))\rangle
\\=&2l\sum_{j,\alpha}(-1)^{j}\langle\langle\tau_{2l-1-j}(\phi^\alpha)Q(\tau_{j}(\phi_\alpha))\rangle\rangle_0
\end{align*}
where $\pi$ is the projection from the big phase space to the small phase space, defined by $\pi(\tau_{n}(\phi_\alpha)):=\delta_{n,0}\phi_\alpha$.

By equation~\eqref{eqn:tau_jtau_m-j=TjTm-j}, we have
\begin{align*}
\sum_{j=0}^{2l-1}\sum_{\alpha}(-1)^{j}\langle\langle\tau_{2l-1-j}(\phi^\alpha)\mathcal{X} Q(\tau_j(\phi_\alpha))\rangle\rangle_0  =\sum_{j=0}^{2l-1}\sum_{\alpha}(-1)^{j}\langle\langle T^{2l-1-j}(\phi^\alpha)\mathcal{X} Q(T^j(\phi_\alpha))\rangle\rangle_0.      
\end{align*}
Recall the relations between $Q$ and $T$ operators
\begin{align*}
(QT^k-T^kQ)(W)=k T^k(W)-T^{k-1}(\mathcal{X}\bullet W)    
\end{align*}
for $k\geq1$.
Then this lemma follows from the genus-0 topological recursion relation of the form
\begin{align*}
\langle\langle T(\tau_{n}(\phi_\alpha))\tau_m(\phi_\beta)\tau_k(\phi_\gamma)\rangle\rangle_0=0
\end{align*}
for any $n,m,k$ and $\alpha,\beta,\gamma$. 
\end{proof}
\subsection{Vanishing of $\Psi^{\mathbb{E}}_{1,1;k_1}$ for $k_1=1$}
In this subsection, we prove the vanishing of $\Psi^{\mathbb{E}}_{1,1;1}$. By formula~\eqref{eqn:formula-psi-gnk1E}, let $\Psi^{\mathbb{E}}_{1,1;1}=\Psi^{\mathbb{E}'}_{1,1;1}+\Psi^{\mathbb{E}''}_{1,1;1}$, where 
\begin{align}\label{eqn:formula-psi-g=1n=1k1=1E-'}
&\Psi^{\mathbb{E}'}_{1,1;1}\nonumber   \\=&-\frac{B_{2k_1}}{(2k_1)!}\left\{\sum_{j=0}^{n+1}\sum_{r,\alpha,\beta}\left(\Delta_{2k_1-1}\hat{e}_{n+1-j}(n,\alpha)\right)(r) \langle c_1(X)^{j}\cup\phi_\alpha,\phi^\beta\rangle\tilde{t}_r^\alpha \frac{\partial F_{g;\emptyset}^{\mathbb{E}}(\mathbf{t})}{\partial t^\beta_{2k_1-1+r+n-j}}\nonumber
\right.\\&\left.-\frac{1}{2}\sum_{g_1+g_2=g}\sum_{j=0}^{n+1}\sum_{s,\beta,\beta'}(-1)^{-s}\left(\Delta_{2k_1-1}\check{e}_{n+1-j}(n,\beta)\right)(-s-1)\langle c_1(X)^{j}\cup\phi^{\beta},\phi^{\beta'}\rangle
\frac{\partial F_{g_1;\emptyset}^{\mathbb{E}}(\mathbf{t})}{\partial t_s^\beta}\frac{\partial F_{g_2;\emptyset}^{\mathbb{E}}(\mathbf{t})}{\partial t_{2k_1-2-s+n-j}^{\beta'}} 
\nonumber
\right.\\&\left.-\frac{1}{2}\sum_{j=0}^{n+1}\sum_{s,\beta,\beta'}(-1)^{-s}\left(\Delta_{2k_1-1}\check{e}_{n+1-j}(n,\beta)\right)(-s-1)\langle c_1(X)^{j}\cup\phi^{\beta},\phi^{\beta'}\rangle
\frac{\partial^2 F_{g-1;\emptyset}^{\mathbb{E}}(\mathbf{t})}{\partial t_s^\beta\partial t_{2k_1-2-s+n-j}^{\beta'}}\right\}\Bigg|_{\substack{g=1\\n=1\\k_1=1}}
\end{align}  
and
\begin{align}\label{eqn:formula-psi-g=1n=1k1=1E-''}
&\Psi^{\mathbb{E}''}_{1,1;1}\nonumber   \\=&\left\{\sum_{j=0}^{n+1}\sum_{r,\alpha,\beta}\left(\hat{e}_{n+1-j}(n,\alpha)\right)(r) \langle c_1(X)^{j}\cup\phi_\alpha,\phi^\beta\rangle\tilde{t}_r^\alpha \frac{\partial F_{g;k_1}^{\mathbb{E}}(\mathbf{t})}{\partial t^\beta_{r+n-j}}
\nonumber\right.\\&\left.-\frac{1}{2}\sum_{g_1+g_2=g}\sum_{j=0}^{n+1}\sum_{s,\beta,\beta'}(-1)^{-s}\left(\check{e}_{n+1-j}(n,\beta)\right)(-s-1)\langle c_1(X)^{j}\cup\phi^{\beta},\phi^{\beta'}\rangle
\frac{\partial F_{g_1;k_1}^{\mathbb{E}}(\mathbf{t})}{\partial t_s^\beta}\frac{\partial F_{g_2;\emptyset}^{\mathbb{E}}(\mathbf{t})}{\partial t_{-s-1+n-j}^{\beta'}} 
\nonumber
\right.\\&\left.-\frac{1}{2}\sum_{g_1+g_2=g}\sum_{j=0}^{n+1}\sum_{s,\beta,\beta'}(-1)^{-s}\left(\check{e}_{n+1-j}(n,\beta)\right)(-s-1)\langle c_1(X)^{j}\cup\phi^{\beta},\phi^{\beta'}\rangle
\frac{\partial F_{g_1;\emptyset}^{\mathbb{E}}(\mathbf{t})}{\partial t_s^\beta}\frac{\partial F_{g_2;k_1}^{\mathbb{E}}(\mathbf{t})}{\partial t_{-s-1+n-j}^{\beta'}} 
\nonumber
\right.\\&\left.-\frac{1}{2}\sum_{j=0}^{n+1}\sum_{s,\beta,\beta'}(-1)^{-s}\left(\check{e}_{n+1-j}(n,\beta)\right)(-s-1)\langle c_1(X)^{j}\cup\phi^{\beta},\phi^{\beta'}\rangle
\frac{\partial^2 F_{g-1;k_1}^{\mathbb{E}}(\mathbf{t})}{\partial t_s^\beta\partial t_{-s-1+n-j}^{\beta'}}\right\}\Bigg|_{\substack{g=1\\n=1\\k_1=1}}.
\end{align}  
\begin{lemma}\label{lem:psi-111E'}
\begin{align*}
\Psi^{\mathbb{E}'}_{1,1;1}    
=
\frac{1}{6}\sum_{\beta}b_\beta\langle\langle\tau_1(\phi_\beta)\phi^\beta\rangle\rangle_0+\frac{1}{12}\sum_{\beta}\langle\langle\phi_\beta(c_1(X)\cup\phi^\beta)\rangle\rangle_0.
\end{align*}
    
\end{lemma}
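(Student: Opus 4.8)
The plan is to specialize the decomposition $\Psi^{\mathbb{E}}_{1,1;1}=\Psi^{\mathbb{E}'}_{1,1;1}+\Psi^{\mathbb{E}''}_{1,1;1}$ to the first piece $\Psi^{\mathbb{E}'}_{1,1;1}$ as defined in \eqref{eqn:formula-psi-g=1n=1k1=1E-'}, setting $g=n=k_1=1$ and recording the overall prefactor $-B_2/2!=-\tfrac{1}{12}$. Since $\Delta_{2k_1-1}$ kills the constant functions $\hat e_0,\check e_0$, the index $j$ runs effectively only over $\{0,1\}$, so the three summands inside the braces are exactly the three expressions already evaluated in Lemmas~\ref{lem:qqk_1-1st-term}, \ref{lem:qqk_1-2nd-term} and \ref{lem:qqk_1-3rd-term}. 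The first point to record is that these three lemmas hold for every $k_1\ge1$: their proofs invoke only the genus-0 and genus-1 topological recursion relations, the genus-0 Faber–Pandharipande equation \eqref{eqn:g=0-FP-equation} (valid for $k_1\ge1$), and the quasi-homogeneity identities of Lemma~\ref{lem:der-quasi-homo-eqn}, none of which needs $k_1\ge2$. In particular they may all be applied at $k_1=1$.

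With this in place I would repeat the argument of \eqref{eqn:sumof-first-two-terms}: the first summand, computed in \eqref{eqn:lem:qqk_1-1st-term}, cancels the second summand, computed in \eqref{eqn:lem:qqk_1-2nd-term}, term by term, the three cancellations being produced by the symmetry of the Poincaré pairing together with the genus-1 topological recursion relation \eqref{eqn:g=1TRR}. This cancellation is purely formal and does not see the value of $k_1$, so it persists at $k_1=1$. Tracking the signs (the explicit $-\tfrac{1}{12}$, the minus signs on the second and third summands inside the braces, and the sign with which Lemma~\ref{lem:qqk_1-3rd-term} records the third summand), only the third summand survives and one finds $\Psi^{\mathbb{E}'}_{1,1;1}=\tfrac{1}{12}\,\mathrm{III}$, where $\mathrm{III}$ is the right-hand side of \eqref{eqn:lem:qqk_1-3rd-term}.

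The essential difference from the case $k_1>1$ enters here. For $k_1>1$ the third summand was rewritten as $\sum_{s,\beta}(-1)^{-s}\langle\langle\tau_s(\phi_\beta)Q(\tau_{2k_1-1-s}(\phi^\beta))\rangle\rangle_0$ and shown to vanish using Lemma~\ref{lem:-1jtau2l-1-jQtaujg=0}, which is only available for $l\ge2$, that is $k_1\ge2$. At $k_1=1$ this vanishing mechanism is unavailable and the term must be evaluated by hand. Setting $k_1=1$ in \eqref{eqn:lem:qqk_1-3rd-term} collapses the summation ranges: in the first piece the constraints $s\ge0$ and $2k_1-1-s\ge0$ force $s\in\{0,1\}$, while in the second piece $s\ge0$ and $2k_1-2-s\ge0$ force $s=0$. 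Using $1-b^\beta=b_\beta$ together with the fact that the metric pairs $\phi_\beta$ with a dual class of complementary $b$-degree $b^\beta$ (so that $b^\beta=b_\gamma$ whenever $\eta^{\beta\gamma}\neq0$), the $s=0$ and $s=1$ contributions of the first piece fold together into $2\sum_\beta b_\beta\langle\langle\tau_1(\phi_\beta)\phi^\beta\rangle\rangle_0$, and the second piece becomes $\sum_\beta\langle\langle\phi_\beta(c_1(X)\cup\phi^\beta)\rangle\rangle_0$. Multiplying by $\tfrac{1}{12}$ then yields the asserted formula.

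Because the bulk of the work is already carried by Lemmas~\ref{lem:qqk_1-1st-term}–\ref{lem:qqk_1-3rd-term}, I expect the main obstacle to be bookkeeping rather than conceptual. One must check carefully that the first-two-term cancellation really is $k_1$-independent, in particular that the shrinking index ranges at $k_1=1$ do not break any of the pairwise matches, and one must keep the several sign flips straight so as to land on the correct nonzero value rather than its negative. The only genuinely computational step is the collapse of \eqref{eqn:lem:qqk_1-3rd-term} at $k_1=1$ and the pairing-symmetry manipulation that produces the factor $2$ in front of $\sum_\beta b_\beta\langle\langle\tau_1(\phi_\beta)\phi^\beta\rangle\rangle_0$.
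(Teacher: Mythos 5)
Your proposal is correct and follows essentially the same route as the paper: the paper's proof of this lemma likewise invokes the cancellation \eqref{eqn:sumof-first-two-terms} at $k_1=1$ (silently, where you justify its $k_1$-independence explicitly), reduces $\Psi^{\mathbb{E}'}_{1,1;1}$ to $\frac{1}{12}$ times the right-hand side of \eqref{eqn:lem:qqk_1-3rd-term}, and then evaluates that expression at $k_1=1$ by collapsing the ranges to $s\in\{0,1\}$ (resp.\ $s=0$) and using $1-b^\beta=b_\beta$ together with the pairing symmetry to fold the two contributions into $2\sum_\beta b_\beta\langle\langle\tau_1(\phi_\beta)\phi^\beta\rangle\rangle_0$. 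Your prefactor bookkeeping ($-B_2/2!=-\tfrac{1}{12}$ combined with the internal signs, matching the paper's $\frac{B_2}{2}\cdot\frac{1}{2}$ times $(\Delta_1\check{e}_{2-j})$) and your observation that Lemma~\ref{lem:-1jtau2l-1-jQtaujg=0} is unavailable at $l=1$, so the third term survives, both agree with the paper's proof.
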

\begin{proof}
By equations~\eqref{eqn:lem:qqk_1-3rd-term},  \eqref{eqn:sumof-first-two-terms} and \eqref{eqn:formula-psi-g=1n=1k1=1E-'}, we have
\begin{align*}
&\Psi^{\mathbb{E}'}_{1,1;1}    
\\=&\frac{B_{2}}{2}\frac{1}{2}\sum_{j=0}^{n+1}\sum_{s,\beta,\beta'}(-1)^{-s}\left(\Delta_{2k_1-1}\check{e}_{n+1-j}(n,\beta)\right)(-s-1)\langle c_1(X)^{j}\cup\phi^{\beta},\phi^{\beta'}\rangle
\frac{\partial^2 F_{g-1;\emptyset}^{\mathbb{E}}(\mathbf{t})}{\partial t_s^\beta\partial t_{2k_1-2-s+n-j}^{\beta'}}
\\=&\frac{1}{12}\sum_{s,\beta}(-1)^{-s}(-s+b^\beta)\langle\langle\tau_{s}(\phi_\beta)\tau_{1-s}(\phi^\beta)\rangle\rangle_0
+\frac{1}{12}\sum_{\beta}\langle\langle\phi_\beta(c_1(X)\cup\phi^\beta)\rangle\rangle_0
\\=&\frac{1}{12}\sum_{\beta}b^\beta\langle\langle(\phi_\beta)\tau_{1}(\phi^\beta)\rangle\rangle_0
-\frac{1}{12}\sum_{\beta}(-1+b^\beta)\langle\langle\tau_{1}(\phi_\beta)(\phi^\beta)\rangle\rangle_0
+\frac{1}{12}\sum_{\beta}\langle\langle\phi_\beta(c_1(X)\cup\phi^\beta)\rangle\rangle_0
\\=&
\frac{1}{6}\sum_{\beta}b_\beta\langle\langle\tau_1(\phi_\beta)\phi^\beta\rangle\rangle_0+\frac{1}{12}\sum_{\beta}\langle\langle\phi_\beta(c_1(X)\cup\phi^\beta)\rangle\rangle_0.
\end{align*}
    
\end{proof}

\begin{lemma}\label{lem:psi-111E''}
\begin{align*}
\Psi_{1,1;1}^{\mathbb{E}''}
=&
-\frac{1}{6}\sum_{\alpha}b_\alpha 
\langle\langle\tau_1(\phi_\alpha)\phi^\alpha\rangle\rangle_0
-\frac{1}{12}\sum_{\beta}(2b_\beta+1)\langle\langle(c_1(X)\cup\phi_\beta)\phi^\beta\rangle\rangle_0.
\end{align*}    
\end{lemma}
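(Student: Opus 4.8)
The plan is to evaluate the defining expression \eqref{eqn:formula-psi-g=1n=1k1=1E-''} for $\Psi^{\mathbb{E}''}_{1,1;1}$ directly, reducing the genus-$1$ Hodge insertion $\ch_1(\mathbb{E})$ to ordinary descendant correlators and then simplifying with the genus-$0$ and genus-$1$ structural equations already at hand. First I would specialize $g=1$, $n=1$, $k_1=1$ and discard the vanishing pieces. Since the Hodge bundle has rank $0$ over $\overline{\mathcal{M}}_{0,n}(X,A)$, we have $F_{0;1}^{\mathbb{E}}=0$; this kills the last term of \eqref{eqn:formula-psi-g=1n=1k1=1E-''} (which carries $F_{g-1;1}^{\mathbb{E}}=F_{0;1}^{\mathbb{E}}$) and, in the two genus-splitting product terms, forces the surviving contributions to be $(g_1,g_2)=(1,0)$ and $(0,1)$ respectively, each pairing one derivative of $F_{1;1}^{\mathbb{E}}$ with one derivative of $F_{0;\emptyset}^{\mathbb{E}}=F_0$. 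At this point only three families of terms remain, built from $\partial_{t_m^\delta}F_{1;1}^{\mathbb{E}}=\langle\langle\tau_m(\phi_\delta);\ch_1(\mathbb{E})\rangle\rangle_1$, the genus-$0$ one-point functions $\langle\langle\tau_\bullet(\bullet)\rangle\rangle_0$, and the elementary-symmetric values $\hat e_{2-j}(1,\alpha)$, $\check e_{2-j}(1,\beta)$, which I would record explicitly (e.g. $\hat e_1(1,\alpha)(r)=2r+2b_\alpha+1$ and $\hat e_2(1,\alpha)(r)=(r+b_\alpha)(r+1+b_\alpha)$, with the analogous formulae for $\check e$ in terms of $b^\beta$).

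The key new input is an expression for the genus-$1$ one-point Hodge descendants. I would obtain it from the Faber--Pandharipande identity \eqref{eqn:FP-formula} for $l=1$: since $B_2/2!=\tfrac1{12}$ and $2l-2=0$, the operator $D_1$ gives, after dividing $D_1\mathcal{D}^{\mathbb{E}}=0$ by $\mathcal{D}^{\mathbb{E}}$, setting $\mathbf{s}=0$, and extracting the $\hbar^0$ coefficient,
\[
F_{1;1}^{\mathbb{E}}=\frac{1}{12}\Big(-\sum_{n,\alpha}\tilde t_n^\alpha\langle\langle\tau_{n+1}(\phi_\alpha)\rangle\rangle_1+\tfrac12\sum_{\alpha,\beta}\eta^{\alpha\beta}\langle\langle\phi_\alpha\phi_\beta\rangle\rangle_0+\sum_{\alpha,\beta}\eta^{\alpha\beta}\langle\langle\phi_\alpha\rangle\rangle_0\langle\langle\phi_\beta\rangle\rangle_1\Big).
\]
Differentiating once in $t_m^\delta$ then expresses every $\langle\langle\tau_m(\phi_\delta);\ch_1(\mathbb{E})\rangle\rangle_1$ as a combination of genus-$1$ one- and two-point descendants together with genus-$0$ two- and three-point descendants. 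The residual genus-$1$ correlators I would collapse to genus-$0$ data via the genus-$1$ topological recursion relation \eqref{eqn:g=1TRR}, exactly as in the proof of Lemma~\ref{lem:qqk_1-1st-term}.

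With the Hodge insertions eliminated, the computation becomes a purely genus-$\leq1$ descendant identity, and I would process it with the same toolkit used for the $k_1>1$ lemmas: the quasi-homogeneity (Euler) relations of Lemma~\ref{lem:der-quasi-homo-eqn} to trade the weight factors $r+b_\alpha$, $-s+b^\beta$ and the $c_1(X)$-twisted correlators against one another; the genus-$0$ Faber--Pandharipande derivatives of Lemma~\ref{lem:der-g=0-FP-eqn} to resum the $\sum_{r,\alpha}\tilde t_r^\alpha$ contractions; and the string, dilaton and divisor equations to collapse one-point functions onto the small phase space. I expect the bulk of the $\mathbf{t}$-dependent contributions (those carrying a free $\sum_r\tilde t_r^\alpha$ or a factor $\langle\langle\phi_\beta\rangle\rangle_1$) to cancel in matched pairs, in analogy with \eqref{eqn:sumof-first-two-terms}, leaving precisely the two genus-$0$ two-point residues $\sum_\alpha b_\alpha\langle\langle\tau_1(\phi_\alpha)\phi^\alpha\rangle\rangle_0$ and $\sum_\beta(2b_\beta+1)\langle\langle(c_1(X)\cup\phi_\beta)\phi^\beta\rangle\rangle_0$ with the asserted coefficients $-\tfrac16$ and $-\tfrac1{12}$.

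The main obstacle is the sheer bookkeeping: after substituting the genus-$1$ Hodge formula and applying the genus-$1$ TRR, many terms appear, and the constant $\tfrac1{12}$, the signs $(-1)^{-s}$, and the grading weights $b_\alpha,b^\beta$ must be tracked with care so that the numerous $\mathbf{t}$-dependent pieces cancel cleanly. A secondary subtlety is getting the boundary index ranges right at $k_1=1$, where several shifted indices such as $2k_1-2-s$ collapse to small values, so that no spurious one- or two-point terms are dropped. The structural reason $\Psi^{\mathbb{E}''}_{1,1;1}$ is nonzero (unlike the $k_1>1$ regime, where the second brace of \eqref{eqn:formula-psi-gnk1E} vanishes identically because $\ch_{2k_1-1}(\mathbb{E})=0$ in genus $1$ for $k_1\geq2$, the Hodge bundle being a line bundle with $\lambda_1^2=0$) is precisely that $\ch_1(\mathbb{E})$ does not vanish, so $F_{1;1}^{\mathbb{E}}$ survives and must be computed rather than shown to vanish. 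Combined with Lemma~\ref{lem:psi-111E'}, this will yield $\Psi^{\mathbb{E}}_{1,1;1}=\Psi^{\mathbb{E}'}_{1,1;1}+\Psi^{\mathbb{E}''}_{1,1;1}$ in closed form.
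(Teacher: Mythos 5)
Your reduction step matches the paper exactly: specializing \eqref{eqn:formula-psi-g=1n=1k1=1E-''} at $g=n=k_1=1$, using $F_{0;1}^{\mathbb{E}}=0$ (rank-$0$ Hodge bundle in genus $0$) to kill the $F_{g-1;k_1}$ term and to force $(g_1,g_2)=(1,0)$ and $(0,1)$ in the product terms, and recording the weights $\hat e_1(1,\alpha)(r)=2r+2b_\alpha+1$, $\hat e_2(1,\alpha)(r)=(r+b_\alpha)(r+b_\alpha+1)$, reproduces precisely the paper's intermediate formula for $\Psi^{\mathbb{E}''}_{1,1;1}$ (including the $\sum_\beta b_\beta b^\beta\langle\langle\phi_\beta;\ch_1(\mathbb{E})\rangle\rangle_1\langle\langle\phi^\beta\rangle\rangle_0$ contribution from the $s=0$, $j=0$ slot). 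From there you diverge in two respects. First, to eliminate $\ch_1(\mathbb{E})$ you invoke the genus-$1$ coefficient of the Faber--Pandharipande equation \eqref{eqn:FP-formula} (your displayed formula for $F^{\mathbb{E}}_{1;1}$ is correct, including the $\tfrac12\eta^{\alpha\beta}\langle\langle\phi_\alpha\phi_\beta\rangle\rangle_0$ and cross terms), followed by genus-$1$ TRR \eqref{eqn:g=1TRR}; the paper instead uses the tautological relation $\ch_1(\mathbb{E})=\lambda_1=\tfrac1{12}\delta_{irr}$ in the pointwise form \eqref{eqn:tauo-rel-univer-eqn-1,1}, $\langle\langle\tau_n(\phi_\alpha);\ch_1(\mathbb{E})\rangle\rangle_1=\tfrac1{24}\sum_\beta\langle\langle\tau_n(\phi_\alpha)\phi_\beta\phi^\beta\rangle\rangle_0$. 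The two inputs are geometrically equivalent (both come from $\lambda_1=\tfrac1{12}\delta_{irr}$ via the splitting axiom), but the paper's version substitutes term by term with no residual genus-$1$ correlators, whereas your route leaves genus-$1$ two-point functions inside the $\sum_r\tilde t_r^\alpha$ contraction that must then be processed by a differentiated TRR plus the Lemma~\ref{lem:der-g=0-FP-eqn}-type resummations before you even reach the genus-$0$ stage. Second, and more importantly, your endgame is under-specified where the paper's is sharp: the paper does not perform an open-ended cancellation hunt with the Euler relations of Lemma~\ref{lem:der-quasi-homo-eqn}, but observes that after substituting \eqref{eqn:tauo-rel-univer-eqn-1,1} the entire weighted $\tilde t$-sum is, up to the factor $\tfrac1{24}$, exactly the genus-$0$ $L_1$ constraint differentiated twice along $t_0^\beta$, $t_0^\sigma$, contracted with $\eta^{\sigma\beta}$ and summed; the nonzero closed form then drops out in one line. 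Note in this light that your phrase ``cancel in matched pairs, in analogy with \eqref{eqn:sumof-first-two-terms}'' is slightly misleading: the $\mathbf{t}$-dependent pieces do not cancel among themselves (the answer is nonzero); they cancel against the differentiated $L_1$ constraint, leaving the two-point residues with the coefficients $-\tfrac16$ and $-\tfrac1{12}$. Your toolkit is in principle strong enough to verify this by hand, since the genus-$0$ $L_1$ constraint is itself a consequence of it, but if you want a finite computation you should import the paper's closing identity explicitly rather than rely on expected cancellations. Your side remarks are accurate: for $k_1\geq2$ both $F^{\mathbb{E}}_{1;k_1}$ and $F^{\mathbb{E}}_{0;k_1}$ vanish ($\lambda_1^2=0$ on $\overline{\mathcal{M}}_{1,n}$ by Mumford), which is why the second brace only survives at $k_1=1$.
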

\begin{proof}
Recall on $\overline{\mathcal{M}}_{1,1}$, there is a tautological relation
\begin{align*}
\ch_1(\mathbb{E})=\lambda_1 =\frac{1}{12}\delta_{irr}   
\end{align*}
where $\delta_{irr}$ is the boundary stratum with a non-separated nodal point. It can be translated to a universal equation for Hodge integrals of any target varieties
\begin{align}
\label{eqn:tauo-rel-univer-eqn-1,1}\langle\langle\tau_{n}(\phi_\alpha);\ch_1(\mathbb{E})\rangle\rangle_1=\frac{1}{24}\sum_{\beta}\langle\langle\tau_n(\phi_\alpha)\phi_\beta\phi^\beta\rangle\rangle_0.    
\end{align}
Then
\begin{align*}
&\Psi^{\mathbb{E}''}_{1,1;1}\nonumber
\\=&\sum_{j=0}^{2}\sum_{r,\alpha}\left(\hat{e}_{2-j}(1,\alpha)\right)(r) \tilde{t}_r^\alpha\langle\langle\tau_{r+1-j}(c_1(X)^j\cup\phi_\alpha);\ch_1(\mathbb{E})\rangle\rangle_1
+\sum_{\beta}b_\beta b^\beta
\langle\langle\phi_\beta;\ch_1(\mathbb{E})\rangle\rangle_1\langle\langle\phi^\beta\rangle\rangle_0\nonumber
\\=&\sum_{r,\alpha}(r+b_\alpha)(r+b_\alpha+1)\tilde{t}_r^\alpha\langle\langle\tau_{r+1}(\phi_\alpha);\ch_1(\mathbb{E})\rangle\rangle_1
+\sum_{r,\alpha}(2r+2b_\alpha+1)\tilde{t}_r^\alpha\langle\langle\tau_{r}(c_1(X)\cup\phi_\alpha);\ch_1(\mathbb{E})\rangle\rangle_1
\nonumber
\\&+\sum_{r,\alpha}\tilde{t}_r^\alpha\langle\langle\tau_{r-1}(c_1(X)^2\cup\phi_\alpha);\ch_1(\mathbb{E})\rangle\rangle_1
+\sum_{\beta}b_\beta b^\beta
\langle\langle\phi_\beta;\ch_1(\mathbb{E})\rangle\rangle_1\langle\langle\phi^\beta\rangle\rangle_0\nonumber
\\=&\frac{1}{24}\sum_{r,\alpha,\beta}(r+b_\alpha)(r+b_\alpha+1)\tilde{t}_r^\alpha\langle\langle\tau_{r+1}(\phi_\alpha)\phi_\beta\phi^\beta\rangle\rangle_0
+\frac{1}{24}\sum_{r,\alpha,\beta}(2r+2b_\alpha+1)\tilde{t}_r^\alpha\langle\langle\tau_{r}(c_1(X)\cup\phi_\alpha)\phi_\beta\phi^\beta\rangle\rangle_0
\nonumber
\\&+\frac{1}{24}\sum_{r,\alpha,\beta}\tilde{t}_r^\alpha\langle\langle\tau_{r-1}(c_1(X)^2\cup\phi_\alpha)\phi_\beta\phi^\beta\rangle\rangle_0
\nonumber
+\frac{1}{24}\sum_{\alpha,\beta}b_\alpha b^\alpha
\langle\langle\phi_\alpha\phi_\beta\phi^\beta\rangle\rangle_0\langle\langle\phi^\alpha\rangle\rangle_0\nonumber
\end{align*}  
where we used universal equation~\eqref{eqn:tauo-rel-univer-eqn-1,1} in the last equality. 
Recall the genus-0 $L_1$ constraint for descendant Gromov-Witten invariants gives us
\begin{align*}
&\sum_{r,\alpha}(r+b_\alpha)(r+b_\alpha+1)\tilde{t}_r^\alpha\langle\langle\tau_{r+1}(\phi_\alpha)\rangle\rangle_0
+\sum_{r,\alpha}(2r+2b_\alpha+1)\tilde{t}_r^\alpha\langle\langle\tau_{r}(c_1(X)\cup\phi_\alpha)\rangle\rangle_0\nonumber
\\&
+\sum_{r,\alpha}\tilde{t}_r^\alpha\langle\langle\tau_{r-1}(c_1(X)^2\cup\phi_\alpha)\rangle\rangle_0
+\frac{1}{2}\sum_{\alpha}b_\alpha b^\alpha
\langle\langle\phi_\alpha\rangle\rangle_0\langle\langle\phi^\alpha\rangle\rangle_0\nonumber  
+\frac{1}{2}\sum_{\alpha,\beta}(\mathcal{C}^2)_{\alpha\beta}t_0^\alpha t_0^\beta\nonumber
\\&=0.
\end{align*}
Taking derivative along $\frac{\partial}{\partial t_0^\beta}$, we get
\begin{align*}
&\sum_{r,\alpha}(r+b_\alpha)(r+b_\alpha+1)\tilde{t}_r^\alpha\langle\langle\tau_{r+1}(\phi_\alpha)\phi_\beta\rangle\rangle_0
+b_\beta(b_\beta+1)\langle\langle\tau_{1}(\phi_\beta)\rangle\rangle_0
\\&+\sum_{r,\alpha}(2r+2b_\alpha+1)\tilde{t}_r^\alpha\langle\langle\tau_{r}(c_1(X)\cup\phi_\alpha)\phi_\beta\rangle\rangle_0+(2b_\beta+1)\langle\langle(c_1(X)\cup\phi_\beta)\rangle\rangle_0
\\&+\sum_{r,\alpha}\tilde{t}_r^\alpha\langle\langle\tau_{r-1}(c_1(X)^2\cup\phi_\alpha)\phi_\beta\rangle\rangle_0
+\frac{1}{2}\sum_{\alpha}b_\alpha b^\alpha
\langle\langle\phi_\alpha\phi_\beta\rangle\rangle_0\langle\langle\phi^\alpha\rangle\rangle_0
+\frac{1}{2}\sum_{\alpha}b_\alpha b^\alpha
\langle\langle\phi_\alpha\rangle\rangle_0\langle\langle\phi^\alpha\phi_\beta\rangle\rangle_0
\\&+\sum_{\alpha}(\mathcal{C}^2)_{\alpha\beta}t_0^\alpha \nonumber
\\&=0.
\end{align*}
Taking derivative along $\frac{\partial}{\partial t_0^\sigma}$, we get
\begin{align*}
&\sum_{r,\alpha}(r+b_\alpha)(r+b_\alpha+1)\tilde{t}_r^\alpha\langle\langle\tau_{r+1}(\phi_\alpha)\phi_\beta\phi_\sigma\rangle\rangle_0
+b_\sigma(b_\sigma+1)\langle\langle\tau_{1}(\phi_\sigma)\phi_\beta\rangle\rangle_0
+b_\beta(b_\beta+1)\langle\langle\tau_{1}(\phi_\beta)\phi_\sigma\rangle\rangle_0
\\&+\sum_{r,\alpha}(2r+2b_\alpha+1)\tilde{t}_r^\alpha\langle\langle\tau_{r}(c_1(X)\cup\phi_\alpha)\phi_\beta\phi_\sigma\rangle\rangle_0+(2b_\sigma+1)\langle\langle(c_1(X)\cup\phi_\sigma)\phi_\beta\rangle\rangle_0
\\&+(2b_\beta+1)\langle\langle(c_1(X)\cup\phi_\beta)\phi_\sigma\rangle\rangle_0
+\sum_{r,\alpha}\tilde{t}_r^\alpha\langle\langle\tau_{r-1}(c_1(X)^2\cup\phi_\alpha)\phi_\beta\phi_\sigma\rangle\rangle_0
+\sum_{\alpha}b_\alpha b^\alpha
\langle\langle\phi_\alpha\phi_\beta\phi_\sigma\rangle\rangle_0\langle\langle\phi^\alpha\rangle\rangle_0
\\&+\sum_{\alpha}b_\alpha b^\alpha
\langle\langle\phi_\alpha\phi_\beta\rangle\rangle_0\langle\langle\phi^\alpha\phi_\sigma\rangle\rangle_0
+(\mathcal{C}^2)_{\sigma\beta} \nonumber
\\&=0.
\end{align*}
Multiplying $\eta^{\sigma\beta}$ and summing over $\sigma$, we get
\begin{align*}
&\sum_{r,\alpha}(r+b_\alpha)(r+b_\alpha+1)\tilde{t}_r^\alpha\langle\langle\tau_{r+1}(\phi_\alpha)\phi_\beta\phi^\beta\rangle\rangle_0
+b^\beta(b^\beta+1)\langle\langle\tau_{1}(\phi^\beta)\phi_\beta\rangle\rangle_0
+b_\beta(b_\beta+1)\langle\langle\tau_{1}(\phi_\beta)\phi^\beta\rangle\rangle_0
\\&+\sum_{r,\alpha}(2r+2b_\alpha+1)\tilde{t}_r^\alpha\langle\langle\tau_{r}(c_1(X)\cup\phi_\alpha)\phi_\beta\phi^\sigma\rangle\rangle_0+(2b^\beta+1)\langle\langle(c_1(X)\cup\phi^\beta)\phi_\beta\rangle\rangle_0
\\&+(2b_\beta+1)\langle\langle(c_1(X)\cup\phi_\beta)\phi^\beta\rangle\rangle_0
+\sum_{r,\alpha}\tilde{t}_r^\alpha\langle\langle\tau_{r-1}(c_1(X)^2\cup\phi_\alpha)\phi_\beta\phi^\beta\rangle\rangle_0
+\sum_{\alpha}b_\alpha b^\alpha
\langle\langle\phi_\alpha\phi_\beta\phi^\beta\rangle\rangle_0\langle\langle\phi^\alpha\rangle\rangle_0\\&+\sum_{\alpha}b_\alpha b^\alpha
\langle\langle\phi_\alpha\phi_\beta\rangle\rangle_0\langle\langle\phi^\alpha\phi^\beta\rangle\rangle_0
\\&=0.
\end{align*}
After simplification and summing over $\beta$, we have
\begin{align*}
&\frac{1}{24}\sum_{r,\alpha,\beta}(r+b_\alpha)(r+b_\alpha+1)\tilde{t}_r^\alpha\langle\langle\tau_{r+1}(\phi_\alpha)\phi_\beta\phi^\beta\rangle\rangle_0
+\frac{1}{24}\sum_{r,\alpha,\beta}(2r+2b_\alpha+1)\tilde{t}_r^\alpha\langle\langle\tau_{r}(c_1(X)\cup\phi_\alpha)\phi_\beta\phi^\sigma\rangle\rangle_0
\\&+\frac{1}{24}\sum_{r,\alpha,\beta}\tilde{t}_r^\alpha\langle\langle\tau_{r-1}(c_1(X)^2\cup\phi_\alpha)\phi_\beta\phi^\beta\rangle\rangle_0
+\frac{1}{24}\sum_{\alpha,\beta}b_\alpha b^\alpha
\langle\langle\phi_\alpha\phi_\beta\phi^\beta\rangle\rangle_0\langle\langle\phi^\alpha\rangle\rangle_0
\\=&-\frac{1}{6}\sum_{\beta}b_\beta\langle\langle\tau_{1}(\phi_\beta)\phi^\beta\rangle\rangle_0
-\frac{1}{12}\sum_{\beta}(2b_\beta+1)\langle\langle(c_1(X)\cup\phi_\beta)\phi^\beta\rangle\rangle_0.
\end{align*}    
\end{proof}
By Lemma~\ref{lem:psi-111E'} and Lemma~\ref{lem:psi-111E''}, we have
\begin{align*}
\Psi_{1,1;1}^{\mathbb{E}}=\Psi_{1,1;1}^{\mathbb{E}'}+\Psi_{1,1;1}^{\mathbb{E}''}= -\frac{1}{6}\sum_{\alpha}b_\alpha\langle\langle(c_1(X)\cup\phi_\alpha)\phi^\alpha\rangle\rangle_0.   
\end{align*}
Notice that $\mathcal{C}_\alpha^\beta\neq0$ implies $b_\beta=b_\alpha+1$, so
\begin{align*}
\Psi_{1,1;1}^{\mathbb{E}}=& -\frac{1}{6}\sum_{\alpha,\beta}b_\alpha\mathcal{C}_\alpha^\beta\langle\langle\phi_\beta\phi^\alpha\rangle\rangle_0=-\frac{1}{6}\sum_{\alpha,\beta}(b_\beta-1)\mathcal{C}_\alpha^\beta\langle\langle\phi_\beta\phi^\alpha\rangle\rangle_0
\\=&\frac{1}{6}\sum_{\beta}b^\beta\langle\langle\phi_\beta(c_1(X)\cup\phi^\beta)\rangle\rangle_0=\frac{1}{6}\sum_{\beta}b_\beta\langle\langle\phi^\beta(c_1(X)\cup\phi_\beta)\rangle\rangle_0=0.
\end{align*}

\section{Proof of Theorem~\ref{thm:psiEg1k1}}\label{sec:proof-thm-g>=2}
\subsection{$\mathcal{L}_0$ vector field}

There are two important vector fields on the big phase space, which are crucial in our proof of Theorem~\ref{thm:psiEg1k1}. One is the dilaton vector field 
\begin{align*}
\mathcal{D}=-\sum_{n,\alpha}\tilde{t}_n^\alpha\tau_n(\phi_\alpha)    
\end{align*}
which satisfies the dilaton equation
\begin{align}\label{eqn:genus-g-dilaton-equation}
\langle\langle\mathcal{D}\rangle\rangle_g = (2g-2)F_g +\frac{\chi(X)}{24}\delta_{g,1}. 
\end{align}
Another one is $\mathcal{L}_0$ vector field
\begin{align*}
\mathcal{L}_0=\sum_{n,\alpha}(n+b_\alpha)\tilde{t}_n^\alpha\tau_n(\phi_\alpha)+\sum_{n,\alpha}\tilde{t}_n^\alpha\tau_{n-1}(c_1(X)\cup\phi_\alpha)    
\end{align*}
which satisfies Virasoro $L_0$ equation, 
\begin{align}\label{eqn:genus-g-L_0-equation}
\langle\langle\mathcal{L}_0\rangle\rangle_g=-\frac{1}{2}\delta_{g,0} \sum_{\alpha,\beta}\mathcal{C}_{\alpha\beta}t_0^\alpha t_0^\beta-\frac{1}{24}\delta_{g,1}\left(\frac{3-d}{2}\int_{X}c_d(X)-\int_{X}c_1(X)\cup c_{d-1}(X)\right)
.\end{align}
This equation was first discovered by Hori (cf. \cite{MR1323382}).
\begin{lemma}\label{lem:T2lL0genus-g}
For $l>0$, 
\begin{align}\label{eqn:lem:T2lL0genus-g}
 T^{2l}(\mathcal{L}_0)
=\tau_+^{2l}(\mathcal{L}_0)+\sum_{i=0}^{2l-1}\sum_{\alpha}(-1)^i\Big(\langle\langle Q(\tau_i(\phi_\alpha))\rangle\rangle_0+\delta_{i,0}\sum_{\beta}\mathcal{C}_{\alpha\beta}t_0^\beta\Big)\tau_{2l-1-i}(\phi^\alpha)
\end{align}    
and
\begin{align}\label{eqn:lem:T2lDgenus-g}
T^{2l}(\mathcal{D})
=\tau_{+}^{2l}(\mathcal{D})+\sum_{i=0}^{2l-1}\sum_{\alpha}(-1)^i
\langle\langle\tau_i(\phi^\alpha)\rangle\rangle_0
\tau_{2l-1-i}(\phi_\alpha).    
\end{align}
\end{lemma}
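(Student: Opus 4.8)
The plan is to obtain both identities from the single master formula~\eqref{eqn:T^k(W)-tau_kW}, specialized to the vector fields $W=\mathcal{L}_0$ and $W=\mathcal{D}$ with $k=2l$. Applying it directly gives
\[T^{2l}(\mathcal{L}_0)=\tau_+^{2l}(\mathcal{L}_0)-\sum_{i=0}^{2l-1}\sum_\alpha(-1)^i\langle\langle\mathcal{L}_0\tau_i(\phi^\alpha)\rangle\rangle_0\,\tau_{2l-1-i}(\phi_\alpha),\]
and an analogous expression with $\mathcal{L}_0$ replaced by $\mathcal{D}$. Hence the entire content of the lemma reduces to evaluating the two covariant two-point brackets $\langle\langle\mathcal{L}_0\tau_i(\phi^\alpha)\rangle\rangle_0$ and $\langle\langle\mathcal{D}\tau_i(\phi^\alpha)\rangle\rangle_0$, after which only the identification of the $Q$-operator and a raising of indices remain.

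For the dilaton field I would differentiate the genus-$0$ case of the dilaton equation~\eqref{eqn:genus-g-dilaton-equation}, namely $\langle\langle\mathcal{D}\rangle\rangle_0=-2F_0$, with respect to $t_i^\gamma$. Since $\mathcal{D}$ has coefficients $-\tilde t_n^\alpha$, the Leibniz rule splits the left-hand side into a covariant-derivative term and a term where the derivative hits the coefficient $\tilde t_i^\gamma$; comparing with $-2\langle\langle\tau_i(\phi_\gamma)\rangle\rangle_0$ yields $\langle\langle\mathcal{D}\tau_i(\phi_\gamma)\rangle\rangle_0=-\langle\langle\tau_i(\phi_\gamma)\rangle\rangle_0$. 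Substituting this into the master formula and cancelling the sign reproduces~\eqref{eqn:lem:T2lDgenus-g} with no index manipulation needed.

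For $\mathcal{L}_0$ the same scheme starts from the genus-$0$ case of~\eqref{eqn:genus-g-L_0-equation}, that is $\langle\langle\mathcal{L}_0\rangle\rangle_0=-\tfrac12\sum_{\alpha,\beta}\mathcal{C}_{\alpha\beta}t_0^\alpha t_0^\beta$. Differentiating with respect to $t_i^\gamma$ and again separating off the terms in which the derivative lands on the coefficients $(n+b_\alpha)\tilde t_n^\alpha$ and $\tilde t_n^\alpha$ of the two sums defining $\mathcal{L}_0$, I obtain
\[\langle\langle\mathcal{L}_0\tau_i(\phi_\gamma)\rangle\rangle_0=-\delta_{i,0}\sum_\beta\mathcal{C}_{\gamma\beta}t_0^\beta-(i+b_\gamma)\langle\langle\tau_i(\phi_\gamma)\rangle\rangle_0-\langle\langle\tau_{i-1}(c_1(X)\cup\phi_\gamma)\rangle\rangle_0.\]
By the definition~\eqref{eqn:Q-operator-definition} of $Q$, the last two terms combine into $-\langle\langle Q(\tau_i(\phi_\gamma))\rangle\rangle_0$. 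Feeding this, with $\gamma$ replaced by the raised index $\phi^\alpha$, into the master formula and flipping the overall sign reproduces the stated expression~\eqref{eqn:lem:T2lL0genus-g}, up to the interchange $\sum_\alpha Q(\tau_i(\phi^\alpha))\otimes\tau_{2l-1-i}(\phi_\alpha)=\sum_\alpha Q(\tau_i(\phi_\alpha))\otimes\tau_{2l-1-i}(\phi^\alpha)$ and the matching of the $\mathcal{C}$-term, both of which follow from the duality symmetry $\sum_\alpha\phi^\alpha\otimes\phi_\alpha=\sum_\alpha\phi_\alpha\otimes\phi^\alpha$.

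The one genuinely delicate point, and the main bookkeeping obstacle, is correctly accounting for the \emph{extra} contributions coming from differentiating the non-constant $\tilde t$-coefficients of $\mathcal{L}_0$ and $\mathcal{D}$: the covariant bracket $\langle\langle W\tau_i(\phi^\alpha)\rangle\rangle_0$ is \emph{not} simply $\partial_{t_i^\gamma}\langle\langle W\rangle\rangle_0$, and the discrepancy is exactly what turns the raw $L_0$ and dilaton equations into the $Q$-operator form. Once this separation is performed and indices are raised with $\eta$, no further input is required; in particular the even exponent $2l$ plays no essential role in the derivation and merely records the form in which the lemma is applied later.
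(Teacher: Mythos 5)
Your proposal is correct and follows essentially the same route as the paper: both apply the master formula~\eqref{eqn:T^k(W)-tau_kW} with $k=2l$ and then compute $\langle\langle\mathcal{L}_0\tau_i(\phi_\alpha)\rangle\rangle_0$ and $\langle\langle\mathcal{D}\tau_i(\phi_\alpha)\rangle\rangle_0$ by differentiating the genus-$0$ $L_0$ constraint and the genus-$0$ dilaton equation, identifying the result with the $Q$-operator via its definition~\eqref{eqn:Q-operator-definition}. Your explicit accounting of the Leibniz terms from the $\tilde t$-coefficients and of the index-raising symmetry only spells out steps the paper leaves implicit.
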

\begin{proof}
By equation~\eqref{eqn:T^k(W)-tau_kW}, 
\begin{align*}
&T^{2l}(\mathcal{L}_0)
=\tau_{+}^{2l}(\mathcal{L}_0)-\sum_{i=0}^{2l-1}\sum_{\alpha}(-1)^i
\langle\langle\mathcal{L}_0\tau_i(\phi^\alpha)\rangle\rangle_0
\tau_{2l-1-i}(\phi_\alpha).
\end{align*}
% and 
% \begin{align*}
% \mathcal{L}_0=-\mathcal{X}-\frac{3-d}{2}\mathcal{D}    
% \end{align*}
Recall the genus-0 $L_0$ constraint is
\begin{align*}
\langle\langle\mathcal{L}_0\rangle\rangle_0 =-\frac{1}{2}\sum_{\alpha,\beta}\mathcal{C}_{\alpha\beta}t_0^\alpha t_0^\beta.   
\end{align*}
Taking derivative along $\frac{\partial}{\partial t_i^\alpha}$, we get
\begin{align*}
 &\langle\langle\mathcal{L}_0\tau_i(\phi_\alpha)\rangle\rangle_0  
 % =-\langle\langle\mathcal{X}\tau_i(\phi_\alpha)\rangle\rangle_0-\frac{3-d}{2}\langle\langle\mathcal{D}\tau_i(\phi_\alpha)\rangle\rangle_0
% \\=&-(i+b_\alpha+\frac{3-d}{2})\langle\langle\tau_i(\phi_\alpha)\rangle\rangle_0-\langle\langle\tau_{i-1}(c_1(X)\cup\phi_\alpha)\rangle\rangle_0-\delta_{i}^{0}\sum_{\beta}\mathcal{C}_{\alpha\beta}t_0^\beta+\frac{3-d}{2}\langle\langle\tau_{i}(\phi_\alpha)\rangle\rangle_0
=-(i+b_\alpha)\langle\langle\tau_i(\phi_\alpha)\rangle\rangle_0-\langle\langle\tau_{i-1}(c_1(X)\cup\phi_\alpha)\rangle\rangle_0-\delta_{i,0}\sum_{\beta}\mathcal{C}_{\alpha\beta}t_0^\beta.
\end{align*}
Together with the definition of $Q$ operator, we prove equation~\eqref{eqn:lem:T2lL0genus-g}. Similarly, by equation~\eqref{eqn:T^k(W)-tau_kW},
\begin{align}\label{eqn:T2lD11}
&T^{2l}(\mathcal{D})
=\tau_{+}^{2l}(\mathcal{D})-\sum_{i=0}^{2l-1}\sum_{\alpha}(-1)^i
\langle\langle\mathcal{D}\tau_i(\phi^\alpha)\rangle\rangle_0
\tau_{2l-1-i}(\phi_\alpha).
\end{align}
Recall the genus-0 dilaton equation 
\begin{align*}
\langle\langle\mathcal{D}\rangle\rangle_0   =-2F_0. 
\end{align*}
Taking derivative along $\frac{\partial}{\partial t_i^\alpha}$, we get
\begin{align}\label{eqn:der-g=0-dilaton}
\langle\langle\mathcal{D}\tau_i(\phi_\alpha)\rangle\rangle_0=- \langle\langle\tau_i(\phi_\alpha)\rangle\rangle_0.   
\end{align}
Plugging equation~\eqref{eqn:der-g=0-dilaton} into equation~\eqref{eqn:T2lD11}, we prove equation~\eqref{eqn:lem:T2lDgenus-g}.
% so 
% \begin{align*}
% &\langle\langle T^{2l}(\mathcal{L}_0)\rangle\rangle_g
% \\=&\langle\langle\tau_+^{2l}(\mathcal{L}_0)\rangle\rangle_g+\sum_{i=0}^{2l-1}(-1)^i\Big((i+b_\alpha)\langle\langle\tau_i(\phi_\alpha)\rangle\rangle_0+\langle\langle\tau_{i-1}(c_1(X)\cup\phi_\alpha)\rangle\rangle_0+\delta_{i}^{0}\sum_{\beta}\mathcal{C}_{\alpha\beta}t_0^\beta\Big)\langle\langle\tau_{2l-1-i}(\phi^\alpha)\rangle\rangle_g
% \\=&\langle\langle\tau_+^{2l}(\mathcal{L}_0)\rangle\rangle_g+\sum_{i=0}^{2l-1}(-1)^i\Big(\langle\langle Q(\tau_i(\phi_\alpha))\rangle\rangle_0+\delta_{i}^{0}\sum_{\beta}\mathcal{C}_{\alpha\beta}t_0^\beta\Big)\langle\langle\tau_{2l-1-i}(\phi^\alpha)\rangle\rangle_g
% \end{align*}    
\end{proof}
\subsection{Vanishing of $\Psi^{\mathbb{E}}_{g,1;k_1}$} In this subsection, we prove the vanishing of $\Psi^{\mathbb{E}}_{g,1;k_1}$ for $g\geq2$ and 
$k_1\geq \max\{g+1,\frac{3g-1}{2}\}$.
Since $\ch_{2k-1}(\mathbb{E})=0$ on $\overline{\mathcal{M}}_{g,n}$ for $k>g$, by equation~\eqref{eqn:formula-psi-gnk1E}, we have
\begin{align*}
&\Psi^{\mathbb{E}}_{g,1;k_1}   
\\=&-\frac{B_{2k_1}}{(2k_1)!}\left\{-\sum_{j=0}^{2}\sum_{r,\alpha,\beta}\left(\Delta_{2k_1-1}\hat{e}_{2-j}(1,\alpha)\right)(r) \langle c_1(X)^{j}\cup\phi_\alpha,\phi^\beta\rangle\tilde{t}_r^\alpha \frac{\partial F_{g;\emptyset}^{\mathbb{E}}(\mathbf{t})}{\partial t^\beta_{2k_1+r-j}}
\right.\\&\left.+\frac{1}{2}\sum_{g_1+g_2=g}\sum_{j=0}^{2}\sum_{s,\beta,\beta'}(-1)^{-s}\left(\Delta_{2k_1-1}\check{e}_{2-j}(1,\beta)\right)(-s-1)\langle c_1(X)^{j}\cup\phi^{\beta},\phi^{\beta'}\rangle
\frac{\partial F_{g_1;\emptyset}^{\mathbb{E}}(\mathbf{t})}{\partial t_s^\beta}\frac{\partial F_{g_2;\emptyset}^{\mathbb{E}}(\mathbf{t})}{\partial t_{2k_1-1-s-j}^{\beta'}} 
\right.\\&\left.+\frac{1}{2}\sum_{j=0}^{2}\sum_{s,\beta,\beta'}(-1)^{-s}\left(\Delta_{2k_1-1}\check{e}_{2-j}(1,\beta)\right)(-s-1)\langle c_1(X)^{j}\cup\phi^{\beta},\phi^{\beta'}\rangle
\frac{\partial^2 F_{g-1;\emptyset}^{\mathbb{E}}(\mathbf{t})}{\partial t_s^\beta\partial t_{2k_1-1-s-j}^{\beta'}}\right\}
\end{align*}
here
\begin{align*}
(\Delta_{2k_1-1}\hat{e}_{0}(1,\alpha))(r)
=0,
\quad (\Delta_{2k_1-1}\hat{e}_{1}(1,\alpha))(r)
=2(2k_1-1)
\end{align*}
and
\begin{align*}
&(\Delta_{2k_1-1}\hat{e}_{2}(1,\alpha))(r)
=2(2k_1-1)(b_\alpha+r+k_1).
\end{align*}
and similar formulae hold for functions $\check{e}_i(1,\alpha)$ $i=0,1,2$.
So we only need to prove the following \begin{align}\label{eqn:sumnalphank1balpha}
&\sum_{r,\alpha}(r+k_1+b_\alpha)\tilde{t}_r^\alpha\langle\langle\tau_{r+2k_1}(\phi_\alpha)\rangle\rangle_g
+\sum_{r,\alpha}\tilde{t}_r^\alpha\langle\langle\tau_{r+2k_1-1}(c_1(X)\cup\phi_\alpha)\rangle\rangle_{g}
\nonumber\\&+\frac{1}{2}\sum_{g_1+g_2=g}\sum_{j=0}^{2k_1-1}\sum_{\alpha}(-1)^{j}\langle\langle\tau_{2k_1-1-j}(\phi^\alpha)\rangle\rangle_{g_1}\langle\langle Q(\tau_j(\phi_\alpha))\rangle\rangle_{g_2}
\nonumber\\&+\frac{1}{2}\sum_{j=0}^{2k_1-1}\sum_{\alpha}(-1)^{j}\langle\langle\tau_{2k_1-1-j}(\phi^\alpha)Q(\tau_j(\phi_\alpha))\rangle\rangle_{g-1}=0.
\end{align}

\begin{lemma}\label{lem:g00g}For any $l\geq1$, we have
\begin{align*}
&\frac{1}{2}\sum_{j=0}^{2l-1}\sum_{\alpha}(-1)^{j}\langle\langle\tau_{2l-1-j}(\phi^\alpha)\rangle\rangle_{g}\langle\langle Q(\tau_j(\phi_\alpha))\rangle\rangle_0
+\frac{1}{2}\sum_{j=0}^{2l-1}\sum_{\alpha}(-1)^{j}\langle\langle\tau_{2l-1-j}(\phi^\alpha)\rangle\rangle_{0}\langle\langle Q(\tau_j(\phi_\alpha))\rangle\rangle_g  
\\=&-l \sum_{j=0}^{2l-1}\sum_{\alpha}(-1)^{j}\langle\langle\tau_{j}(\phi_\alpha)\rangle\rangle_{0}\langle\langle \tau_{2l-1-j}(\phi^\alpha)\rangle\rangle_{g} 
+\sum_{j=0}^{2l-1}\sum_{\alpha}(-1)^{j}\langle\langle\tau_{2l-1-j}(\phi^\alpha)\rangle\rangle_{g}\langle\langle  Q(\tau_j(\phi_\alpha))\rangle\rangle_{0}.
\end{align*}
\end{lemma}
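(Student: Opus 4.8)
The plan is to reduce the claimed identity to a single relation among three bilinear sums, which is then established by the symmetry of the Poincar\'e pairing together with a reindexing of the summation; no topological recursion is needed. First I would abbreviate
\begin{align*}
A&=\sum_{j=0}^{2l-1}\sum_{\alpha}(-1)^{j}\langle\langle\tau_{2l-1-j}(\phi^\alpha)\rangle\rangle_{g}\langle\langle Q(\tau_j(\phi_\alpha))\rangle\rangle_{0},\\
B&=\sum_{j=0}^{2l-1}\sum_{\alpha}(-1)^{j}\langle\langle\tau_{2l-1-j}(\phi^\alpha)\rangle\rangle_{0}\langle\langle Q(\tau_j(\phi_\alpha))\rangle\rangle_{g},\\
C&=\sum_{j=0}^{2l-1}\sum_{\alpha}(-1)^{j}\langle\langle\tau_{j}(\phi_\alpha)\rangle\rangle_{0}\langle\langle\tau_{2l-1-j}(\phi^\alpha)\rangle\rangle_{g}.
\end{align*}
The left-hand side of the lemma is $\tfrac12(A+B)$ and the right-hand side is $A-lC$, so the entire statement is equivalent to the single relation $A-B=2lC$.

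Next I would use the definition \eqref{eqn:Q-operator-definition}, namely $Q(\tau_j(\phi_\alpha))=(j+b_\alpha)\tau_j(\phi_\alpha)+\tau_{j-1}(c_1(X)\cup\phi_\alpha)$, to split $A=A_1+A_2$ and $B=B_1+B_2$ into a \emph{degree part} (the $(j+b_\alpha)$ terms) and a \emph{Chern part} (the $c_1(X)$ terms), and then prove $A_1-B_1=2lC$ and $A_2-B_2=0$ separately.

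For the degree part, reindex the sum in $B_1$ by $j\mapsto 2l-1-j$; this flips the overall sign through $(-1)^{2l-1-j}=-(-1)^j$ and replaces the weight $j+b_\alpha$ by $2l-1-j+b_\alpha$. I then move the genus-$g$ factor back onto $\phi^\alpha$ using the symmetry of the pairing: the $\alpha$-independent part of the weight uses the plain symmetry $\sum_\alpha\phi^\alpha\otimes\phi_\alpha=\sum_\alpha\phi_\alpha\otimes\phi^\alpha$, while the $b_\alpha$-weighted part uses the degree-twisted version $\sum_\alpha b_\alpha\,X(\phi^\alpha)Y(\phi_\alpha)=\sum_\alpha(1-b_\alpha)X(\phi_\alpha)Y(\phi^\alpha)$, which is valid because a nonzero pairing $\eta_{\alpha\beta}$ forces complementary Hodge degrees and hence $b_\alpha+b_\beta=1$, i.e. $b^\alpha=1-b_\alpha$. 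The twist converts $2l-1-j+b_\alpha$ into $2l-j-b_\alpha$, and adding this to the weight $j+b_\alpha$ that comes from $A_1$ makes the $j$ and $b_\alpha$ cancel, leaving exactly $2l$; hence $A_1-B_1=2lC$. For the Chern part I use self-adjointness of multiplication by $c_1(X)$, $\langle c_1(X)\cup u,v\rangle=\langle u,c_1(X)\cup v\rangle$, to shuttle the class between the two slots, and after the same $j\mapsto 2l-1-j$ reindexing $A_2$ and $B_2$ match term by term (the boundary contributions with $\tau_{-1}=0$ drop out), so $A_2-B_2=0$. Adding the two parts gives $A-B=2lC$, which is the lemma.

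The main obstacle is pinning down the numerical constant exactly: a naive reindexing of $B_1$ would leave the weight $2l-1-j$ and hence produce $2l-1$, and it is precisely the degree-twisted symmetry $b_\alpha\mapsto 1-b_\alpha$ that supplies the extra $+1$ upgrading $2l-1$ to $2l$. Keeping careful track of this twist, of the sign generated by $(-1)^{2l-1-j}$, and of the summation endpoints where $\tau_{-1}$ is understood to vanish, is the only delicate bookkeeping; once these are handled the identity is purely formal and requires no genus-0 recursion.
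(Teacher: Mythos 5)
Your proposal is correct and follows essentially the same route as the paper: the paper likewise reduces the lemma to the single relation $\tfrac12\bigl(B-A\bigr)=-lC$ (your $A-B=2lC$), expands $Q$ into its degree and Chern parts, reindexes $j\mapsto 2l-1-j$ with the sign flip from $(-1)^{2l-1-j}$, and uses the dual-basis symmetry with $b_\alpha+b^\alpha=1$ to produce the constant $2l$, while the $c_1(X)$ terms cancel by self-adjointness of $\mathcal{C}$ with respect to $\eta$. Your bookkeeping of the twist $b_\alpha\mapsto 1-b_\alpha$ and of the vanishing $\tau_{-1}$ boundary terms matches the paper's computation exactly.
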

\begin{proof}
We only need to prove 
\begin{align*}
&\frac{1}{2}\sum_{j=0}^{2l-1}\sum_{\alpha}(-1)^{j}\langle\langle\tau_{2l-1-j}(\phi^\alpha)\rangle\rangle_{0}\langle\langle Q(\tau_j(\phi_\alpha))\rangle\rangle_g  
-\frac{1}{2}\sum_{j=0}^{2l-1}\sum_{\alpha}(-1)^{j}\langle\langle\tau_{2l-1-j}(\phi^\alpha)\rangle\rangle_{g}\langle\langle  Q(\tau_j(\phi_\alpha))\rangle\rangle_{0}
\\=&-l \sum_{j=0}^{2l-1}\sum_{\alpha}(-1)^{j}\langle\langle\tau_{j}(\phi_\alpha)\rangle\rangle_{0}\langle\langle \tau_{2l-1-j}(\phi^\alpha)\rangle\rangle_{g}. 
\end{align*}
By definition of $Q$ operator, the left hand side of the above equation equals to
\begin{align*}
&\frac{1}{2}\sum_{j,\alpha}(-1)^{j}(j+b_\alpha)\langle\langle\tau_{2l-1-j}(\phi^\alpha)\rangle\rangle_{0}\langle\langle \tau_j(\phi_\alpha)\rangle\rangle_g  
+\frac{1}{2}\sum_{j,\alpha}(-1)^{j}\langle\langle\tau_{2l-1-j}(\phi^\alpha)\rangle\rangle_{0}\langle\langle \tau_{j-1}(c_1(X)\cup\phi_\alpha)\rangle\rangle_g  
\\&-\frac{1}{2}\sum_{j,\alpha}(-1)^{j}(j+b_\alpha)\langle\langle\tau_{2l-1-j}(\phi^\alpha)\rangle\rangle_{g}\langle\langle  \tau_{j}(\phi_\alpha)\rangle\rangle_{0}
-\frac{1}{2}\sum_{j,\alpha}(-1)^{j}\langle\langle\tau_{2l-1-j}(\phi^\alpha)\rangle\rangle_{g}\langle\langle  \tau_{j-1}(c_1(X)\cup\phi_\alpha)\rangle\rangle_{0}
\\=& \frac{1}{2}\sum_{j,\alpha}(-1)^{j}(j+b_\alpha)\langle\langle\tau_{2l-1-j}(\phi^\alpha)\rangle\rangle_{0}\langle\langle \tau_j(\phi_\alpha)\rangle\rangle_g 
\\&+\frac{1}{2}\sum_{j,\alpha}(-1)^{j}(2l-1-j+b^\alpha)\langle\langle\tau_{j}(\phi_\alpha)\rangle\rangle_{g}\langle\langle  \tau_{2l-1-j}(\phi^\alpha)\rangle\rangle_{0}
\\&+\frac{1}{2}\sum_{j,\alpha}(-1)^{j}\langle\langle\tau_{2l-1-j}(\phi^\alpha)\rangle\rangle_{0}\langle\langle \tau_{j-1}(c_1(X)\cup\phi_\alpha)\rangle\rangle_g  
-\frac{1}{2}\sum_{j,\alpha}(-1)^{j}\langle\langle\tau_{2l-1-j}(c_1(X)\cup\phi_\alpha)\rangle\rangle_{g}\langle\langle  \tau_{j-1}(\phi^\alpha)\rangle\rangle_{0}
\\=&-l\sum_{j,\alpha}(-1)^{j}\langle\langle\tau_{j}(\phi_\alpha)\rangle\rangle_{0}\langle\langle \tau_{2l-1-j}(\phi^\alpha)\rangle\rangle_g.
\end{align*}
\end{proof}
By definition of vector fields $\mathcal{D}$ and $\mathcal{L}_0$ and Lemma~\ref{lem:g00g}, we obtain  the following formula for the left
hand side of equation~\eqref{eqn:sumnalphank1balpha} 
\begin{align*}
&LHS
\\=&k_1\langle\langle\tau_+^{2k_1}(-\mathcal{D})\rangle\rangle_g
+\langle\langle\tau_+^{2k_1}(\mathcal{L}_0)\rangle\rangle_g
+\sum_{\alpha}t_0^\alpha\langle\langle\tau_{2k-1}(c_1(X)\cup\phi_\alpha)\rangle\rangle_{g}
\\&+\frac{1}{2}\sum_{g_1+g_2=g}\sum_{j=0}^{2k_1-1}\sum_{\alpha}(-1)^{j}\langle\langle\tau_{2k_1-1-j}(\phi^\alpha)\rangle\rangle_{g_1}\langle\langle Q(\tau_j(\phi_\alpha))\rangle\rangle_{g_2}
\\&+\frac{1}{2}\sum_{j=0}^{2k_1-1}\sum_{\alpha}(-1)^{j}\langle\langle\tau_{2k_1-1-j}(\phi^\alpha)Q(\tau_j(\phi_\alpha))\rangle\rangle_{g-1}
\\=&k_1\langle\langle\tau_+^{2k_1}(-\mathcal{D})\rangle\rangle_g
+\langle\langle\tau_+^{2k_1}(\mathcal{L}_0)\rangle\rangle_g
+\sum_{\alpha}t_0^\alpha\langle\langle\tau_{2k-1}(c_1(X)\cup\phi_\alpha)\rangle\rangle_{g}
\\&+\frac{1}{2}\sum_{g_1+g_2=g,g_1,g_2>0}\sum_{j=0}^{2k_1-1}\sum_{\alpha}(-1)^{j}\langle\langle\tau_{2k_1-1-j}(\phi^\alpha)\rangle\rangle_{g_1}\langle\langle Q(\tau_j(\phi_\alpha))\rangle\rangle_{g_2}
\\&-k_1 \sum_{j=0}^{2k_1-1}\sum_{\alpha}(-1)^{j}\langle\langle\tau_{j}(\phi_\alpha)\rangle\rangle_{0}\langle\langle \tau_{2k_1-1-j}(\phi^\alpha)\rangle\rangle_{g} 
+\sum_{j=0}^{2k_1-1}\sum_{\alpha}(-1)^{j}\langle\langle\tau_{2k_1-1-j}(\phi^\alpha)\rangle\rangle_{g}\langle\langle  Q(\tau_j(\phi_\alpha))\rangle\rangle_{0}
\\&+\frac{1}{2}\sum_{j=0}^{2k_1-1}\sum_{\alpha}(-1)^{j}\langle\langle\tau_{2k_1-1-j}(\phi^\alpha)Q(\tau_j(\phi_\alpha))\rangle\rangle_{g-1}.
\end{align*}
Applying Lemma~\ref{lem:T2lL0genus-g} to the first three terms, we get 
\begin{align*}
LHS=&\langle\langle T^{2k_1}(\mathcal{L}_0-k_1\mathcal{D})\rangle\rangle_{g}
+\frac{1}{2}\sum_{g_1+g_2=g,g_1,g_2>0}\sum_{j=0}^{2k_1-1}\sum_{\alpha}(-1)^{j}\langle\langle\tau_{2k_1-1-j}(\phi^\alpha)\rangle\rangle_{g_1}\langle\langle Q(\tau_j(\phi_\alpha))\rangle\rangle_{g_2}    
\\&+\frac{1}{2}\sum_{j=0}^{2k_1-1}\sum_{\alpha}(-1)^{j}\langle\langle\tau_{2k_1-1-j}(\phi^\alpha)Q(\tau_j(\phi_\alpha))\rangle\rangle_{g-1}.
\end{align*}
By equation~\eqref{eqn:tau_jtau_m-j=TjTm-j}, we have
\begin{align*}
LHS=&\langle\langle T^{2k_1}(\mathcal{L}_0-k_1\mathcal{D})\rangle\rangle_{g}+\frac{1}{2}\sum_{g_1+g_2=g,g_1,g_2>0}\sum_{j=0}^{2k_1-1}\sum_{\alpha}(-1)^{j}\langle\langle T^{2k_1-1-j}(\phi^\alpha)\rangle\rangle_{g_1}\langle\langle Q(T^j(\phi_\alpha))\rangle\rangle_{g_2}
\\&+\frac{1}{2}\sum_{j=0}^{2k_1-1}\sum_{\alpha}(-1)^{j}\langle\langle T^{2k_1-1-j}(\phi^\alpha)Q(T^j(\phi_\alpha))\rangle\rangle_{g-1}. 
  \end{align*}
Recall the relations between $Q$ and $T$ operators
\begin{align*}
(QT^k-T^kQ)(W)=k T^k(W)-T^{k-1}(\mathcal{X}\bullet W)    
\end{align*}
for $k\geq1$.

Then the vanishing of $LHS$  follows from the fact that $\psi_1^{m}=0\,  (m>3g-2)$
on $\overline{\mathcal{M}}_{g,1}$ since the dimension of $\overline{\mathcal{M}}_{g,1}$ is $3g-2$ and $\psi_1^{i}\psi_2^{m-i}=0\, (m>3g-1)$ on $\overline{\mathcal{M}}_{g,2}$ since the dimension of $\overline{\mathcal{M}}_{g,2}$ is $3g-1$.


\begin{thebibliography}{10}

\bibitem{MR2276766}
T.~Coates and A.~Givental.
\newblock Quantum {R}iemann-{R}och, {L}efschetz and {S}erre.
\newblock {\em Ann. of Math. (2)}, 165(1):15--53, 2007.

\bibitem{MR1677117}
D.~A. Cox and S.~Katz.
\newblock {\em Mirror symmetry and algebraic geometry}, volume~68 of {\em Mathematical Surveys and Monographs}.
\newblock American Mathematical Society, Providence, RI, 1999.

\bibitem{MR3474326}
B.~Dubrovin, S.-Q. Liu, D.~Yang, and Y.~Zhang.
\newblock Hodge integrals and tau-symmetric integrable hierarchies of {H}amiltonian evolutionary {PDE}s.
\newblock {\em Adv. Math.}, 293:382--435, 2016.

\bibitem{MR1740678}
B.~Dubrovin and Y.~Zhang.
\newblock Frobenius manifolds and {V}irasoro constraints.
\newblock {\em Selecta Math. (N.S.)}, 5(4):423--466, 1999.

\bibitem{MR1454328}
T.~Eguchi, K.~Hori, and C.-S. Xiong.
\newblock Quantum cohomology and {V}irasoro algebra.
\newblock {\em Phys. Lett. B}, 402(1-2):71--80, 1997.

\bibitem{MR1635867}
T.~Eguchi and C.-S. Xiong.
\newblock Quantum cohomology at higher genus: topological recursion relations and {V}irasoro conditions.
\newblock {\em Adv. Theor. Math. Phys.}, 2(1):219--229, 1998.

\bibitem{MR1728879}
C.~Faber and R.~Pandharipande.
\newblock Hodge integrals and {G}romov-{W}itten theory.
\newblock {\em Invent. Math.}, 139(1):173--199, 2000.

\bibitem{MR1718143}
E.~Getzler.
\newblock The {V}irasoro conjecture for {G}romov-{W}itten invariants.
\newblock In {\em Algebraic geometry: {H}irzebruch 70 ({W}arsaw, 1998)}, volume 241 of {\em Contemp. Math.}, pages 147--176. Amer. Math. Soc., Providence, RI, 1999.

\bibitem{MR1901075}
A.~B. Givental.
\newblock Gromov-{W}itten invariants and quantization of quadratic {H}amiltonians.
\newblock volume~1, pages 551--568, 645. 2001.
\newblock Dedicated to the memory of I. G.\ Petrovskii on the occasion of his 100th anniversary.

\bibitem{MR1866444}
A.~B. Givental.
\newblock Semisimple {F}robenius structures at higher genus.
\newblock {\em Internat. Math. Res. Notices}, (23):1265--1286, 2001.

\bibitem{MR2115767}
A.~B. Givental.
\newblock Symplectic geometry of {F}robenius structures.
\newblock In {\em Frobenius manifolds}, volume E36 of {\em Aspects Math.}, pages 91--112. Friedr. Vieweg, Wiesbaden, 2004.

\bibitem{MR3623280}
S.~Guo and G.~Wang.
\newblock Virasoro constraints and polynomial recursion for the linear {H}odge integrals.
\newblock {\em Lett. Math. Phys.}, 107(4):757--791, 2017.

\bibitem{MR1323382}
K.~Hori.
\newblock Constraints for topological strings in {$D\geq 1$}.
\newblock {\em Nuclear Phys. B}, 439(1-2):395--420, 1995.

\bibitem{MR1171758}
M.~Kontsevich.
\newblock Intersection theory on the moduli space of curves and the matrix {A}iry function.
\newblock {\em Comm. Math. Phys.}, 147(1):1--23, 1992.

\bibitem{MR4466672}
S.-Q. Liu, D.~Yang, Y.~Zhang, and C.~Zhou.
\newblock The loop equation for special cubic {H}odge integrals.
\newblock {\em J. Differential Geom.}, 121(2):341--368, 2022.

\bibitem{MR4631415}
S.-Q. Liu, D.~Yang, Y.~Zhang, and J.~Zhou.
\newblock The {V}irasoro-like algebra of a {F}robenius manifold.
\newblock {\em Int. Math. Res. Not. IMRN}, (16):13524--13561, 2023.

\bibitem{MR1815723}
X.~Liu.
\newblock Elliptic {G}romov-{W}itten invariants and {V}irasoro conjecture.
\newblock {\em Comm. Math. Phys.}, 216(3):705--728, 2001.

\bibitem{MR2234884}
X.~Liu.
\newblock Idempotents on the big phase space.
\newblock In {\em Gromov-{W}itten theory of spin curves and orbifolds}, volume 403 of {\em Contemp. Math.}, pages 43--66. Amer. Math. Soc., Providence, RI, 2006.

\bibitem{MR2306043}
X.~Liu.
\newblock Genus-2 {G}romov-{W}itten invariants for manifolds with semisimple quantum cohomology.
\newblock {\em Amer. J. Math.}, 129(2):463--498, 2007.

\bibitem{MR2775793}
X.~Liu.
\newblock On certain vanishing identities for {G}romov-{W}itten invariants.
\newblock {\em Trans. Amer. Math. Soc.}, 363(6):2939--2953, 2011.

\bibitem{MR1690740}
X.~Liu and G.~Tian.
\newblock Virasoro constraints for quantum cohomology.
\newblock {\em J. Differential Geom.}, 50(3):537--590, 1998.

\bibitem{MR717614}
D.~Mumford.
\newblock Towards an enumerative geometry of the moduli space of curves.
\newblock In {\em Arithmetic and geometry, {V}ol. {II}}, volume~36 of {\em Progr. Math.}, pages 271--328. Birkh\"auser Boston, Boston, MA, 1983.

\bibitem{MR2208418}
A.~Okounkov and R.~Pandharipande.
\newblock Virasoro constraints for target curves.
\newblock {\em Invent. Math.}, 163(1):47--108, 2006.

\bibitem{MR2917177}
C.~Teleman.
\newblock The structure of 2{D} semi-simple field theories.
\newblock {\em Invent. Math.}, 188(3):525--588, 2012.

\bibitem{MR1144529}
E.~Witten.
\newblock Two-dimensional gravity and intersection theory on moduli space.
\newblock In {\em Surveys in differential geometry ({C}ambridge, {MA}, 1990)}, pages 243--310. Lehigh Univ., Bethlehem, PA, 1991.

\bibitem{zhouHodge}
J.~Zhou.
\newblock On recursion relation for hodge integrals from the cut-and-join equations.
\newblock {\em preprint}, 2011.

\end{thebibliography}
\end{document}